\def\justempty{}
\def\techreport{}
\gdef\techreport{yes}
\def\@maketitle{%
  \newpage
  \null
  \begin{center}%
  \let \footnote \thanks
    {\LARGE\bf \@title \par}
    \vskip 2em
    {\large
      \lineskip .5em%
      \begin{tabular}[t]{c}%
        \@author
      \end{tabular}}%
    \vskip .3em
  \end{center}%
  \par}
\renewenvironment{abstract}{\begin{list}{}
			{\rightmargin\leftmargin
			\listparindent 1.5em
			\parsep 0pt plus 1pt}
			\small\item}{\end{list}}
\newtheorem{defi}{Definition}[section]
\newtheorem{theo}{Theorem}[section]
\newtheorem{prop}{Proposition}[section]
\newtheorem{lemm}{Lemma}[section]
\newtheorem{coro}{Corollary}[section]
\newtheorem{exam}{Example}[section]
\newenvironment{definition}[2]{\begin{defi} \rm \label{df-#1} \mbox{#2}
\begin{itemize} \vspace{-.7em} \item[]}{\end{itemize}\end{defi}}
\newenvironment{theorem}[2]{\begin{theo} \rm \label{thm-#1} \mbox{#2}
\begin{itemize} \vspace{-.7em} \item[]}{\end{itemize}\end{theo}}
\newenvironment{proposition}[2]{\begin{prop} \rm \label{pr-#1} \mbox{#2}
\begin{itemize} \vspace{-.7em} \item[]}{\end{itemize}\end{prop}}
\newenvironment{propositioni}[2]{\begin{prop} \rm \label{pr-#1} \mbox{#2}
\begin{enumerate} \itemsep -5pt \vspace{-1em}}{\end{enumerate}\end{prop}}
\newenvironment{lemma}[2]{\begin{lemm} \rm \label{lem-#1} \mbox{#2}
\begin{itemize} \vspace{-.7em} \item[]}{\end{itemize}\end{lemm}}
\newenvironment{lemmai}[2]{\begin{lemm} \rm \label{lem-#1} \mbox{#2}
\begin{enumerate} \itemsep -5pt \vspace{-1em}}{\end{enumerate}\end{lemm}}
\newcommand{\refdf}[1]{Definition~\ref{df-#1}}
\newcommand{\refthm}[1]{Theorem~\ref{thm-#1}}
\newcommand{\refpr}[1]{Proposition~\ref{pr-#1}}
\newcommand{\reflem}[1]{Lemma~\ref{lem-#1}}
\newcommand{\reffig}[1]{Figure~\ref{fig-#1}}
\newcommand{\refitem}[1]{(\ref{#1})}
\newenvironment{proof}{\begin{trivlist} \item[\hspace{\labelsep}\bf
    Proof\ \ ]}{\hfill$\boxempty$\end{trivlist}}
\newcommand{\IN}{\mbox{\rm I\hspace{-1.5pt}N}}        
\newcommand{\qed}{\hfill $\Box$ \par \vskip 0.25cm}   
\newfont{\fsc}{eusm10 scaled 1100}		      
\newcommand{\powerset}[1]{\mbox{\fsc P}(#1)}          
\newcommand{\plat}[1]{\raisebox{0pt}[0pt][0pt]{#1}}   
\def\implies{\Rightarrow}
\newcommand{\defitem}[1]{\emph{#1}}                   
\def\alignedcaption[#1&#2]{\mbox{\scriptsize $\mathllap{#1{}}\mathrlap{#2}$}}
\def\structuralN{\mbox{\sf N}}
\def\structuralM{\mbox{\sf M}}
\def\precond#1{{}^\bullet #1}
\def\postcond#1{{#1}^\bullet}
\def\iprecond#1{{}^\circ#1}
\def\ipostcond#1{{#1}^\circ}
\def\Production#1{\stackrel{#1}{\Longrightarrow}}
\def\production#1{\stackrel{#1}{\longrightarrow}}
\def\equivalent{\Leftrightarrow}
\def\into{\rightarrow}
\def\fpair#1{{}\!<\!\!{}#1{}\!\!>\!{}}
\def\failureset{\mathscr{F}}
\def\tb{\tau^\leftarrow}
\def\tbai{\tau^\Leftarrow}
\def\mathrlap{\mathpalette\mathrlapinternal}
\def\mathrlapinternal#1#2{%
  \rlap{$\mathsurround=0pt#1{#2}$}}
\def\mathllap{\mathpalette\mathllapinternal}
\def\mathllapinternal#1#2{%
  \llap{$\mathsurround=0pt#1{#2}$}}
\def\trail#1{\text{~#1}}
\newcommand{\inp}{\mathbin\in}
\def\goesto{\@transition\rightarrowfill}
\def\Goesto{\@transition\Rightarrowfill}
\def\ngoesto{\@transition\nrightarrowfill}
\def\nGoesto{\@transition\nRightarrowfill}
\def\@transition#1{\@ifnextchar[{\@@transition{#1}}{\@@transition{#1}[]}}
\newbox\@transbox
\newbox\@arrowbox
\def\rightarrowfill{$\m@th\mathord-\mkern-6mu%
  \cleaders\hbox{$\mkern-2mu\mathord-\mkern-2mu$}\hfill
  \mkern-6mu\mathord\rightarrow$}
\def\Rightarrowfill{$\m@th\mathord=\mkern-6mu%
  \cleaders\hbox{$\mkern-2mu\mathord=\mkern-2mu$}\hfill
  \mkern-6mu\mathord\Rightarrow$}
\def\@@transition#1[#2]%
\wd\@transbox{#1}
\@transbox\hbox{$\mathop{\box\@arrowbox}\limits^{\box\@transbox}$}
\let\origexists\exists
\let\orignexists\nexists
\let\origforall\forall
\def\exists#1.{\origexists#1.\onespace}
\def\nexists#1.{\orignexists#1.\onespace}
\def\forall#1.{\origforall#1.\onespace}
\def\onespace#1{\let\argument=#1\ifx\onespace#1\else~\fi\argument}
\def\FSI{\text{\it FI}}
\def\FSA{\text{\it FA}}
\def\SI{\text{\it SI}}
\def\SA{\text{\it SA}}
\def\AI{\text{\it AI}}
\def\AA{\text{\it AA}}
\def\FC{{\it FC}}
\def\EFC{{\it EFC}}
\def\BFC{{\it BFC}}
\def\SPL{{\it SPL}}
\def\ESPL{{\it ESPL}}
\def\fsidistance{d}
\def\fsivalidmarking{\alpha}
\def\sidistance{e}
\def\aidistance{f}
\def\aivalidmarking{\gamma}
\let\origmin\min
\def\min{\mathord{\origmin}}
\let\origmax\max
\def\max{\mathord{\origmax}}
\newcounter{netimage}
\def\p#1:#2;{\cnode #1{0.3}{\thenetimage-#2}}
\def\P#1:#2;{\p #1:#2;\pscircle*#1{0.1}}
\def\q#1:#2:#3;{\p #1:#2;\rput#1{\rput(0.5,0){#3}}}
\def\Q#1:#2:#3;{\P #1:#2;\rput#1{\rput(0.5,0){#3}}}
\def\t#1:#2:#3;{\rput#1{\rnode{\thenetimage-#2}{\psframebox{%
  \vbox to 0.62cm{\vfil\hbox to 0.62cm{\hfil\Large\it \vphantom{tq}#3\hfil}\vspace{-2pt}\vfil}}}}}
\def\a#1->#2;{\ncline{->}{\thenetimage-#1}{\thenetimage-#2}}
\def\A#1->#2;{\ncarc{->}{\thenetimage-#1}{\thenetimage-#2}}
\long\def\petrinet(#1)#2\end{\psscalebox{0.7}{\pspicture(#1)\stepcounter{netimage}#2\endpspicture}\end}
\def\lastname{van Glabbeek, Goltz and Schicke}
\def\titleheader{Symmetric and Asymmetric Asynchronous Interaction}
\title{Symmetric and Asymmetric Asynchronous Interaction}
\author{Rob van Glabbeek\\
 \footnotesize NICTA, Sydney, Australia\\[-3pt]
 \footnotesize University of New South Wales, Sydney, Australia\\
 \footnotesize \tt rvg@cs.stanford.edu
\and\hspace{-1em}
    Ursula Goltz
\hspace{4em}
  Jens-Wolfhard Schicke\thanks{This paper was partially
  written during a four month stay of J.-W. Schicke at NICTA,
  during which he was supported by DAAD (Deutscher Akademischer Austauschdienst) and NICTA.
}\\
 \footnotesize Institute for Programming and Reactive Systems\\[-3pt]
 \footnotesize TU Braunschweig, Braunschweig, Germany\\
 \footnotesize \tt 
\hspace{-4.5em}goltz@ips.cs.tu-bs.de
\hspace{3.5em} drahflow@gmx.de}
\gdef\nictaref{}
\gdef\nicta{}
\begin{document}

\begin{ICE2008-TR}
\maketitle
\thispagestyle{empty}
\begin{abstract} 
  We investigate classes of systems based on different interaction patterns
  with the aim of achieving distributability. As our system model we use Petri
  nets. In Petri nets, an inherent concept of simultaneity is built in, since
  when a transition has more than one preplace, it can be crucial that tokens
  are removed instantaneously.  When modelling a system which is intended to be
  implemented in a distributed way by a Petri net, this built-in concept of
  synchronous interaction may be problematic. To investigate the problem we
  assume that removing tokens from places can no longer be considered as
  instantaneous. We model this by inserting silent (unobservable) transitions
  between transitions and their preplaces. We investigate three different
  patterns for modelling this type of asynchronous interaction. \textit{Full
  asynchrony} assumes that every removal of a token from a place is
  time consuming. For \textit{symmetric asynchrony}, tokens are only removed
  slowly in case of backward branched transitions, hence where the concept of
  simultaneous removal actually occurs. Finally we consider a more intricate
  pattern by allowing to remove tokens from preplaces of backward branched
  transitions asynchronously in sequence (\textit{asymmetric asynchrony}). 

  We investigate the effect of these different transformations of instantaneous
  interaction into asynchronous interaction patterns by comparing the
  behaviours of nets before and after insertion of the silent transitions. We
  exhibit for which classes of Petri nets we obtain equivalent behaviour with
  respect to failures equivalence. 
  
  It turns out that the resulting hierarchy of Petri net classes can be
  described by semi-structural properties. In case of full
  asynchrony and symmetric asynchrony, we obtain precise characterisations;
  for asymmetric asynchrony we obtain lower and upper bounds. 
  
  We briefly comment on possible applications of our results to Message
  Sequence Charts.
\end{abstract}
\end{ICE2008-TR}

\begin{ICE2008}
\begin{frontmatter}
  \title{Symmetric and Asymmetric\\ Asynchronous Interaction}
  \author{Rob van Glabbeek\thanksref{robemail}}
  \address{NICTA, Sydney, Australia}
  \address{University of New South Wales, Sydney, Australia}
  \author{Ursula Goltz\thanksref{ullaemail}}
  \author{Jens-Wolfhard Schicke\thanksref{jensemail}}
  \address{Institute for Programming and Reactive Systems\\TU Braunschweig\\
  Braunschweig, Germany\nictaref}
  \thanks[robemail]{Email: \href{mailto:rvg@cs.stanford.edu} {\texttt{\normalshape rvg@cs.stanford.edu}}}
  \thanks[ullaemail]{Email: \href{mailto:goltz@ips.cs.tu-bs.de} {\texttt{\normalshape goltz@ips.cs.tu-bs.de}}}
  \thanks[jensemail]{Email: \href{mailto:drahflow@gmx.de} {\texttt{\normalshape drahflow@gmx.de}}}
  \nicta
\begin{abstract} 
  We investigate classes of systems based on different interaction patterns
  with the aim of achieving distributability. As our system model we use Petri
  nets. In Petri nets, an inherent concept of simultaneity is built in, since
  when a transition has more than one preplace, it can be crucial that tokens
  are removed instantaneously.  When modelling a system which is intended to be
  implemented in a distributed way by a Petri net, this built-in concept of
  synchronous interaction may be problematic. To investigate the problem we
  assume that removing tokens from places can no longer be considered as
  instantaneous. We model this by inserting silent (unobservable) transitions
  between transitions and their preplaces. We investigate three different
  patterns for modelling this type of asynchronous interaction. \textit{Full
  asynchrony} assumes that every removal of a token from a place is
  time consuming. For \textit{symmetric asynchrony}, tokens are only removed
  slowly in case of backward branched transitions, hence where the concept of
  simultaneous removal actually occurs. Finally we consider a more intricate
  pattern by allowing to remove tokens from preplaces of backward branched
  transitions asynchronously in sequence (\textit{asymmetric asynchrony}). 

  We investigate the effect of these different transformations of instantaneous
  interaction into asynchronous interaction patterns by comparing the
  behaviours of nets before and after insertion of the silent transitions. We
  exhibit for which classes of Petri nets we obtain equivalent behaviour with
  respect to failures equivalence. 
  
  It turns out that the resulting hierarchy of Petri net classes can be
  described by semi-structural properties. In case of full
  asynchrony and symmetric asynchrony, we obtain precise characterisations;
  for asymmetric asynchrony we obtain lower and upper bounds. 
  
  We briefly comment on possible applications of our results to Message
  Sequence Charts.
\end{abstract}
\begin{keyword}
$\!$reactive systems, Petri nets, distributed systems, asynchronous interaction, equivalence notions
\end{keyword}
\end{frontmatter}
\setcounter{footnote}{0}
\vspace{-1em}
\end{ICE2008}

\section{Introduction}\label{sec-intro}
  In this paper, we investigate classes of systems based on different
  asynchronous interaction patterns with the aim of achieving
  distributability, i.e.\ the possibility to execute a system on
  spatially distributed locations, which do not share a common clock.
  As our system model we use Petri nets. The main
  reason for this choice is the detailed way in which a Petri net
  represents a concurrent system, including the interaction between
  the components it may consist of. In an interleaving based model of
  concurrency such as labelled transition systems modulo bisimulation
  semantics, a system representation as such cannot be said to display
  synchronous or asynchronous interaction; at best these are
  properties of composition operators, or communication primitives,
  defined in terms of such a model. A Petri net on the other hand
  displays enough detail of a concurrent system to make the presence
  of synchronous communication discernible. This makes it possible to
  study asynchronous communication without digressing to the realm of
  composition operators.

  In a Petri net, a transition interacts with
  its preplaces by consuming tokens. An inherent concept of simultaneity is
  built in, since when a transition has more than one preplace, it can be
  crucial that tokens are removed instantaneously, depending on the surrounding
  structure or---more elaborately---the behaviour of the net. 
  
  When modelling a distributed system by a Petri net, this built-in concept of
  synchronous interaction may become problematic. Assume a transition $t$ on a
  location $l$ models an activity involving another location $l'$, for example by
  receiving a message. This can be modelled by a preplace  $s$ of $t$ such that $s$
  and $t$ are situated in different locations. We assume that taking a token can
  in this situation not be considered as instantaneous; rather the interaction
  between $s$ and $t$ takes time. We model this effect by inserting silent
  (unobservable) transitions between transitions and their
  preplaces. We call the effect of such a transformation of a net $N$
  an \textit{asynchronous implementation} of $N$.
  
\begin{figure}
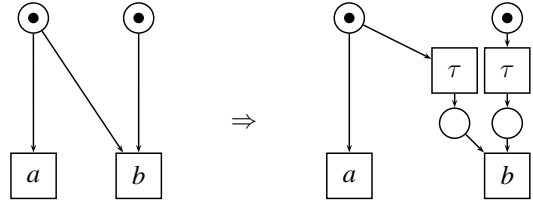

  \begin{center}
    \begin{petrinet}(13,4.5)
      \P (2,4):p1;
      \P (4,4):p2;
      \t (2,1):t1:a;
      \t (4,1):t2:b;
      \a p1->t1;
      \a p1->t2;
      \a p2->t2;
      \a p2->n1;

      \rput(6,2){\Large $\Rightarrow$}

      \P (8,4):p1b;
      \P (11,4):p2b;
      \t (11,3):p2btt:$\tau$;
      \p (11,2):p2btp;
      \t (10,3):p1bt2t:$\tau$;
      \p (10,2):p1bt2p;
      \t (8,1):t1b:a;
      \t (11,1):t2b:b;
      \a p1b->t1b;
      \a p2b->p2btt; \a p2btt->p2btp; \a p2btp->t2b;
      \a p1b->p1bt2t; \a p1bt2t->p1bt2p; \a p1bt2p->t2b;
      \a p2b->n1b;
    \end{petrinet}
  \end{center}
\vspace{-1em}
  \caption{Transformation to the symmetrically asynchronous implementation}
  \label{fig-sym-transform}
\end{figure}

  An example of such an implementation is shown in \reffig{sym-transform}. Note
  that $a$ can be disabled in the implementation before any visible behaviour
  has taken place. This difference will cause non-equivalence between the
  original and the implementation under branching time equivalences.

  Our asynchronous implementation allows a token to start its journey
  from a place to a transition even when not all preplaces of the
  transition contain a token. This design decision is motivated by the
  observation that it is fundamentally impossible to check in an
  asynchronous way whether all preplaces of a transition are marked---it
  could be that a token moves back and forth between two such places.

  We investigate different interaction patterns for the asynchronous
  implementation of nets. The simplest pattern (\textit{full asynchrony})
  assumes that every removal of a token from a place is time consuming. For the next pattern
  (\textit{symmetric asynchrony}), tokens are only removed slowly when they are consumed
  by a backward branched transition, hence where the concept of simultaneous
  removal actually occurs. Finally we consider a more intricate pattern by
  allowing to remove tokens from preplaces of backward branched transitions
  asynchronously in sequence (\textit{asymmetric asynchrony}).

Given a choice of interaction pattern, we call a net $N$
\defitem{asynchronous} when there is no essential behavioural
difference between $N$ and its asynchronous implementation $I(N)$.
In order to formally define this concept, we wish to compare the
behaviours of $N$ and $I(N)$ using a semantic equivalence that
fully preserves branching time, causality and their interplay, whilst
of course abstracting from silent transitions.  By choosing the most
discriminating equivalence possible, we obtain the smallest possible class of
asynchronous nets, thus excluding nets that might be classified as
asynchronous merely because a less discriminating equivalence would
fail to see the differences between such a net and its asynchronous
implementation.
To simplify the exposition, here we merely compare the behaviours of
$N$ and $I(N)$ up to \defitem{failures equivalence} \cite{BHR84}.
This interleaving equivalence abstracts from causality and respects
branching time only to some degree. However, we conjecture that our
results are in fact largely independent of this choice and that more
discriminating equivalences, such as the history preserving
ST-bisimulation of \cite{Vo93}, would yield the same classes of
asynchronous nets. Using a linear time equivalence would give rise to
larger classes; this possibility is investigated in
\cite{schicke08synchrony}.

Thus we investigate the effect of our three transformations of
instantaneous interaction into asynchronous interaction patterns by
comparing the behaviours of nets before and after insertion of the
silent transitions up to failures equivalence.  We show that in the
case of full asynchrony, we obtain equivalent behaviour
exactly for conflict-free Petri nets. Further we establish that
symmetric asynchrony is a valid concept for $\structuralN$-free Petri
nets and asymmetric asynchrony for $\structuralM$-free Petri nets,
where $\structuralN$ and $\structuralM$ stand for certain structural
properties; the reachability of such structures is crucial. For
symmetric asynchrony we obtain a precise characterisation of the class
of nets which is asynchronously implementable. For asymmetric
asynchrony we obtain lower and upper bounds.

In the concluding section, we discuss the use of our results for
Message Sequence Charts, as an example how they may be useful for
other models than Petri nets.  When interpreting basic Message
Sequence Chart as Petri nets, the resulting Petri nets lie within the
class of conflict-free and hence $\structuralN$-free Petri nets.  The
more expressive classes give insights in the effect of choices in
non-basic MSCs.

\begin{ICE2008}
This is an extended abstract; for sake of brevity most proofs are omitted.
They are contained in the full version of this paper \cite{glabbeek08symmasymm},
as well as in \cite{schicke08synchrony}.
\end{ICE2008}

The paper is structured as follows. In Section \ref{sec-notions} we establish
the necessary basic notions. In Section \ref{sec-fsa} we introduce the fully
asynchronous transformation and give a semi-structural
characterisation of the resulting net class. In Section \ref{sec-sa} we repeat
those steps for the symmetrically asynchronous transformation. Furthermore we
describe how the resulting net class relates to the classes of
free-choice and extended free choice nets. In Section \ref{sec-aa} we introduce
the asymmetrically asynchronous transformation. We give semi-structural upper
and lower bounds for the resulting net class and relate it to simple and
extended simple nets. In the conclusion in Section~\ref{sec-conclusion} we
compare our findings to similar results in the literature.

\begin{ICE2008-TR}
An extended abstract of this paper will be presented at the first {\sl
Interaction and Concurrency Experience} (ICE'08) on {\sl Synchronous
and Asynchronous Interactions in Concurrent Distributed Systems}, and will
appear in {\sl Electronic Notes in Theoretical Computer Science}, Elsevier.
\end{ICE2008-TR}

\section{Basic Notions}\label{sec-notions}

We consider here 1-safe net systems, i.e.\ places never carry more than
one token, but a transition can fire even if pre- and postset intersect.
To represent unobservable behaviour, which we use to model asynchrony,
the set of transitions is partitioned into observable and silent (unobservable) ones.

\begin{definition}{nst}{}
  A \defitem{net with silent transitions} is a tuple
  $N = (S, O, U, F, M_0)$ where
  \begin{itemize}
    \item $S$ is a set (of \defitem{places}),
    \item $O$ is a set (of \defitem{observable transitions}),
    \item $U$ is a set (of \defitem{silent transitions}),
    \item $F \subseteq S \times T \cup T \times S$
      (the \defitem{flow relation}) with $T := O \cup U$
      (\defitem{transitions}) and
    \item $M_0 \subseteq S$ (the \defitem{initial marking}).
  \end{itemize}
\end{definition}

Petri nets are depicted by drawing the places as circles, the
transitions as boxes, and the flow relation as arrows (\defitem{arcs})
between them.  When a Petri net represents a concurrent system, a
global state of such a system is given as a \defitem{marking}, a set
of places, the initial state being $M_0$.  A marking is depicted by
placing a dot (\defitem{token}) in each of its places.  The dynamic
behaviour of the represented system is defined by describing the
possible moves between markings. A marking $M$ may evolve into a
marking $M'$ when a nonempty set of transitions $G$ \defitem{fires}. In that
case, for each arc $(s,t) \in F$ leading to a transition $t$ in $G$, a
token moves along that arc from $s$ to $t$.  Naturally, this can
happen only if all these tokens are available in $M$ in the first
place. These tokens are consumed by the firing, but also new tokens
are created, namely one for every outgoing arc of a transition in
$G$. These end up in the places at the end of those arcs.  A problem
occurs when as a result of firing $G$ multiple tokens end up in the
same place. In that case $M'$ would not be a marking as
defined above. In this paper we restrict attention to nets in which
this never happens. Such nets are called \defitem{1-safe}.
Unfortunately, in order to formally define this class of nets, we
first need to correctly define the firing rule without assuming
1-safety. Below we do this by forbidding the firing of sets of
transitions when this might put multiple tokens in the same place.

\begin{definition}{steps}{
  Let $N = (S, O, U, F, M_0)$ be a net. Let $M_1, M_2 \subseteq S$.
  }
We denote the preset and postset of a net element $x$ by
$\precond{x} := \{y \mid (y, x) \in F\}$
and
$\postcond{x} := \{y \mid (x, y) \in F\}$
respectively.
A nonempty set of transitions $G \subseteq (O \cup U),
  G \not= \varnothing$, is called a \defitem{step from $M_1$ to $M_2$},
  notation $M_1 ~[G\rangle_N~ M_2$, iff
  \begin{itemize}
    \item all transitions contained in $G$ are \defitem{enabled}, that is
      \begin{equation*}
        \forall t\in G. \precond{t} \subseteq M_1 \wedge 
          (M_1 \setminus \precond{t}) \cap \postcond{t} =
          \varnothing \trail{,}
      \end{equation*}
    \item all transitions of $G$ are \defitem{independent}, that is \defitem{not conflicting}:
      \begin{align*}
        &\forall t,u \in G, t\not= u. \precond{t} \cap \precond{u} = \varnothing
        \wedge \postcond{t} \cap \postcond{u} = \varnothing \trail{,}
      \end{align*}
    \item in $M_2$ all tokens have been removed from the \defitem{preplaces}
      of $G$ and new tokens have been inserted at the
      \defitem{postplaces} of $G$:
      \begin{equation*}
        M_2 = \left(M_1 \setminus \bigcup_{t\in G} \precond{t}\right) \cup
        \bigcup_{t\in G}\postcond{t} \trail{.}
      \end{equation*}
  \end{itemize}
\end{definition}

To simplify statements about possible behaviours of nets, we use some abbreviations.

\begin{definition}{steprel}{Let $N = (S, O, U, F, M_0)$ be a net with silent transitions.}
  \begin{itemize}
    \item $\mathord{\production{}_N} \subseteq \powerset{S} \times \powerset{O} \times
      \powerset{S}$ is defined by $M_1 \production{G}_N M_2 \equivalent
      G \mathbin{\subseteq} O \wedge M_1[G\rangle_N\, M_2$
    \item $\mathord{\production{\tau}_N} \subseteq \powerset{S} \times \powerset{S}$
      is defined by $M_1 \production{\tau}_N M_2 \equivalent
      \exists t \inp U. M_1 \,[\{t\}\rangle_N~ M_2$
    \item $\mathord{\Production{}_N} \subseteq \powerset{S} \times O^* \times
      \powerset{S}$ is defined by
      $M_1 \Goesto[\,t_1 t_2 \cdots t_n~]_N M_2 \equivalent\\[3pt]
      \hphantom{M_1 \Production{\sigma}_N M_2 \equivalent }
      M_1
      \production{\tau}^*_N \production{\{t_1\}}_N
      \production{\tau}^*_N \production{\{t_2\}}_N
      \production{\tau}^*_N \cdots
      \production{\tau}^*_N \production{\{t_n\}}_N
      \production{\tau}^*_N
      M_2$\\[3pt]
      where $\production{\tau}^*_N$ denotes the reflexive and
      transitive closure of $\production{\tau}_N$.
  \end{itemize}

  We write $M_1 \production{G}_N$ for $\exists M_2. M_1
  \production{G}_N M_2$, $M_1 \arrownot\production{G}_N$ for $\nexists M_2. M_1
  \production{G}_N M_2$ and similar for the other two relations. 

  A marking $M_1$ is said to be \defitem{reachable} iff there is a
  $\sigma \in O^*$ such that $M_0 \Production{\sigma} M_1$. The set of all
  reachable markings is denoted by $[M_0\rangle_N$.
\end{definition}

We omit the subscript $N$ if clear from context.

As said before, here we only want to consider 1-safe nets. Formally,
we restrict ourselves to \defitem{contact-free nets} where in every
reachable marking $M_1 \in [M_0\rangle$ for all $t \in O \cup U$ with
  $\precond{t} \subseteq M_1$\begin{ICE2008}\vspace{-1ex}\end{ICE2008}
\begin{equation*}
  (M_1 \setminus \precond{t}) \cap \postcond{t} = \varnothing \trail{.}
\begin{ICE2008}\vspace{1ex}\end{ICE2008}\end{equation*}
For such nets, in \refdf{steps} we can just as well consider a
transition $t$ to be enabled in $M$ iff $\precond{t}\subseteq M$, and
two transitions to be independent when $\precond{t} \cap \precond{u} =
\varnothing$.
In this paper we furthermore restrict attention to nets for which
$\precond{t}\neq\emptyset$, and $\precond{t}$ and $\postcond{t}$ are
finite for all $t\inp O\cup U$.  We also require the initial marking $M_0$ to
be finite.\begin{ICE2008}\linebreak\end{ICE2008}
A consequence of these restrictions is that all reachable
markings are finite, and it can never happen that infinitely many
independent transitions are enabled. Henceforth, we employ the name
\defitem{$\tau$-nets} for nets with silent transitions obeying the
above restrictions, and \defitem{plain nets} for $\tau$-nets without
silent transitions, i.e.\ with $U=\emptyset$.

\begin{ICE2008-TR}
Plain nets have the nice property of being deterministic, i.e.\ the marking
obtained after firing a sequence of transitions is uniquely determined
by the sequence of transitions fired.

\begin{lemma}{determ}{
  Let $N = (S, O, \varnothing, F, M_0)$ be a plain net, $\sigma \inp O^*$ and $M \subseteq S$.
  }
  If $M \Production{\sigma} M_1 \wedge M \Production{\sigma} M_2$ then
  $M_1 = M_2$.
\end{lemma}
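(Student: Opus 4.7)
The plan is to prove the statement by induction on the length of $\sigma$, exploiting the fact that for plain nets the single-step firing relation $\production{\{t\}}$ is functional in $M_1$ and $t$.

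First I would observe that since $N$ is plain, we have $U = \varnothing$, so by \refdf{steprel} the relation $\production{\tau}_N$ is empty and hence $\production{\tau}^*_N$ reduces to the identity on subsets of $S$. Consequently, $M \Goesto[\,t_1 t_2 \cdots t_n\,] M'$ holds iff there exists a sequence of markings $M = M^{(0)}, M^{(1)}, \ldots, M^{(n)} = M'$ with $M^{(i-1)} \production{\{t_i\}} M^{(i)}$ for each $1 \leq i \leq n$. Next I would note that for a singleton step $M \production{\{t\}} M'$, \refdf{steps} forces $M' = (M \setminus \precond{t}) \cup \postcond{t}$, so $M'$ is uniquely determined by $M$ and $t$.

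With these two observations in place, the induction on $|\sigma|$ is routine. In the base case $\sigma = \varepsilon$, both $M_1$ and $M_2$ coincide with $M$. For the induction step, write $\sigma = \sigma' t$ and pick intermediate markings $M_1', M_2'$ with $M \Production{\sigma'} M_1' \production{\{t\}} M_1$ and $M \Production{\sigma'} M_2' \production{\{t\}} M_2$. The induction hypothesis yields $M_1' = M_2'$, and the functional nature of the single-step firing relation then gives $M_1 = (M_1' \setminus \precond{t}) \cup \postcond{t} = (M_2' \setminus \precond{t}) \cup \postcond{t} = M_2$.

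There is no real obstacle here; the only thing to take care of is to make the collapse of $\production{\tau}^*_N$ to the identity explicit before extracting the uniquely determined intermediate markings, since without this step the decomposition of $\Production{\sigma}_N$ into a sequence of singleton firings would not be well-defined.
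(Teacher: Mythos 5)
Your proposal is correct and follows essentially the same route as the paper's own proof: both reduce $\Production{\{t\}}$ to the single-step relation $\production{\{t\}}$ (using that $U=\varnothing$ makes $\production{\tau}^*$ trivial), observe that \refdf{steps} forces $M' = (M \setminus \precond{t}) \cup \postcond{t}$, and conclude by induction on the length of $\sigma$. Your version is merely a bit more explicit about collapsing $\production{\tau}^*$ to the identity before decomposing the trace.
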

\begin{proof}
  Let $t \inp O$, $\sigma \inp O^*$ and $M' \subseteq S$.
\\
  Then
  $M \Production{\{t\}} M' \equivalent M \production{\{t\}} M'$ and
  $M \production{\{t\}} M'$ implies $M' = (M \setminus \precond{t}) \cup
  \postcond{t}$.
\\
  Hence $M \Production{\{t\}} M_1 \wedge M \Production{\{t\}} M_2$ implies
  $M_1 = M_2$.
\\
  The result follows for a trace $\sigma$ by induction on the length of $\sigma$.
\end{proof}
\end{ICE2008-TR}

Our nets with silent transitions can be regarded as special
\defitem{labelled nets}, defined as in
Definition~\ref{df-nst}, but without the split of $T$ into $O$ and
$U$, and instead equipped with a \defitem{labelling function} $\ell: T
\rightarrow {\it Act} \cup \{\tau\}$, where {\it Act} is a set of
\defitem{visible actions} and $\tau\not\in{\it Act}$ an invisible
one. Nets with silent transitions correspond to labelled nets in which
no two different transitions are labelled by the same visible actions,
which can be formalised by taking $\ell(t)=t$ for $t\in O$ and
$\ell(t)=\tau$ for $t\in U$.

To describe which nets are ``asynchronous'', we will compare their
behaviour to that of their asynchronous implementations using a
suitable equivalence relation. As explained in the introduction, we
consider here branching time semantics.  Technically, we use failures
equivalence, as defined below.

\begin{definition}{failurepair}{
  Let $N = (S, O, U, F, M_0)$ be a $\tau$-net, $\sigma \in O^*$ and
  $X \subseteq O$.
  }
  $\fpair{\sigma,X}$ is a \defitem{failure pair} of $N$ iff\vspace{-1em}
  $$
    \exists M_1. M_0 \Production{\sigma} M_1 \wedge M_1
    \arrownot\production{\tau} \wedge \forall t \in X. M_1
    \arrownot\production{\{t\}} \trail{.}
  \vspace{-7pt}$$
  We define $\failureset(N) := \{\fpair{\sigma,X} \mid
  \fpair{\sigma,X} \text{ is a failure pair of } N\}$.

  Two $\tau$-nets $N$ and $N'$ are \defitem{failures equivalent}, $N
  \approx_{\failureset} N'$, iff $\failureset(N) = \failureset(N')$.
\end{definition}

A $\tau$-net $N = (S, O, U, F, M_0)$ is called
\defitem{divergence free} iff there are no infinite chains of markings
$M_1 \production{\tau} M_2 \production{\tau} \cdots$ with $M_1 \in
[M_0\rangle$.

\section{Full Asynchrony}\label{sec-fsa}

As explained in the introduction, we will examine in this paper
different possible assumptions of how asynchronous interaction between
transitions and their preplaces takes place. In this section, we
start with the simple and intuitive assumption that the removal of any
token by a transition takes time. This is implemented by inserting
silent transitions between visible ones and their preplaces. 

\begin{definition}{fsi}{
  Let $N = (S, O, \varnothing, F, M_0)$ be a plain net.
  }
  The \defitem{fully asynchronous implementation} of $N$ is
  defined as the net\\
  $\FSI(N) := (S \cup S^\tau, O, U', F', M_0)$ with\vspace{-1ex}
  \begin{align*}
    S^\tau :=~& \{s_t \mid t \in O, s \in \precond{t}\} \trail{,}\\
    U' :=~& \{t_s \mid t \in O, s \in \precond{t}\} \trail{and}\\
    F' :=~& (F \cap (O \times S))
      \cup \{(s, t_s), (t_s, s_t), (s_t, t) \mid t \in O, s \in \precond{t}\}
      \trail{.}
  \end{align*}
\end{definition}

\begin{ICE2008-TR}

For better readability we will use the abbreviations $\iprecond{x} := \{y \mid (y,
x) \in F'\}$ and $\ipostcond{x} := \{y \mid (x, y) \in F'\}$ instead of
$\precond{x}$ or $\postcond{x}$ when making assertions about the flow relation
of an implementation.

The following lemma shows how the fully asynchronous implementation of
a plain net $N$ simulates the behaviour of $N$.

\begin{lemmai}{fsisim}{
  Let $N = (S, O, \varnothing, F, M_0)$ be a plain net, $G\subseteq
  O$, $\sigma\in O^*$ and $M_1, M_2 \subseteq S$.
  }
  \item If $M_1 \production{G}_{N} M_2$ then
  $M_1 \production{\tau}^*_{\FSI(N)}\production{G}_{\FSI(N)} M_2$.
  \item If $M_1 \Production{\sigma}_{N} M_2$ then $M_1 \Production{\sigma}_{\FSI(N)} M_2$.
\vspace{-1ex}
\end{lemmai}

\begin{proof}
  Assume $M_1 \production{G}_N M_2$.  Then, by construction of $\FSI(N)$,
  $$M_1 ~[\{t_s \mid t \inp G,~ s \inp \precond{t}\}\rangle_{\FSI(N)}~
  [\{t \mid t \inp G\}\rangle_{\FSI(N)}~ M_2.$$
  The first part of that execution can be split into a sequence of singletons.
\\
  The second statement follows by a straightforward induction on the
  length of $\sigma$.
\end{proof}

This lemma uses the fact that any marking of $N$ is also a marking on
$\FSI(N)$. The reverse does not hold, so in order to describe the degree
to which the behaviour of $\FSI(N)$ is simulated by $N$ we need to
explicitly relate markings of $\FSI(N)$ to those of $N$.
This is in fact not so hard, as any reachable marking of $\FSI(N)$ can be
obtained from a reachable marking of $N$ by moving some tokens into
the newly introduced buffering places $s_t$. To establish this
formally,
we define a function which transforms implementation markings into the
related original markings, by shifting these tokens back.

\begin{definition}{tauback}{
  Let $N = (S, O, \varnothing, F, M_0)$ be a plain net
  and let $\FSI(N) = (S \cup S^\tau, O, U', F', M_0)$.
  }
  $\tb: S \cup S^\tau \into S$ is the function defined by
  \begin{equation*}
    \tb(p) := \begin{cases}
      s & \text{ iff } p = s_t \text{ with } s_t \in S^\tau, s \in S, t \in O\\
      p & \text{ otherwise } (p \in S)
    \end{cases}
  \end{equation*}
\end{definition}

Where necessary we extend functions to sets elementwise.
So for any $M \subseteq S \cup S^\tau$ we have $\tb(M)=\{\tb(s)\mid
s\in M\} = (M\cap S)\cup \{s \mid s_t\in M\}$. In particular,
$\tb(M)=M$ when $M\subseteq S$.

We now introduce a predicate $\alpha$ on the markings of $\FSI(N)$
that holds for a marking iff it can be obtained from a reachable
marking of $N$ (which is also a marking of $\FSI(N)$) by firing some
unobservable transitions. Each of these unobservable transitions moves
a token from a place $s$ into a buffering place $s_t$.\linebreak[2]
Later, we will show that $\alpha$ exactly characterises the reachable
markings of $\FSI(N)$.
Furthermore, as every token can be moved only once, we can also give
an upper bound on how many such movements can still take place.

\begin{definition}{fsivalidmarking}{
  Let $N = (S, O, \varnothing, F, M_0)$ be a plain net and $\FSI(N) = (S \cup S^\tau,
  O, U', F', M_0)$.
  }
  The predicate $\fsivalidmarking \subseteq \powerset{S \cup S^\tau}$ is given by
  $$\fsivalidmarking(M) ~~:\equivalent~~
  \tb(M) \inp [M_0\rangle_N \wedge \forall p, q \inp M. \tb(p) = \tb(q) \implies p = q.$$
  The function $\fsidistance: \powerset{S \cup S^\tau} \into \IN \cup \{\infty\}$ is given by
  $\fsidistance(M) := |M \cap \{s \mid s \in S,~ \postcond{s} \ne \varnothing\}|$, where
  we choose not to distinguish between different degrees of infinity.
\end{definition}

Note that $\fsivalidmarking(M)$ implies $|M|=|\tb(M)|$, and reachable
markings of $N$ are always finite (thanks to our definition of a plain
net).  Hence $\fsivalidmarking(M)$ implies $d(M)\in\IN$.  The
following lemma confirms that our informal description of $\fsivalidmarking$
matches its formal definition.

\begin{lemma}{fsivalidmarking}{
  Let $N$ and $\FSI(N)$ be as above and $M \subseteq S\cup S^\tau$,
  with $M$ finite.
  }
  Then $\forall p, q \inp M. \tb(p) = \tb(q) \implies p = q$ iff
  $\tb(M)\production{\tau}_{\FSI(N)}^* M$.
\end{lemma}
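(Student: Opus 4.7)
The plan is to prove both directions by directly analysing how silent transitions $t_s$ (with $\iprecond{t_s} = \{s\}$ and $\ipostcond{t_s} = \{s_t\}$) relate tokens in $S$ to tokens in $S^\tau$. Throughout, I will use the decomposition $M = B \cup A$ where $B := M \cap S$ and $A := M \cap S^\tau$; from the definition of $\tb$ one sees $\tb(M) = B \cup \{s \mid s_t \in A\}$.

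For the forward direction, I would first observe that the hypothesis $\tb(p) = \tb(q) \implies p = q$ for $p,q \in M$ tells us three things: distinct elements $s_t, s'_{t'} \in A$ have distinct $s$-components, no element of $A$ has its $s$-component in $B$, and $B$ trivially has distinct $\tb$-values. Hence $\tb(M)$ is the disjoint union $B \cup \{s \mid s_t \in A\}$ and $|\tb(M)| = |M|$. Enumerate $A$ as $s^1_{t^1}, \ldots, s^n_{t^n}$; I would then fire the silent transitions $t^1_{s^1}, \ldots, t^n_{s^n}$ in this order. Checking enabledness at step $i$ amounts to verifying (a) that $s^i$ is still present (it lies in $\tb(M)$ and the earlier firings only consumed $s^1, \ldots, s^{i-1}$, which differ from $s^i$ by injectivity), and (b) that $s^i_{t^i}$ is not in the current marking (it lies in $S^\tau$ so cannot be in $B$ or any $s^j$, and it differs from the previously produced $s^j_{t^j}$ since they are distinct elements of $A$). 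After all $n$ firings one recovers $B \cup A = M$.

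For the backward direction, I would proceed by induction on the length $k$ of the $\tau$-sequence, strengthening the statement to: \emph{if $M' \subseteq S$ and $M' \production{\tau}^k_{\FSI(N)} M$, then $\tb(M) = M'$ and $\tb$ is injective on $M$.} The base case $k=0$ is immediate since $\tb$ restricted to $S$ is the identity. For the inductive step, suppose the claim holds for some intermediate $M$, and $M \production{\tau}_{\FSI(N)} M''$ via a silent transition $t_s$, so $M'' = (M \setminus \{s\}) \cup \{s_t\}$. The equality $\tb(M'') = \tb(M) = M'$ is immediate from $\tb(s_t) = s$. Injectivity of $\tb$ on $M''$ follows by considering pairs: within $M \setminus \{s\}$ it is inherited from $M$; a collision involving $s_t$ would force some $q \in M \setminus \{s\}$ with $\tb(q) = s$, meaning $q = s$ or $q = s_t$, both excluded (the former by $q \in M \setminus\{s\}$, the latter by the contact-freeness condition $(M \setminus \{s\}) \cap \{s_t\} = \varnothing$ that was required for $t_s$ to fire).

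The only mildly delicate step is the enabledness check in the forward direction: one must combine the injectivity hypothesis with the structural fact that elements of $S$ and $S^\tau$ are disjoint to rule out contact. Everything else is bookkeeping, and no result beyond the firing rule (\refdf{steps}) and the construction of $\FSI(N)$ (\refdf{fsi}) is needed.
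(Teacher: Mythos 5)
Your overall strategy is the same as the paper's, which is just far terser: the paper dismisses the ``if'' direction as immediate from the construction of $\FSI(N)$, and for ``only if'' simply fires the set $\{t_s \mid s_t \inp M\}$ in one step from $\tb(M)$. Your forward direction is exactly that argument with the sequentialisation into singletons and the enabledness checks spelled out, and it is correct.

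There is, however, one incomplete step in your backward direction. From $\tb(q)=s$ you conclude ``$q=s$ or $q=s_t$'', but the $\tb$-preimage of $s$ is $\{s\}\cup\{s_u \mid u\inp O,\ s\in\precond{u}\}$, so $q$ could equally be $s_u$ for some $u\neq t$; this situation genuinely arises whenever $s$ has more than one post-transition, as for the marked place of the net in \reffig{fsi-fail}. Neither of your two exclusion arguments covers that case: $s_u$ is not $s$, and contact-freeness of the firing of $t_s$ only forbids $s_t$, not $s_u$. The case is nonetheless excluded, but by the induction hypothesis rather than by the firing rule: since $t_s$ is enabled in $M$ we have $s\inp M$, and $s_u\inp M\setminus\{s\}$ would give $\tb(s_u)=s=\tb(s)$ with $s_u\neq s$, contradicting the injectivity of $\tb$ on $M$ that your strengthened induction statement provides. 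With that one additional case the proof is complete; everything else is sound.
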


\begin{proof} 
  Given that $\tb(M)\subseteq S$, ``if'' follows directly from the construction
  of $\FSI(N)$.\\
  For ``only if'', assume $\forall p, q \inp M. \tb(p) = \tb(q) \implies p = q$.
  Then $\tb(M) ~[\{t_s \mid s_t \inp M\}\rangle_{\FSI(N)}~ M$.
\end{proof}

Now we can describe how any net simulates the behaviour of
its fully asynchronous implementation.

\begin{lemmai}{fsiworking}{
  Let $N$ and $\FSI(N)$ be as above, $G\subseteq O$, $\sigma\in O^*$ and $M,M' \subseteq S \cup S^\tau$.
  }
    \item\label{fsiinvarstart} $\fsivalidmarking(M_0)$.
    \item\label{fsiimplstep} If $\fsivalidmarking(M) \wedge M
      \production{G}_{\FSI(N)} M'$ then
      $\tb(M) \production{G}_N \tb(M') \wedge \fsivalidmarking(M')$.
    \item\label{fsiimpltaustep} If $\fsivalidmarking(M) \wedge M \production{\tau}_{\FSI(N)} M'$
    then $\fsidistance(M) > \fsidistance(M') \wedge \tb(M) = \tb(M') \wedge \fsivalidmarking(M')$.
    \item\label{fsiimpllongstep} If $M_0 \Production{\sigma}_{\FSI(N)}
      M'$ then $M_0 \Production{\sigma}_N \tb(M')\wedge \fsivalidmarking(M')$.
\end{lemmai}
\begin{proof}
  \refitem{fsiinvarstart}: $M_0 \in [M_0\rangle_N$ and $\forall s \in M_0 \subseteq S. \tb(s) = s$.

  \refitem{fsiimplstep}: Suppose $\fsivalidmarking(M)$ and $M
    \production{G}_{\FSI(N)} M'$ with $G\subseteq O$.
    So $\tb(M)$ is a reachable marking of $N$.

    Let $t \in G$. Since $t$ is enabled in $M$, we have
    $\iprecond{t} \subseteq M$ and hence $\tb(\iprecond{t}) \subseteq
    \tb(M)$.  By construction, $\iprecond{t} = \{s_t \mid s \in
    \precond{t}\}$ so $\tb(\iprecond{t}) = \precond{t}$.  Given that
    $N$ is contact-free, it follows that $t$ is enabled in $\tb(M)$.

    Now let $t,u \in G$ with $t\neq u$.  If $s \in \precond{t}
    \cap \precond{u}$ then $s_t \in \iprecond{t}$ and $s_u \in
    \iprecond{u}$, so $s_t,s_u \in M$.  However, $\tb(s_t)=\tb(s_u)$,
    contradicting $\fsivalidmarking(M)$. Hence $\precond{t} \cap
    \precond{u} = \emptyset$. Given that $\precond{t} \cup
    \precond{u} \subseteq \tb(M)$ and $N$ is contact-free, it
    follows that also $\postcond{t}\cap\postcond{u}=\emptyset$ and
    hence $t$ and $u$ are independent.

    We will now show that $\displaystyle
    \left(\tb(M)\setminus\bigcup_{t\in G} \precond{t}\right) \cup
        \bigcup_{t\in G}\postcond{t} = \tb(M')$.
    \begin{align*}
      M'
      ={}& (M \setminus \{s \mid s \in \iprecond{t},~ t \in G\}) \cup \{s \mid s \in
      \ipostcond{t},~ t \in G\}\\
      ={}& (M \setminus \{s_t \mid s \in \precond{t},~ t \in G\})
      \cup \{ s \mid s \in \postcond{t},~ t \in G\}
      \trail{.}
    \end{align*}
    Therefore
       $\tb(M') = \tb(M \setminus \{s_t \mid s \in \precond{t},~ t \in G\})
      \cup \tb(\{ s \mid s \in \postcond{t},~ t \in G\})$.

    Take any $t \in G$ and any $s \in \precond{t}$. Then $s_t \in M$ and
    $\fsivalidmarking(M)$ implies $s \mathbin{\notin} M \wedge \nexists
    u \inp O. u \ne t \wedge s_u \inp M$.
    Hence $\tb(M \setminus \{s_t \mid s \in \precond{t},~ t \in G\}) =
    \tb(M) \setminus \{s \mid s \in \precond{t},~ t \in G\}$.
    Thus we find
    \begin{align*}
      \tb(M')
      ={}& \tb(M) \setminus \{s \mid s \in \precond{t},~ t \in G\}
      \cup \{ s \mid s \in \postcond{t},~ t \in G\}
    \end{align*}
    and conclude that $\tb(M) \production{G}_N \tb(M')$.

    Next we establish $\fsivalidmarking(M')$. To this end, we may
    assume that $G$ is a singleton set, for $G$ must be finite---this
    follows from our definition of a plain net---and when
    $M[\{t_0,t_1,\ldots,t_n\}\rangle M'$ for some $n\geq 0$ then there
    are $M_1,M_2,\ldots,M_n$ with $M[\{t_0\}\rangle M_1[\{t_1\}\rangle M_2
    \cdots M_n[\{t_n\}\rangle M'$, allowing us to obtain the general
    case by induction.  So let $G=\{t\}$ with $t\inp T$.

    Above we have shown that $\tb(M')\in[M_0\rangle_N$.
    We still need to prove that $\forall p, q \in M'. p \ne q \implies \tb(p)
    \ne \tb(q)$.
    Assume the contrary, i.e.\ there are $p, q \in M'$ with $p \ne q \wedge
    \tb(p) = \tb(q)$.
    Since $\fsivalidmarking(M)$, at least one of $p$ and $q$---say $p$---must not be
    present in $M$.
     Then $p \in \ipostcond{t}=\postcond{t}\subseteq S$.
    As $\tb(q)=\tb(p)=p$ and $q\neq p$, it must be that $q\in S^\tau$.
    Hence $q\mathbin{\notin}\ipostcond{t}$, so $q\inp M$, and $p=\tb(q)\in\tb(M)$.
    As shown above, $t$ is enabled in $\tb(M)$.
    By the contact-freeness of $N$, $(\tb(M)\setminus
    \precond{t})\cap \postcond{t}=\emptyset$, so $p \inp \precond{t}$.
    Hence $p_t \in \iprecond{t} \subseteq M$. As by construction
    $\iprecond{t}\cap\ipostcond{t}=\emptyset$, we have $p_t\not\in
    M'$, so $q\neq p_t$. Yet $\tb(q)=\tb(p_t)$, contradicting $\fsivalidmarking(M)$.

  \refitem{fsiimpltaustep}: Let $t_s \in U'$ such that $M [\{t_s\}\rangle_{\FSI(N)} M'$.
    Then, by construction of $\FSI(N)$, $\iprecond{t_s} = \{s\} \wedge \ipostcond{t_s} = \{s_t\}$.
    Hence $M' = M \setminus \{s\} \cup \{s_t\}$ and
    $\fsidistance(M') = \fsidistance(M) - 1 \wedge \tb(M') = \tb(M)$.
    Moreover, $\fsivalidmarking(M') \equivalent \fsivalidmarking(M)$.

  \refitem{fsiimpllongstep}: Using (\ref{fsiinvarstart}--\ref{fsiimpltaustep}),
   this follows by a straightforward induction on the number of
   transitions in the derivation $M_0 \Production{\sigma}_{\FSI(N)} M'$.
\end{proof}
\end{ICE2008-TR}

\begin{ICE2008}
It is not hard to see that implementations of contact-free nets are
contact-free and implementations are always divergence free; in fact
an implementation of a plain net is always a divergence free $\tau$-net.
\end{ICE2008}
\begin{ICE2008-TR}
It follows that $\fsivalidmarking$ exactly characterises the
reachable markings of $\FSI(N)$. Using this it is not hard to check
that implementations of contact-free nets are contact-free, and hence $\tau$-nets.

\begin{propositioni}{fsionesafe}{
  Let $N$ and $\FSI(N)$ be as before and $M \subseteq S \cup S^\tau$.
  }
    \item $M \in [M_0\rangle_{\FSI(N)}$ iff $\fsivalidmarking(M)$.
\label{fsi1}
    \item $\FSI(N)$ is contact-free.
\label{fsiimplstep2}
    \item $\FSI(N)$ is a $\tau$-net.
\label{fsi3}
\end{propositioni}
\begin{proof}
  \refitem{fsi1}:
  ``Only if'' follows from \reflem{fsiworking}\refitem{fsiimpllongstep},
   and ``if'' follows by Lemmas~\ref{lem-fsisim} and~\ref{lem-fsivalidmarking}.

  \refitem{fsiimplstep2}: Let $M\in[M_0\rangle_{\FSI(N)}$. Then
  $\fsivalidmarking(M)$, and hence $\tb(M) \in [M_0\rangle_N$.

 Consider any $t \in O$ with $\iprecond{t}\subseteq M$. Assume $(M
    \setminus \iprecond{t}) \cap \ipostcond{t} \ne \varnothing$. Since
    $\ipostcond{t} = \postcond{t} \subseteq S$ let $p \in S$ be such
    that $p \in M \cap \ipostcond{t}$ and $p \not\in \iprecond{t}$.
    As $N$ is contact-free we have $(\tb(M) \setminus \precond{t})
    \cap \postcond{t}=\emptyset$, so since $p \in \tb(M) \cap
    \postcond{t}$ it must be that $p \inp \precond{t}$.
    Hence $p_t \inp \iprecond{t} \subseteq M$ and we have $p \neq p_t$
    yet $\tb(p) \mathbin= p \mathbin= \tb(p_t)$, violating $\fsivalidmarking(M)$.

    Now consider any $t_p \in U'$ with $\iprecond{t_p}\subseteq M$.
    As $\iprecond{t_p} = \{p\}$ and $\ipostcond{t_p} = \{p_t\}$
    we have that $(M \setminus \iprecond{t_p}) \cap \ipostcond{t_p} \ne \varnothing$
    only if $p \in M \wedge p_t \in M$. However, $\tb(p) = p = \tb(p_t)$ which would
    violate $\fsivalidmarking(M)$.

  \refitem{fsi3}: By construction, $M_0$ is finite,
  $\iprecond{t}\neq\emptyset$, and $\iprecond{t}$ and $\ipostcond{t}$ are
  finite for all $t\inp O \cup U'$.
\end{proof}

By \reflem{fsiworking}\refitem{fsiimpltaustep} implementations are
always divergence free: 
\begin{proposition}{fsidivfree}{Let $N$ be a plain net.
  Then $\FSI(N)$ is divergence free.}
\vspace{-1.7em}\qed
\end{proposition}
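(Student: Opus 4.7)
The plan is to exploit the monovariant $\fsidistance$ already introduced in \refdf{fsivalidmarking}. By \reflem{fsiworking}\refitem{fsiimpltaustep}, whenever $\fsivalidmarking(M)$ holds and $M \production{\tau}_{\FSI(N)} M'$, we have $\fsidistance(M) > \fsidistance(M')$ and $\fsivalidmarking(M')$. So silent steps strictly decrease $\fsidistance$ along the entire chain.

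First I would take an arbitrary infinite chain $M_1 \production{\tau} M_2 \production{\tau} \cdots$ with $M_1 \in [M_0\rangle_{\FSI(N)}$ and derive a contradiction. By \refpr{fsionesafe}\refitem{fsi1}, every reachable marking $M$ of $\FSI(N)$ satisfies $\fsivalidmarking(M)$, and as noted right after \refdf{fsivalidmarking}, $\fsivalidmarking(M)$ implies $\fsidistance(M) \in \IN$ (because $|M| = |\tb(M)|$ and $\tb(M)$ is a reachable marking of the plain net $N$, which by our standing finiteness assumption is finite). Applying \reflem{fsiworking}\refitem{fsiimpltaustep} inductively, $\fsivalidmarking(M_i)$ holds for all $i$, and the sequence $\fsidistance(M_1) > \fsidistance(M_2) > \cdots$ is a strictly decreasing infinite sequence in $\IN$, which is impossible.

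There is no real obstacle here: all the work has been done in \reflem{fsiworking}, and the argument is just the standard well-foundedness of $\IN$ applied to the variant $\fsidistance$. The only minor point worth being explicit about is why $\fsidistance(M_1)$ is finite in the first place, which is why I invoke \refpr{fsionesafe}\refitem{fsi1} to guarantee $\fsivalidmarking(M_1)$ and thereby finiteness.
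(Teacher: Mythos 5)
Your proof is correct and takes essentially the same route as the paper, which derives the proposition directly from \reflem{fsiworking}\refitem{fsiimpltaustep}: the measure $\fsidistance$ strictly decreases along silent steps while $\fsivalidmarking$ is preserved and keeps $\fsidistance$ in $\IN$ on reachable markings. You have merely made explicit the well-foundedness argument (and the appeal to \refpr{fsionesafe}\refitem{fsi1} for finiteness of $\fsidistance(M_1)$) that the paper leaves implicit.
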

\end{ICE2008-TR}

Whereas in a plain net $N$ for any sequence of observable transitions
$\sigma\in O^*$ there is at most one marking $M$ with $M_0
\Goesto[\sigma] M$, in its fully asynchronous
implementation $\FSI(N)$ there can be several such markings. These
markings $M'$ differ from $M$ in that some tokens may have wandered
off into the added
\begin{ICE2008}
invisible transitions on the incoming arcs of visible ones.
\end{ICE2008}
\begin{ICE2008-TR}
buffer places on the incoming arcs of visible transitions.
\end{ICE2008-TR}
As a consequence, a visible transition $t$ that is
enabled in $M$ need not be enabled in $M'$---we say that in
$\FSI(N)$ $t$ \defitem{can be refused after $\sigma$}.
This may occur for instance for the net $N$ of \reffig{fsi-fail},
namely with $\sigma=\varepsilon$ (the empty sequence), $M$ the initial
marking of $N$, $M'$ the marking of $\FSI(N)$ obtained by firing the
rightmost invisible transition, and $t=a$.

\begin{figure}
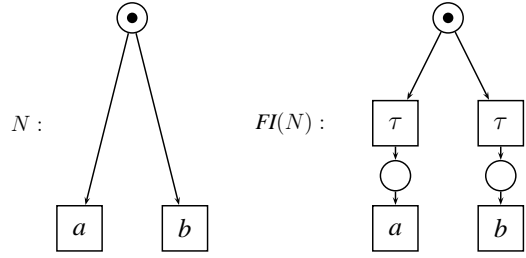

  \begin{center}
    \begin{petrinet}(11,5.5)
      \rput(1,3){\large $N:$}

      \P(3,5):np1;
      \t(2,1):nt1:a;
      \t(4,1):nt2:b;
      \a np1->nt1;
      \a np1->nt2;

      \rput(6,3){\large $\FSI(N):$}

      \P(9,5):fsip1;
      \t(8,3):fsit1p1:$\tau$;
      \p(8,2):fsip1t1;
      \t(8,1):fsit1:a;

      \t(10,3):fsit2p1:$\tau$;
      \p(10,2):fsip1t2;
      \t(10,1):fsit2:b;

      \a fsip1->fsit1p1; \a fsit1p1->fsip1t1; \a fsip1t1->fsit1;
      \a fsip1->fsit2p1; \a fsit2p1->fsip1t2; \a fsip1t2->fsit2;
    \end{petrinet}
  \end{center}
\vspace{-1em}
  \caption{A net which is not failures equivalent to its fully
  asynchronous implementation}
  \label{fig-fsi-fail}
\begin{ICE2008}
\vspace{1em}
\end{ICE2008}
\end{figure}

When this happens, we have $\fpair{\sigma,\{t\}} \in
\failureset(\FSI(N))\setminus\failureset(N)$, so the nets $N$ and $\FSI(N)$
are not failures equivalent.
\begin{ICE2008-TR}
The direction from implementation to
original is nicer however as every transition enabled in the implementation
must also have been enabled in the original net.
Hence the only difference in behaviour between original and implementation
can consist of additional failures in the implementation.

\begin{proposition}{fsiorigfailsub}{
  Let $N$ and $\FSI(N)$ be as before.
  Then $\failureset(N) \subseteq \failureset(\FSI(N))$.
  }
\vspace{-1.7em}
\end{proposition}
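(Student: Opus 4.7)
The plan is to take an arbitrary failure pair $\fpair{\sigma,X}\in\failureset(N)$ and exhibit a suitable witness marking in $\FSI(N)$. Pick $M_1\subseteq S$ with $M_0\Production{\sigma}_N M_1$ and $M_1\arrownot\production{\{t\}}_N$ for every $t\in X$ (the condition $M_1\arrownot\production{\tau}_N$ is vacuous since $N$ is plain). By \reflem{fsisim}\refitem{fsiimpllongstep-like}(2) we immediately obtain $M_0\Production{\sigma}_{\FSI(N)} M_1$, which takes care of the trace part of the failure pair. It then remains to massage $M_1$ into a $\tau$-stable marking that still refuses every $t\in X$.

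To that end, I would fire silent transitions from $M_1$ until none remain enabled, producing a marking $M_1'$ with $M_1\production{\tau}^*_{\FSI(N)} M_1'$ and $M_1'\arrownot\production{\tau}_{\FSI(N)}$. Termination is easy: every silent transition $t_s$ satisfies $\iprecond{t_s}=\{s\}\subseteq S$ and $\ipostcond{t_s}=\{s_t\}\subseteq S^\tau$, so each firing strictly decreases $|M\cap S|$; since $M_1$ is finite, the process halts (alternatively one can invoke \refpr{fsidivfree}). Concatenating this $\tau$-sequence with the derivation above yields $M_0\Production{\sigma}_{\FSI(N)} M_1'$.

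The key step is to verify that $M_1'\arrownot\production{\{t\}}_{\FSI(N)}$ for every $t\in X$. Fix such a $t$. Since $M_1\arrownot\production{\{t\}}_N$ and $N$ is contact-free (so enabledness reduces to $\precond{t}\subseteq M_1$), there exists $s\in\precond{t}$ with $s\notin M_1$. Now observe that silent transitions in $\FSI(N)$ only \emph{consume} tokens from places in $S$ and only \emph{produce} tokens in places in $S^\tau$; therefore along the sequence $M_1\production{\tau}^*_{\FSI(N)} M_1'$ the place $s$ stays empty, so $s\notin M_1'$. Since $t_s$ is the unique producer of the buffer place $s_t$ and $t_s$ requires $s$ in its preset, $s_t$ is never filled along this $\tau$-sequence; together with $s_t\notin M_1$ (as $M_1\subseteq S$) this gives $s_t\notin M_1'$, hence $\iprecond{t}\not\subseteq M_1'$ and $t$ is refused at $M_1'$. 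Thus $\fpair{\sigma,X}\in\failureset(\FSI(N))$.

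There is no genuine obstacle here; the only point requiring a moment's care is the ``refusal propagates under silent firings'' observation, which rests entirely on the structural fact that the postsets of the added silent transitions lie in $S^\tau$, so an absent original place cannot be refilled by $\tau$-steps.
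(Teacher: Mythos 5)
Your proof is correct, and its skeleton matches the paper's: obtain a marking $M_1$ of $N$ witnessing the failure pair, transfer the trace to $\FSI(N)$ via \reflem{fsisim}, and fire silent transitions to reach a $\tau$-stable marking $M_1'$. Where you genuinely diverge is in the key step, preservation of refusals. The paper maps the stable implementation marking back to $N$ with the function $\tb$ and invokes \reflem{fsiworking}\refitem{fsiimpltaustep} and \refitem{fsiimplstep}: since $\tb$ of the stable marking equals $M_1$, any $t\in X$ enabled there would be enabled at $M_1$ in $N$, a contradiction (it also uses \reflem{determ} to pin down $M_1$ uniquely, which you rightly do not need). You instead track a single witness: some $s\in\precond{t}$ with $s\notin M_1$ can never be refilled because the added silent transitions have postsets in $S^\tau$, so its buffer place $s_t$, whose sole producer $t_s$ requires $s$, stays empty, whence $\iprecond{t}\not\subseteq M_1'$. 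Your argument is more elementary and self-contained — it bypasses the invariant $\fsivalidmarking$ and the general projection property of \reflem{fsiworking} entirely, and your termination argument (each $\tau$-firing strictly decreases $|M\cap S|$) is the paper's distance function in disguise. What the paper's heavier machinery buys is reuse: the identical proof is cited verbatim for the symmetric and asymmetric implementations, where your witness-tracking would need to be reworked (in $\AI_g$ a token may sit partway along a chain of buffer places, so ``$s$ absent implies $s_t$ absent'' is no longer a one-step observation).
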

\begin{proof}
  Let $\fpair{\sigma,X} \inp \failureset(N)$. Applying
  \reflem{determ}, let $M_1 \subseteq S$ be the unique marking of $N$
  such that $M_0 \Production{\sigma}_N M_1$.
  By \reflem{fsisim} also $M_0 \Production{\sigma}_{\FSI(N)} M_1$.
  So $\fsivalidmarking(M_1)$.
  As $M_1\subseteq S$ we have $\tb(M_1)\!=\!M_1$.
  By \refpr{fsidivfree} there exists a marking $M_2$ with
  \plat{$M_1 \Production{\varepsilon}_{\FSI(N)} M_2 \wedge
  M_2 \arrownot\production{\tau}_{\FSI(N)}$}.
  \reflem{fsiworking}\refitem{fsiimpltaustep} yields
  $\tb(M_2) = \tb(M_1) \wedge \fsivalidmarking(M_2)$.

  Suppose $\fpair{\sigma,X} \not\in \failureset(\FSI(N))$.
  Then \plat{$M_2 \production{\{t\}}_{\FSI(N)} M_3$} for some $t\in X$ and marking
  $M_3$ of $\FSI(N)$.
  \reflem{fsiworking}\refitem{fsiimplstep} yields
  $M_1 = \tb(M_1) = \tb(M_2) \production{\{t\}} \tb(M_3)$, which is a contradiction.
\end{proof}
\end{ICE2008-TR}%
If\begin{ICE2008}, on the other hand,\end{ICE2008} the wandering off of tokens into
$\tau$-transitions never disables a transition that would be enabled
otherwise, then there is no essential behavioural difference between
$N$ and $\FSI(N)$, and they are equivalent in any reasonable
behavioural equivalence that abstracts from silent transition firings.
In that case, $N$ could be called \defitem{fully asynchronous}. 

\begin{definition}{fsa}{}
  The class of \defitem{fully asynchronous nets respecting
  branching time equivalence} is defined as\begin{ICE2008-TR}\\\end{ICE2008-TR}
  $\FSA(B) := \{N \mid \FSI(N) \approx_{\failureset} N\}$.
\end{definition}

As for any plain net $N$ we have $\failureset(N) \subseteq
\failureset(\FSI(N))$\begin{ICE2008} \cite{glabbeek08symmasymm}\end{ICE2008},
the class of nets $\FSA(B)$ can equivalently be defined as 
$\FSA(B) := \{N \mid \failureset(\FSI(N)) \subseteq \failureset(N)\}$.
\pagebreak[3]

It turns out that there exists a quite structural characterisation of those
nets which are failures equivalent to their fully asynchronous implementation.

\begin{ICE2008}
\begin{definition}{reachableconflict}{}
  A plain net $N = (S, O, \varnothing, F, M_0)$ \defitem{has a
  partially reachable conflict} iff $\exists t, u \inp O.\linebreak[1]
  t \ne u \wedge \precond{t} \cap \precond{u} \ne \varnothing$ and $\exists M
  \in [M_0\rangle. \precond{t} \subseteq M \vee \precond{u} \subseteq M$. 
\end{definition}
\end{ICE2008}
\begin{ICE2008-TR}
\begin{definition}{reachableconflict}{Let $N = (S, O, \varnothing, F, M_0)$ be a plain net}
  $N$ \defitem{has a
  partially reachable conflict} iff $\exists t, u \inp O.\linebreak[1]
  t \ne u \wedge \precond{t} \cap \precond{u} \ne \varnothing$ and $\exists M
  \in [M_0\rangle. \precond{t} \subseteq M$.
\end{definition}
\end{ICE2008-TR}

The nets $N$ of Figures~\ref{fig-fsi-fail} and~\ref{fig-si-deadlock}, for
instance, have a partially reachable conflict.

\begin{theorem}{rcfreeequalsfsa}
  {A plain net $N$ is in $\FSA(B)$ iff $N$ has no partially reachable conflict.}
\mbox{}\vspace{-1.4em}
\end{theorem}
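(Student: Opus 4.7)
The plan is to exploit the already-known inclusion $\failureset(N) \subseteq \failureset(\FSI(N))$ (noted just after \refdf{fsa}), so that the theorem reduces to showing: $\failureset(\FSI(N)) \subseteq \failureset(N)$ if and only if $N$ has no partially reachable conflict.

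For the ``if'' direction, let $\fpair{\sigma, X} \inp \failureset(\FSI(N))$ be witnessed by a marking $M_1$ with $M_0 \Production{\sigma}_{\FSI(N)} M_1$, $M_1 \arrownot\production{\tau}_{\FSI(N)}$ and $M_1 \arrownot\production{\{t\}}_{\FSI(N)}$ for all $t \inp X$. \reflem{fsiworking}\refitem{fsiimpllongstep} gives $M_0 \Production{\sigma}_N \tb(M_1)$ and $\fsivalidmarking(M_1)$, so it suffices to establish $\tb(M_1) \arrownot\production{\{t\}}_N$ for each $t \inp X$. Suppose to the contrary that $\precond{t} \subseteq \tb(M_1)$. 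Since $\tb(M_1) \inp [M_0\rangle_N$ and $N$ has no partially reachable conflict, no preplace of $t$ lies in the preset of any other transition in $O$; hence for each $s \inp \precond{t}$ the only elements of $\tb^{-1}(s)$ occurring as places of $\FSI(N)$ are $s$ and $s_t$. From $s \inp \tb(M_1)$ combined with the injectivity of $\tb$ on $M_1$ (part of $\fsivalidmarking$), exactly one of $s \inp M_1$ or $s_t \inp M_1$ holds. If $s \inp M_1$ for some such $s$, then $\fsivalidmarking(M_1)$ again gives $s_t \notin M_1$, so the silent transition $t_s$ is enabled in $M_1$, contradicting $M_1 \arrownot\production{\tau}$. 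Thus $s_t \inp M_1$ for every $s \inp \precond{t}$, giving $\iprecond{t} \subseteq M_1$; contact-freeness of $\FSI(N)$ (\refpr{fsionesafe}) then makes $t$ enabled in $M_1$, contradicting $M_1 \arrownot\production{\{t\}}$.

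For the ``only if'' direction I argue contrapositively. Suppose $N$ has a partially reachable conflict, witnessed by distinct $t, u \inp O$ with $s \inp \precond{t} \cap \precond{u}$ and some $M \inp [M_0\rangle_N$ with $\precond{t} \subseteq M$. Pick $\sigma$ with $M_0 \Production{\sigma}_N M$; by \reflem{determ} $M$ is the unique marking thus reached, and $t$ is enabled in it, so $\fpair{\sigma, \{t\}} \notin \failureset(N)$. On the implementation side, \reflem{fsisim} yields $M_0 \Production{\sigma}_{\FSI(N)} M$. From $M$ I fire the silent transition $u_s$, enabled because $s \inp \precond{t} \subseteq M$, diverting the token at $s$ into $s_u$. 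By divergence-freeness of $\FSI(N)$ (\refpr{fsidivfree}), I may then saturate with further silent steps to reach a $\tau$-stable marking $M_1$. Since no silent transition of $\FSI(N)$ deposits a token on a place of $S$, the place $s$ never reappears; as $t_s$ is the sole producer of $s_t$ and requires $s$ to fire, we obtain $s_t \notin M_1$, so $t$ is refused in $M_1$ and $\fpair{\sigma, \{t\}} \inp \failureset(\FSI(N)) \setminus \failureset(N)$, whence $N \notin \FSA(B)$.

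The delicate point is the ``if'' direction: one must leverage the absence of partially reachable conflict precisely to pin down, for each $s \inp \precond{t}$, the two possible locations of the corresponding token in $M_1$, so that $\tau$-stability together with $\fsivalidmarking$ forces the tokens into the buffer places $s_t$ and hence enables $t$.
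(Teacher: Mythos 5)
Your proposal is correct and follows essentially the same route as the paper's proof: the hard direction rests on Proposition~\ref{pr-fsiorigfailsub}, Lemma~\ref{lem-fsiworking}, and the observation that in a $\tau$-stable reachable marking of $\FSI(N)$ the token for a preplace $s$ of $t$ must sit in some buffer $s_u$, with absence of partially reachable conflict forcing $u=t$; the paper merely phrases this contrapositively (extracting a conflict from a spurious failure) where you argue directly. The other direction (firing $u_s$ to divert the shared token and then saturating with silent steps) coincides with the paper's argument.
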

\begin{ICE2008}
\begin{proof}
  See \cite{schicke08synchrony} or \cite{glabbeek08symmasymm}.
\end{proof}
\end{ICE2008}
\begin{ICE2008-TR}
\begin{proof}
  Let $N = (S, O, \varnothing, F, M_0)$ and $\FSI(N) = (S \cup S^\tau, O, U',
  F', M_0)$.

  ``$\Rightarrow$'':
  Assume $N$ has a partially reachable conflict. Then there exist $t, u\inp O$,
  $t \ne u$, $\sigma \inp O^*$ and $M_1 \subseteq S$ such that $M_0
  \Production{\sigma}_N M_1$, $\precond{t} \cap \precond{u} \ne \varnothing$ and
  $\precond{t} \subseteq M_1$.
By \reflem{determ} we know that $\fpair{\sigma,\{t\}} \not\in \failureset(N)$.

  On the other hand, $M_0 \Production{\sigma}_{\FSI(N)} M_1$ by \reflem{fsisim}. Let $p \in
  \precond{t} \cap \precond{u}$. Then, by construction of $\FSI(N)$, there
  exists an $M_2 \subseteq S \cup S^\tau$ with $M_1 [\{u_p\}\rangle M_2$, $p \mathbin{\notin}
  M_2$ and since $t \ne u$ also $p_t \mathbin{\notin} M_2$. Now let $M_3 \subseteq S \cup
  S^\tau$ such that \plat{$M_2 \production{\tau}^*_{\FSI(N)} M_3 \wedge M_3
  \arrownot\production{\tau}^*_{\FSI(N)}$} (which exists according to
  \refpr{fsidivfree}).\vspace{-3pt} Since $\forall v \in U'. p \notin \postcond{v} \wedge
  (p_t \in \postcond{v} \implies p \in \precond{v})$ we know that $p_t \notin M_3$.
  Thus $M_3 \arrownot\production{\{t\}}$ and there exists a failure pair
  $\fpair{\sigma,\{t\}} \in \failureset(\FSI(N))$. Hence $\failureset(\FSI(N))
  \ne \failureset(N)$, so $N \notin \FSA(B)$.

  ``$\Leftarrow$'':
  Assume $N \notin \FSA(B)$. Then $\failureset(\FSI(N)) \ne \failureset(N)$ and
  hence $\failureset(\FSI(N)) \setminus \failureset(N) \ne \varnothing$ by
  \refpr{fsiorigfailsub}.
  Let $\fpair{\sigma,X} \in \failureset(\FSI(N)) \setminus \failureset(N)$.
  Then there exists an $M_1 \subseteq S\cup S^\tau$ such that $M_0
  \Production{\sigma}_{\FSI(N)} M_1 \wedge M_1 \arrownot\production{\tau}
  \wedge \forall t \in X. M_1 \arrownot\production{\{t\}}$.
  By \reflem{fsiworking}\refitem{fsiimpllongstep} we have $M_0 \Production{\sigma}_N \tb(M_1)$.
\\
  Let $t \in X$ such that $\tb(M_1) \production{\{t\}}_N$ (which exists, otherwise
  $\fpair{\sigma, X} \in \failureset(N)$).\vspace{2pt} Let $p \in \precond{t}$ such that
  $p_t \notin M_1$ (such $p_t$ exists, otherwise \plat{$M_1 \production{\{t\}}_{\FSI(N)}$}).
  Since \plat{$\tb(M_1) \production{\{t\}}_N$} it follows that $p \in \tb(M_1)$. But $p
  \mathbin{\notin} M_1$, for otherwise $M_1 \production{\tau}_{\FSI(N)}$, which would be a
  contradiction. Hence there must exists some $u \in O$ with $p_u \in M_1$
  and $u \ne t$.  By construction of $\FSI(N)$ we have $p \in \precond{u}$.
  Thus $t, u \inp O \wedge t \ne u \wedge\linebreak[2]
  \precond{t} \cap \precond{u} \ne \varnothing
  \wedge \tb(M_1) \in [M_0\rangle_N \wedge \precond{t} \subseteq \tb(M_1)$ and
  $N$ has a partially reachable conflict.
\end{proof}
\end{ICE2008-TR}

\section{Symmetric Asynchrony}\label{sec-sa}

For investigating the next interaction pattern, we change our notion
of asynchronous implementation of a net. We only insert silent
transitions wherever a transition has multiple preplaces. These are
the situations where the synchronous removal of tokens is really essential.

\begin{definition}{si}{
  Let $N = (S, O, \varnothing, F, M_0)$ be a net.
  Let $O^b = \{t \mid t \in O, |\precond{t}| > 1\}$.
  }
  The \defitem{symmetrically asynchronous implementation} of $N$ is
  defined as the net\\
  $\SI(N) := (S \cup S^\tau, O, U', F', M_0)$ with\vspace{-1ex}
  \begin{align*}
    S^\tau :={\,} & \{\mathrlap{s_t}\hphantom{s_t} \mid t \in O^b, s \in \precond{t}\} \trail{,}\\
    U' :={\,} & \{\mathrlap{t_s}\hphantom{s_t} \mid t \in O^b, s \in \precond{t}\} \trail{and}\\
    F' :={\,} & F \cap \left((O \times S) \cup (S \times (O \setminus
      O^b))\right)\\
      &\cup \{(s, t_s), (t_s, s_t), (s_t, t) \mid t \in O^b, s \in \precond{t}\}
      \trail{.}
  \end{align*}
\end{definition}\vspace{-7pt}

An example is shown in \reffig{si-deadlock}.
\pagebreak[3]

\begin{figure}
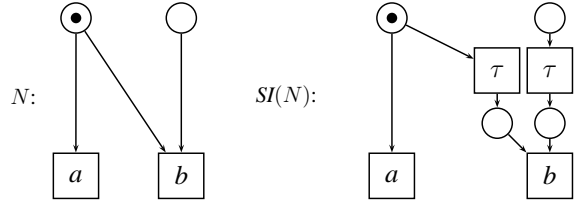

  \begin{center}
    \begin{petrinet}(13,4.5)
      \rput(1,2.5){\large $N$:}

      \P (2,4):p1;
      \p (4,4):p2;
      \t (2,1):t1:a;
      \t (4,1):t2:b;
      \a p1->t1;
      \a p1->t2;
      \a p2->t2;
      \a p2->n1;

      \rput(6,2.5){\large $\SI(N)$:}

      \P (8,4):p1b;
      \p (11,4):p2b;
      \t (11,3):p2btt:$\tau$;
      \p (11,2):p2btp;
      \t (10,3):p1bt2t:$\tau$;
      \p (10,2):p1bt2p;
      \t (8,1):t1b:a;
      \t (11,1):t2b:b;
      \a p1b->t1b;
      \a p2b->p2btt; \a p2btt->p2btp; \a p2btp->t2b;
      \a p1b->p1bt2t; \a p1bt2t->p1bt2p; \a p1bt2p->t2b;
      \a p2b->n1b;
    \end{petrinet}
  \end{center}
\vspace{-1em}
  \caption{The transition $a$ can be refused in $\SI(N)$ by firing the left $\tau$.}
  \label{fig-si-deadlock}
\end{figure}

\begin{ICE2008-TR}
Similar to Section \ref{sec-fsa}, we use $\iprecond{x}$ and
$\ipostcond{x}$ when describing the flow relation of the implementation.

As \refdf{si} is only a slight variation of \refdf{fsi}, the lemmas
and propositions about $\FSI$ in Section \ref{sec-fsa} apply to $\SI$
as well, with minimal changes in the proofs. We will again
begin with how the implementation can simulate the original net.

\begin{lemmai}{sisim}{
  Let $N = (S, O, \varnothing, F, M_0)$ be a plain net, $G\subseteq
  O$, $\sigma\in O^*$ and $M_1, M_2 \subseteq S$.
  }
  \item If $M_1 \production{G}_{N} M_2$ then
  $M_1 \production{\tau}^*_{\SI(N)}\production{G}_{\SI(N)} M_2$.
  \item If $M_1 \Production{\sigma}_{N} M_2$ then $M_1 \Production{\sigma}_{\SI(N)} M_2$.
\vspace{-1ex}
\end{lemmai}
\begin{proof}
  Let $O^b = \{t \mid t \in O, |\precond{t}| > 1\}$.
  Assume $M_1 \production{G}_N M_2$.  Then, by construction of $\SI(N)$,
  $$M_1 ~[\{t_s \mid t \inp G \cap O^b,~ s \inp \precond{t}\}\rangle_{\SI(N)}~
  [\{t \mid t \inp G\}\rangle_{\SI(N)}~ M_2.$$
  The rest of the proof is identical to the proof of \reflem{fsisim}.
\end{proof}

Also similar to the fully asynchronous case, we wish to
undo the effect of firing extraneous $\tau$-transitions. The function
doing so is the
same $\tb$ defined earlier. We also reuse the predicate $\fsivalidmarking$
and the distance function $\fsidistance$. However, $\fsidistance(M)$
is no longer a
\emph{strict} upper bound, or exact measure, on the number of silent
transitions that need to be fired from the marking $M$ before no
further silent transitions are possible. Optionally, strictness can be
ensured by replacing it by the function $\sidistance$, defined by
  $$\sidistance(M) := |M \cap \{s \mid s \inp S,~ \exists t \inp \postcond{s}.
  |\precond{t}| > 1\}|\trail{.}$$

Again $\fsivalidmarking(M)$ implies $\fsidistance(M) \in \IN$.

\begin{lemma}{sivalidmarking}{
  Let $N$ and $\SI(N)$ be as above and $M \subseteq S\cup S^\tau$,
  with $M$ finite.
  }
  Then $\forall p, q \inp M. \tb(p) = \tb(q) \implies p = q$ iff
  $\tb(M)\production{\tau}_{\SI(N)}^* M$.
\vspace{-1ex}
\end{lemma}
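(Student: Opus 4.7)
The plan is to adapt the short argument for \reflem{fsivalidmarking} essentially verbatim, since $\SI(N)$ and $\FSI(N)$ differ only in which observable transitions receive a silent-transition layer on their incoming arcs. Every silent transition of $\SI(N)$ still has the form $t_s$ with $\iprecond{t_s} = \{s\}$ and $\ipostcond{t_s} = \{s_t\}$, $s_t \in S^\tau$, so the structural shape of the silent-transition layer, and the action of $\tb$, are identical; transitions in $O \setminus O^b$ contribute nothing to $S^\tau$ or $U'$ and thus play no role in either direction.

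For the ``if'' direction, I would assume $\tb(M) \production{\tau}^*_{\SI(N)} M$ and proceed by induction on the length of the $\tau$-sequence. The base case is immediate because $\tb(M) \subseteq S$ and $\tb$ restricted to $S$ is the identity. For the inductive step, a silent firing $t_s$ sends a marking $M'$ (for which the injectivity property holds) to $M' \setminus \{s\} \cup \{s_t\}$; the non-contact condition of \refdf{steps} guarantees $s_t \notin M'$, while the induction hypothesis rules out any other preimage of $s$ in $M'$, so injectivity of $\tb$ on the new marking follows.

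For the ``only if'' direction, I would fire from $\tb(M)$ the set $G := \{t_s \mid s_t \in M\}$ and verify $\tb(M) \, [G\rangle_{\SI(N)} M$, then split $G$ into singletons to obtain the required $\tau^*$-sequence (the case $G = \varnothing$ degenerates to zero steps). Each $t_s \in G$ is enabled in $\tb(M)$ since $s = \tb(s_t) \in \tb(M)$, and non-contact is trivial because $s_t \in S^\tau$ while $\tb(M) \subseteq S$. Distinct $s_t, r_u \in M \cap S^\tau$ must satisfy $s \neq r$, for otherwise the injectivity of $\tb$ on $M$ would force $s_t = r_u$; hence the transitions in $G$ have pairwise disjoint presets and, trivially, pairwise disjoint singleton postsets inside $S^\tau$. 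Computing the resulting marking, the preplaces removed from $\tb(M)$ are exactly $\{s \mid s_t \in M\}$ and the postplaces added are exactly $M \cap S^\tau$, while what remains of $\tb(M)$ is $M \cap S$ (since by injectivity no $s \in M \cap S$ can coincide with $\tb(s_t)$ for any $s_t \in M$), which altogether yields $M$.

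I do not anticipate any real obstacle: the only point deserving a moment's attention is that the restriction of buffers to transitions in $O^b$ does not disturb the structural accounting, and this is immediate since transitions outside $O^b$ contribute no new places or silent transitions.
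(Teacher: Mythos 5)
Your proposal is correct and follows essentially the same route as the paper: the paper proves this lemma by observing it is \reflem{fsivalidmarking} applied to $\SI(N)$, whose proof handles ``if'' by appeal to the construction and ``only if'' by firing exactly the step $\{t_s \mid s_t \inp M\}$ from $\tb(M)$, which is precisely your $G$. You merely spell out the details (the induction for ``if'', the enabledness/independence checks and the linearisation into singletons for ``only if'') that the paper leaves implicit.
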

\begin{proof} 
  This is \reflem{fsivalidmarking} applied to $\SI(N)$ rather than $\FSI(N)$.
  The proof is identical.
\end{proof}

\begin{lemmai}{siworking}{
  Let $N$ and $\SI(N)$ be as above, $G\subseteq O$, $\sigma\in O^*$ and $M,M' \subseteq S \cup S^\tau$.
  }
    \item\label{siinvarstart} $\fsivalidmarking(M_0)$.
    \item\label{siimplstep} If $\fsivalidmarking(M) \wedge M
      \production{G}_{\SI(N)} M'$ then
      $\tb(M) \production{G}_N \tb(M') \wedge \fsivalidmarking(M')$.
    \item\label{siimpltaustep} If $\fsivalidmarking(M) \wedge M \production{\tau}_{\SI(N)} M'$
    then $\fsidistance(M) > \fsidistance(M') \wedge \tb(M) = \tb(M') \wedge \fsivalidmarking(M')$.
    \item\label{siimpllongstep} If $M_0 \Production{\sigma}_{\SI(N)}
      M'$ then $M_0 \Production{\sigma}_N \tb(M') \wedge \fsivalidmarking(M')$.
\vspace{-1em}
\end{lemmai}
\begin{proof}
  This is \reflem{fsiworking} applied to $\SI(N)$ rather than $\FSI(N)$;
  the proofs of \refitem{fsiinvarstart}, \refitem{siimpltaustep} and
  \refitem{siimpllongstep} are identical.

  \refitem{siimplstep}: Suppose $\fsivalidmarking(M)$ and $M
    \production{G}_{\SI(N)} M'$ with $G\subseteq O$.
    So $\tb(M)$ is a reachable marking of $N$.

    For any $t\in O$ and $s \in \precond{t}$ we set $\hat{s}_t := s_t$ if
    $|\precond{t}|>1$ and $\hat{s}_t := s$ otherwise.

    Let $t \in G$. Since $t$ is enabled in $M$, we have
    $\iprecond{t} \subseteq M$ and hence $\tb(\iprecond{t}) \subseteq
    \tb(M)$.  By construction, $\iprecond{t} = \{\hat{s}_t \mid s \in
    \precond{t}\}$ so $\tb(\iprecond{t}) = \precond{t}$.  Given that
    $N$ is contact-free, it follows that $t$ is enabled in $\tb(M)$.

    Now let $t,u \in G$ with $t\neq u$.  If $s \in \precond{t}
    \cap \precond{u}$ then $\hat{s}_t \in \iprecond{t}$ and $\hat{s}_u \in
    \iprecond{u}$, so $\hat{s}_t,\hat{s}_u \in M$. As $t$ and $u$ are
    independent in $\SI(N)$, we have $\hat{s}_t \ne \hat{s}_u$.
    However, $\tb(\hat{s}_t)=s=\tb(\hat{s}_u)$,
    contradicting $\fsivalidmarking(M)$. Hence $\precond{t} \cap
    \precond{u} = \emptyset$. Given that $\precond{t} \cup
    \precond{u} \subseteq \tb(M)$ and $N$ is contact-free, it
    follows that also $\postcond{t}\cap\postcond{u}=\emptyset$ and
    hence $t$ and $u$ are independent.

    We will now show that $\displaystyle
    \left(\tb(M)\setminus\bigcup_{t\in G} \precond{t}\right) \cup
        \bigcup_{t\in G}\postcond{t} = \tb(M')$.
    \begin{align*}
      M'
      ={}& (M \setminus \{s \mid s \in \iprecond{t},~ t \in G\}) \cup \{s \mid s \in
      \ipostcond{t},~ t \in G\}\\
      ={}& (M \setminus \{\hat{s}_t \mid s \in \precond{t},~ t \in G\})
      \cup \{ s \mid s \in \postcond{t},~ t \in G\}
      \trail{.}
    \end{align*}
    Therefore
       $\tb(M') = \tb(M \setminus \{\hat{s}_t \mid s \in \precond{t},~ t \in G\})
      \cup \tb(\{ s \mid s \in \postcond{t},~ t \in G\})$.

    Take any $t \inp G$ and any $s \inp \precond{t}$. Then $\hat{s}_t
    \inp M$, $\tb(\hat{s}_t)=s$ and $\fsivalidmarking(M)$ implies
    $\nexists p \inp M.\, p \ne \hat{s}_t \wedge \tb(p)=s$.
    Hence $\tb(M \setminus \{\hat{s}_t \mid s \in \precond{t},~ t \in G\}) =
    \tb(M) \setminus \{s \mid s \in \precond{t},~ t \in G\}$.
    Thus we find
    \begin{align*}
      \tb(M')
      ={}& \tb(M) \setminus \{s \mid s \in \precond{t},~ t \in G\}
      \cup \{ s \mid s \in \postcond{t},~ t \in G\}
    \end{align*}
    and conclude that $\tb(M) \production{G}_N \tb(M')$.

That $\fsivalidmarking(M')$ holds is established in exactly the same
way as in the proof of \reflem{fsiworking}\refitem{fsiimplstep},
noting that in deriving $p_t \in \iprecond{t} \subseteq M$ we use
$\hat{p}_t \in \iprecond{t} \subseteq M$ and $p\not\in M$.
\end{proof}

\begin{propositioni}{sionesafe}{
  Let $N$ and $\SI(N)$ be as before and $M \subseteq S \cup S^\tau$.
  }
    \item $M \in [M_0\rangle_{\SI(N)}$ iff $\fsivalidmarking(M)$.
\label{si1}
    \item $\SI(N)$ is contact-free.
\label{siimplstep2}
    \item $\SI(N)$ is a $\tau$-net.
\label{si3}
\vspace{-1ex}
\end{propositioni}
\begin{proof}
    Identical to the proof of \refpr{fsionesafe},
    using the lemmas of Section \ref{sec-sa}.
\end{proof}

\begin{proposition}{sidivfree}{Let $N$ be a plain net.
  Then $\SI(N)$ is divergence free.}
\vspace{-2em}
\end{proposition}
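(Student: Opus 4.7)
The proof is a direct consequence of Lemma~\ref{lem-siworking}, following exactly the pattern of \refpr{fsidivfree}. The plan is to exhibit a natural-number-valued measure on markings that strictly decreases along every $\tau$-step; this prevents any infinite $\tau$-chain starting from a reachable marking.

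First, I would observe that every reachable marking $M \inp [M_0\rangle_{\SI(N)}$ satisfies the predicate $\fsivalidmarking(M)$. This follows from \reflem{siworking}\refitem{fsiinvarstart}, which gives $\fsivalidmarking(M_0)$, together with parts \refitem{siimplstep} and \refitem{siimpltaustep}, which show that $\fsivalidmarking$ is preserved under both observable and silent steps; a straightforward induction on the length of the derivation establishes this for every reachable $M$. Moreover, since $\fsivalidmarking(M)$ entails $|M| = |\tb(M)|$ and $\tb(M) \inp [M_0\rangle_N$ is finite (as reachable markings of a plain net are finite), we obtain $\fsidistance(M) \inp \IN$.

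Next, I would apply \reflem{siworking}\refitem{siimpltaustep}: whenever $\fsivalidmarking(M)$ and $M \production{\tau}_{\SI(N)} M'$, we have $\fsidistance(M) > \fsidistance(M')$. Hence along any $\tau$-chain $M_1 \production{\tau} M_2 \production{\tau} \cdots$ starting from a reachable marking $M_1$, the values $\fsidistance(M_1), \fsidistance(M_2), \ldots$ form a strictly decreasing sequence of natural numbers, which must terminate. Therefore no infinite $\tau$-chain originating in $[M_0\rangle_{\SI(N)}$ exists, and $\SI(N)$ is divergence free.

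There is no real obstacle here; the proof essentially reuses the argument for \refpr{fsidivfree} verbatim, since Lemma~\ref{lem-siworking} is the exact analogue of \reflem{fsiworking}. The one minor subtlety worth mentioning is that $\fsidistance$ as defined counts all tokens in non-sink places of $S$, not only those sitting on preplaces of backward-branched transitions; this is harmless because each individual $\tau$-firing in $\SI(N)$ still moves a token out of $S$ into $S^\tau$, strictly decreasing $\fsidistance$. (One could equivalently use the tighter measure $\sidistance$ introduced just before \reflem{sivalidmarking}, but this is not necessary for the argument.) $\hfill\Box$
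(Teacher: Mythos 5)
Your proof is correct and takes essentially the same route as the paper, which simply observes that the statement follows immediately from \reflem{siworking}\refitem{fsiimpltaustep}: the measure $\fsidistance$ is a natural number on reachable markings and strictly decreases along every $\tau$-step, so no infinite $\tau$-chain from a reachable marking can exist. Your additional remark that $\fsidistance$ (rather than the tighter $\sidistance$) still suffices is accurate and matches the paper's own choice in that lemma.
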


\begin{proof}
This follows immediately from \reflem{siworking}\refitem{fsiimpltaustep}.
\end{proof}
\end{ICE2008-TR}

\begin{ICE2008}
As for the fully asynchronous case,
an implementation of a plain net is always a divergence-free $\tau$-net.
\end{ICE2008}

\begin{ICE2008-TR}
\begin{proposition}{siorigfailsub}{
  Let $N$ and $\SI(N)$ be as before.
  Then $\failureset(N) \subseteq \failureset(\SI(N))$.
  }
\vspace{-2em}
\end{proposition}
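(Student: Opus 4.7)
My plan is to reuse the argument template from Proposition~\ref{pr-fsiorigfailsub}, transferring it to the symmetric setting. Since every ingredient---the simulation lemma (\reflem{sisim}), the working lemma (\reflem{siworking}), and divergence-freeness (\refpr{sidivfree})---has been established for $\SI(N)$ in exact analogy with $\FSI(N)$, I expect the argument to carry over essentially verbatim.

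Given $\fpair{\sigma,X} \in \failureset(N)$, I would first invoke \reflem{determ} to pick the unique $M_1 \subseteq S$ with $M_0 \Production{\sigma}_N M_1$; by determinism this $M_1$ is the witness of the failure pair, so it refuses every $t \in X$ in $N$. By \reflem{sisim} the same $\sigma$-run exists in $\SI(N)$ and ends at $M_1$, and since $M_1 \subseteq S$ we trivially have $\tb(M_1) = M_1$ and $\fsivalidmarking(M_1)$. Using \refpr{sidivfree} I would then extend the run to a $\tau$-maximal marking $M_2$ with $M_1 \production{\tau}^*_{\SI(N)} M_2$ and $M_2 \arrownot\production{\tau}_{\SI(N)}$. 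Iterating \reflem{siworking}\refitem{siimpltaustep} along the $\tau$-chain yields $\tb(M_2) = M_1$ and $\fsivalidmarking(M_2)$.

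It remains to check that no $t \in X$ is enabled at $M_2$ in $\SI(N)$. Supposing for contradiction that $M_2 \production{\{t\}}_{\SI(N)} M_3$ for some $t \in X$, \reflem{siworking}\refitem{siimplstep} would yield $M_1 = \tb(M_2) \production{\{t\}}_N \tb(M_3)$, contradicting that $M_1$ refuses $t$. Combining the two observations, $\fpair{\sigma,X}$ is a failure pair of $\SI(N)$.

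I anticipate no substantive obstacle: the critical invariant is that $\fsivalidmarking$ holds throughout, which is immediate from $M_1 \subseteq S$ together with repeated application of \reflem{siworking}\refitem{siimpltaustep}. The only conceptual point worth noting is that $\SI$ differs from $\FSI$ merely by omitting silent transitions in front of single-preplace transitions, and such transitions provide no buffer into which a token can wander; hence no refusal present in $\SI(N)$ but absent in $N$ can arise beyond what the fully asynchronous argument already handles.
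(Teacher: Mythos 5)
Your proposal is correct and follows essentially the same route as the paper: the paper proves Proposition~\ref{pr-fsiorigfailsub} by exactly this sequence of steps (determinism of $N$, the simulation lemma, divergence-freeness to reach a $\tau$-maximal $M_2$, the $\tau$-step clause of the working lemma to get $\tb(M_2)=M_1$, and the visible-step clause to derive the contradiction), and then states that the symmetric case is identical using the Section~\ref{sec-sa} lemmas. No gaps.
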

\begin{proof}
  Identical to that of \refpr{fsiorigfailsub}, using the lemmas of
  Section \ref{sec-sa}.
\end{proof}
\end{ICE2008-TR}

Again, the only difference in behaviour between the original net and
its implementation is that observable transitions can potentially be
refused in the implementation, as in \reffig{si-deadlock}.
This yields a concept of a \defitem{symmetrically asynchronous} net.

\begin{definition}{sa}{}
    The class of \defitem{symmetrically asynchronous nets respecting
    branching time equivalence} is defined as
    $\SA(B) := \{N \mid \SI(N) \approx_{\failureset} N\}$.
\end{definition}

\begin{ICE2008}
Again we have $\failureset(N) \subseteq \failureset(\SI(N))$ for any
plain net $N$ \cite{glabbeek08symmasymm}.
\end{ICE2008}
We now show that plain nets can be implemented symmetrically asynchronously
with respect to failure equivalence exactly when they do not contain
reachable structures of the form shown in \reffig{si-deadlock}.

\begin{definition}{reachablen}{}
  A plain net $N = (S, O, \varnothing, F, M_0)$ \defitem{has a
  partially reachable \structuralN} iff $\exists t, u\inp O. t \ne u\linebreak[1]
  \wedge \precond{t} \cap \precond{u} \ne
  \varnothing\linebreak[1] \wedge |\precond{u}| > 1 \wedge
  \exists M \in [M_0\rangle_N. \precond{t} \subseteq M \vee \precond{u} \subseteq M$.
\end{definition}

\begin{theorem}{rnfreeequalssa}{A plain net $N$ is in $\SA(B)$ iff $N$ has no partially
  reachable \structuralN.}
\mbox{}\vspace{-1.65em}
\end{theorem}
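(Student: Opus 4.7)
My plan is to mirror the proof of \refthm{rcfreeequalsfsa}, adapting it from reachable conflicts to reachable $\structuralN$ patterns and from $\FSI$ to $\SI$. The pivotal observation is that every silent transition of $\SI(N)$ has the form $t_s$ with $\iprecond{t_s}=\{s\}$ and $\ipostcond{t_s}=\{s_t\}$: firing any such transition permanently removes the token from $s$, since no silent transition of $\SI(N)$ ever puts a token on a place of $S$.

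For the ``$\Rightarrow$'' direction, I assume a partially reachable $\structuralN$ witnessed by $t,u\inp O$ with $t\ne u$, $p\inp\precond{t}\cap\precond{u}$, $|\precond{u}|>1$, $\sigma\inp O^*$ and $M_1\subseteq S$ with $M_0\Production{\sigma}_N M_1$ and $\precond{t}\subseteq M_1$ or $\precond{u}\subseteq M_1$. By \reflem{sisim}, $M_1$ is also reachable via $\sigma$ in $\SI(N)$. When $\precond{t}\subseteq M_1$, I fire the silent transition $u_p$ (legal because $u\inp O^b$), then $\tau$-saturate to a marking $M_2$ with $M_2\arrownot\production{\tau}_{\SI(N)}$, using divergence-freeness of $\SI(N)$ (\refpr{sidivfree}); by the pivotal observation $p$ remains absent from $M_2$, so $p_t$ remains absent too (it could only have been produced by $t_p$, which needs $p$). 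Hence $t$ is disabled in $M_2$ irrespective of whether $t\inp O^b$, giving $\fpair{\sigma,\{t\}}\inp\failureset(\SI(N))\setminus\failureset(N)$ (the $\notin$ part uses \reflem{determ} together with $\precond{t}\subseteq M_1$). When only $\precond{u}\subseteq M_1$ holds I subdivide: if $t\inp O^b$, the symmetric argument using $t_p$ in place of $u_p$ disables $u$ and yields $\fpair{\sigma,\{u\}}$; if $t\notin O^b$ then $\precond{t}=\{p\}\subseteq\precond{u}\subseteq M_1$, reducing to the first case.

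For the ``$\Leftarrow$'' direction, I assume $N\notin\SA(B)$ and use \refpr{siorigfailsub} to pick $\fpair{\sigma,X}\inp\failureset(\SI(N))\setminus\failureset(N)$, witnessed by $M_1$ with $M_0\Production{\sigma}_{\SI(N)} M_1$, $M_1\arrownot\production{\tau}$, and $M_1\arrownot\production{\{t\}}$ for all $t\inp X$. Combining \reflem{siworking}\refitem{siimpllongstep} with \reflem{determ} yields some $t\inp X$ with $\precond{t}\subseteq\tb(M_1)$. I then locate a $p\inp\precond{t}$ with $p\notin M_1$: if $t\notin O^b$ this is immediate from $M_1\arrownot\production{\{t\}}$ and $\iprecond{t}=\{p\}$; if $t\inp O^b$ I first pick $p\inp\precond{t}$ with $p_t\notin M_1$ and then observe that $p\inp M_1$ would enable the silent transition $t_p$, contradicting $M_1\arrownot\production{\tau}$. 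Since $p\inp\tb(M_1)\setminus M_1$, the injectivity enforced by $\fsivalidmarking(M_1)$ (which holds by \refpr{sionesafe}\refitem{si1}) forces the witnessing token to sit in some buffer place $p_u\inp M_1$ with $u\inp O^b$, $p\inp\precond{u}$, and $u\ne t$ (the last since $p_t\notin M_1$ but $p_u\inp M_1$, or trivially when $t\notin O^b$). The triple $(t,u,\tb(M_1))$ then witnesses a partially reachable $\structuralN$.

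The bulk of the delicacy sits in the ``$\Leftarrow$'' direction when $t$ is itself backward branched: I really must invoke $M_1\arrownot\production{\tau}$ to rule out the scenario where $p$ is present in $M_1$ but has not yet been transferred into $p_t$. The ``$\Rightarrow$'' direction is essentially a case analysis exploiting the fact that \refdf{reachablen} asymmetrically requires only $u$ to be backward branched.
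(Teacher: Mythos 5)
Your proposal is correct and follows essentially the same route as the paper's own proof: the forward direction fires the buffering transition $u_p$ (or $t_p$) and uses divergence-freeness plus the fact that no silent transition of $\SI(N)$ returns a token to $S$, and the backward direction extracts the witnessing $u$ from a buffer place $p_u\in M_1$ exactly as the paper does. The only differences are cosmetic: the paper unifies your case splits on $t\in O^b$ via the notation $\hat p_t$, and in the second case of ``$\Rightarrow$'' it derives $|\precond{t}|>1$ directly instead of reducing the $t\notin O^b$ subcase to the first case.
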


\begin{ICE2008}
\begin{proof}
{
  See \cite{schicke08synchrony} or \cite{glabbeek08symmasymm}.
  \vspace{-1ex}
}
\end{proof}
\end{ICE2008}
\begin{ICE2008-TR}
\begin{proof}
  Let $N = (S, O, \varnothing, F, M_0)$ and $\SI(N) = (S \cup S^\tau, O, U',
  F', M_0)$.

  ``$\Rightarrow$'':
  Assume $N$ has a partially reachable \structuralN. Then there exist $t, u\inp O$,
  $t \ne u$, $\sigma \inp O^*$ and $M_1 \subseteq S$ such that $M_0
  \Production{\sigma}_N M_1$, $\precond{t} \cap \precond{u} \ne \varnothing$,
  $|\precond{u}| > 1$ and $\precond{t} \subseteq M_1 \vee \precond{u} \subseteq M_1$.
  We will show that $\SI(N) \not\approx_{\failureset} N$.

  There are two cases:

  Case 1, $\precond{t} \subseteq M$: We will show that $\fpair{\sigma,
  \{t\}} \inp \failureset(\SI(N))$ but $\fpair{\sigma, \{t\}}\mathbin{\not\in} \failureset(N)$.
  \vspace{-3pt}
  As $N$ has no silent transitions, by \reflem{determ} we have $M_0
  \Production{\sigma}_N M'$ only if $M' = M_1$. Since $M_1
  \production{\{t\}}_N$ it follows that $\fpair{\sigma, \{t\}} \notin \failureset(N)$.

  On the other hand, $M_0 \Production{\sigma}_{\SI(N)} M_1$ by \reflem{sisim}. Let $p \in
  \precond{t} \cap \precond{u}$. Then, by construction of $\SI(N)$, there
  exists an $M_2 \subseteq S \cup S^\tau$ with $M_1 [\{u_p\}\rangle M_2$, $p \mathbin{\notin}
  M_2$ and since $t \ne u$ also $\hat{p}_t \mathbin{\notin} M_2$. Now let $M_3 \subseteq S \cup
  S^\tau$ such that \plat{$M_2 \production{\tau}^*_{\SI(N)} M_3 \wedge M_3
  \arrownot\production{\tau}^*_{\SI(N)}$} (which exists according to
  \refpr{sidivfree}).\vspace{-3pt} Since $\forall v \inp U'. p \mathbin{\notin} \postcond{v} \wedge
  (p_t \inp \postcond{v} \implies p \inp \precond{v})$ we have $\hat{p}_t \mathbin{\notin} M_3$.
  Thus $M_3 \arrownot\production{\{t\}}$ and $\fpair{\sigma,\{t\}} \inp \failureset(\SI(N))$.

  Case 2, $\precond{t} \nsubseteq M$: Then $\precond{u} \subseteq M$. Thus
  $\exists q \in \precond{t} \setminus \precond{u}$, so $|\precond{t}| > 1$.
  This case proceeds as case 1 with the roles of $t$ and $u$ exchanged.

  ``$\Leftarrow$'':
  Assume $N \mathbin{\notin} \SA(B)$. Then $\failureset(\SI(N)) \ne \failureset(N)$ and
  hence $\failureset(\SI(N)) \setminus \failureset(N) \ne \varnothing$ by
  \refpr{siorigfailsub}.
  Let $\fpair{\sigma,X} \inp \failureset(\SI(N)) \setminus \failureset(N)$.
  \vspace{-3pt}
  Then there exists an $M_1 \subseteq S\cup S^\tau$ such that $M_0
  \Production{\sigma}_{\SI(N)} M_1$, $M_1 \arrownot\production{\tau}$
  and $\forall t \in X. M_1 \arrownot\production{\{t\}}$.
  By \reflem{siworking}\refitem{siimpllongstep} we have $M_0 \Production{\sigma}_N \tb(M_1)$.
\\
  Let $t \in X$ such that $\tb(M_1) \production{\{t\}}_N$ (which exists, otherwise
  $\fpair{\sigma, X} \in \failureset(N)$).\vspace{2pt} Let $p \in \precond{t}$ such that
  $\hat{p}_t \notin M_1$ (such a $p$ exists, otherwise \plat{$M_1 \production{\{t\}}_{\SI(N)}$}).
  Since \plat{$\tb(M_1) \production{\{t\}}_N$} it follows that $p \in \tb(M_1)$. But $p
  \mathbin{\notin} M_1$, for otherwise $p\ne\hat{p}_t$ and $M_1 \production{\tau}_{SI(N)}$,
  which would be a contradiction. Hence there must exists some $u \in O$ with $p_u \in M_1$
  and $u \ne t$.  By construction of $\SI(N)$ we have $p \in \precond{u}$ and $|\precond{u}| > 1$.
  Thus $t, u \inp O \wedge t \ne u \wedge\linebreak[2]
  \precond{t} \cap \precond{u} \ne \varnothing \wedge |\precond{u}| > 1
  \wedge \tb(M_1) \in [M_0\rangle_N \wedge \precond{t} \subseteq \tb(M_1)$, so
  $N$ has a partially reachable \structuralN.
\end{proof}
\end{ICE2008-TR}

The following proposition shows that the current class of nets strictly
extends the one from the previous section.

\begin{proposition}{fsabltsab}{$\FSA(B) \subsetneq \SA(B)$.}
\mbox{}\vspace{-1.65em}
\end{proposition}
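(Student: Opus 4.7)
The plan is to use the two structural characterisations just established (Theorem~\ref{thm-rcfreeequalsfsa} for $\FSA(B)$ and Theorem~\ref{thm-rnfreeequalssa} for $\SA(B)$) and reduce the proposition to a comparison of two purely structural side-conditions on plain nets.

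For the inclusion $\FSA(B) \subseteq \SA(B)$, I would observe that having a partially reachable \structuralN{} is a strictly stronger condition than having a partially reachable conflict: both require two distinct transitions $t,u\inp O$ with $\precond{t}\cap\precond{u}\neq\varnothing$ and a reachable marking enabling $t$ or $u$, and the \structuralN-condition additionally demands $|\precond{u}|>1$. Hence if $N$ has a partially reachable \structuralN{} it has a partially reachable conflict; contrapositively, every net without a partially reachable conflict has no partially reachable \structuralN, and the inclusion follows from the two characterisation theorems.

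For strictness, I would exhibit a witness in $\SA(B)\setminus\FSA(B)$. The net $N$ on the left of Figure~\ref{fig-fsi-fail} is the canonical candidate: it has a single initially marked place $p_1$ with two outgoing observable transitions $a$ and $b$, both having $\{p_1\}$ as their entire preset. This $N$ clearly has a partially reachable conflict (take $t=a$, $u=b$ and $M=M_0=\{p_1\}$), so by Theorem~\ref{thm-rcfreeequalsfsa} we have $N\notin\FSA(B)$; the textual comment on Figure~\ref{fig-fsi-fail} already spells out a failure pair distinguishing $N$ from $\FSI(N)$. On the other hand, every transition of $N$ has a singleton preset, so the condition $|\precond{u}|>1$ required for a partially reachable \structuralN{} can never be satisfied. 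By Theorem~\ref{thm-rnfreeequalssa}, $N\inp\SA(B)$, which makes the inclusion strict.

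The ``hard part'' is really nothing more than spotting that the \structuralN-condition is the conflict condition plus the extra demand $|\precond{u}|>1$; once that is noted, both directions are essentially one-line arguments appealing to the structural characterisations. No new simulation arguments about $\SI$ or $\FSI$ are needed.
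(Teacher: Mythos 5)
Your proof is correct and follows essentially the same route as the paper: the inclusion is obtained by observing that a partially reachable $\structuralN$ is in particular a partially reachable conflict (so absence of the latter implies absence of the former, and the two characterisation theorems do the rest), and strictness is witnessed by the net of Figure~\ref{fig-fsi-fail}. The paper's proof is just a terser statement of exactly these two points.
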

\begin{proof}
  A net without partially reachable conflict surely has no partially
  reachable \structuralN.
  The inequality follows from the example in \reffig{fsi-fail}.
\vspace{-1ex}
\end{proof}

\begin{ICE2008}
\gdef\captiondistance{-1em}
\end{ICE2008}
\begin{ICE2008-TR}
\gdef\captiondistance{-0.3em}
\end{ICE2008-TR}

\def\figcounterexamplefclike{
\begin{figure}
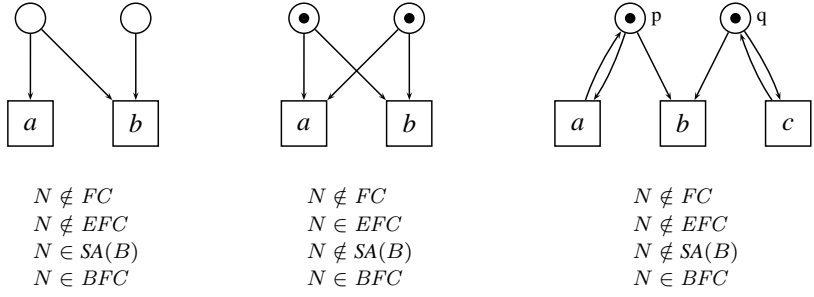

$$\begin{array}{c@{\qquad}c@{\qquad}c}
    \begin{petrinet}(4,3.4)
      \p (1,3):p1;
      \p (3,3):p2;
      \t (1,1):t1:a;
      \t (3,1):t2:b;

      \a p1->t1;
      \a p1->t2;
      \a p2->t2;
    \end{petrinet}
&
    \begin{petrinet}(4,3.4)
      \P (1,3):p1;
      \P (3,3):p2;
      \t (1,1):t1:a;
      \t (3,1):t2:b;

      \a p1->t1;
      \a p1->t2;
      \a p2->t1;
      \a p2->t2;
    \end{petrinet}
&
\begin{petrinet}(6,3.4)
  \Q (2,3):p1:p;
  \Q (4,3):p2:q;
  \t (1,1):t1:a;
  \t (3,1):t2:b;
  \t (5,1):t3:c;

  \a p1->t2;
  \a p2->t2;
  \A p1->t1;
  \A t1->p1;
  \A p2->t3;
  \A t3->p2;
\end{petrinet}\\[\captiondistance]
\alignedcaption[N \notin & \FC] &
\alignedcaption[N \notin & \FC] &
\alignedcaption[N \notin & \FC] \\[\captiondistance]
\alignedcaption[N \notin & \EFC] &
\alignedcaption[N \in & \EFC] &
\alignedcaption[N \notin & \EFC] \\[\captiondistance]
\alignedcaption[N \in & \SA(B)] &
\alignedcaption[N \notin & \SA(B)] &
\alignedcaption[N \notin & \SA(B)] \\[\captiondistance]
\alignedcaption[N \in & \BFC] &
\alignedcaption[N \in & \BFC] &
\alignedcaption[N \in & \BFC]
\end{array}$$
\caption{Differences between various classes of free-choice-like nets}
\label{fig-counterexamplesal_bfc}
\label{fig-counterexampleefc_sab}
\label{fig-counterexamplefcsal_efc}
\end{figure}
}

\begin{ICE2008-TR}
\figcounterexamplefclike
\end{ICE2008-TR}

It turns out that our class of nets $\SA(B)$ is strongly related to the
following established net classes \cite{bes87,best83freesimple}.

\begin{definition}{fcetc}{
  Let $N = (S, O, \varnothing, F, M_0)$ be a plain net.
  }
  \begin{enumerate}
    \item
      $N$ is \defitem{free choice}, $N \in \FC$, iff
      $\forall p, q \in S. p\neq q \wedge \postcond{p} \cap \postcond{q} \ne \varnothing
      \implies |\postcond{p}| = |\postcond{q}| = 1$.
    \item
      $N$ is \defitem{extended free choice}, $N \in \EFC$, iff
      $\forall p, q \in S. \postcond{p} \cap \postcond{q} \ne \varnothing
      \implies \postcond{p} = \postcond{q}$.
    \item
      $N$ is \defitem{behaviourally free choice}, $N \in \BFC$, iff
      $\forall u, v \in O. \precond{u} \cap \precond{v} \ne \varnothing
      \implies\\ \left(\forall M_1 \in [M_0\rangle.
      \precond{u} \subseteq M_1 \equivalent \precond{v} \subseteq M_1\right)$.
  \end{enumerate}
\end{definition}

The above definition of a free choice net is in terms of places, but
the notion can equivalently be defined in terms of transitions:\vspace{-1ex}
$$N \in \FC ~~\mbox{iff}~~
      \forall t, u \in T. t\neq u \wedge \precond{t} \cap \precond{u} \ne \varnothing
      \implies |\precond{t}| = |\precond{u}| = 1.\vspace{-1ex}$$
Both conditions are equivalent to the requirement that $N$ must be
$\structuralN$-free, where $\structuralN$ is defined as in \refdf{reachablen}
but without the reachability clause.
Also the notion of an extended free choice net can equivalently be
defined in terms of transitions:\vspace{-1ex}
$$N \in \EFC ~~\mbox{iff}~~
      \forall t, u \in T. \precond{t} \cap \precond{u} \ne \varnothing
      \implies \precond{t} = \precond{u}.\begin{ICE2008-TR}\vspace{-1ex}\end{ICE2008-TR}$$
This condition says that $N$ may not contain what we call a
\defitem{pure} $\structuralN$: places $p,q$ and transitions $t,u$ such that
$p\in\precond{t}\cap\precond{u}$, $q\in\precond{u}$ and $q\not\in\precond{t}$.

In \cite{best83freesimple} it has been established that
$\FC \subsetneq \EFC \subsetneq \BFC$. 
In fact, the inclusions follow directly from the definitions, and
\reffig{counterexamplefcsal_efc} displays counterexamples to strictness.

The class of free choice nets is strictly smaller than the class of symmetrically
asynchronous nets respecting branching time equivalence, which in turn
is strictly smaller than the class of behavioural free choice nets.
The class of extended free choice nets and the class of symmetrically
asynchronous nets respecting branching time equivalence are incomparable.
\begin{proposition}{fcltsab}{
  $\FC \subsetneq \SA(B) \subsetneq \BFC$,
  $\EFC \nsubseteq SA(B)$ and $SA(B) \nsubseteq \EFC$.}
\vspace{-2.15em}
\end{proposition}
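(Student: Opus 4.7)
My plan is to split the proposition into four assertions and treat each via the characterisation in \refthm{rnfreeequalssa}, which identifies $\SA(B)$ with the plain nets having no partially reachable \structuralN.

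For $\FC \subseteq \SA(B)$, I appeal to the reformulation (noted just above) that $N \in \FC$ is equivalent to being $\structuralN$-free; such a net trivially has no partially reachable \structuralN, so \refthm{rnfreeequalssa} gives membership in $\SA(B)$. For $\SA(B) \subseteq \BFC$ I argue contrapositively: suppose $N \notin \BFC$, so there exist $u,v \in O$ with $\precond{u} \cap \precond{v} \ne \varnothing$ and a reachable marking $M_1$ at which the biconditional $\precond{u} \subseteq M_1 \equivalent \precond{v} \subseteq M_1$ fails. Take, without loss of generality, $\precond{u} \subseteq M_1$ and $\precond{v} \nsubseteq M_1$; then $u \ne v$, and picking $p \in \precond{u} \cap \precond{v}$ and $q \in \precond{v} \setminus M_1$ gives $p \in M_1$ while $q \notin M_1$, hence $p \ne q$ and $|\precond{v}| > 1$. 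The data $(u,v,M_1)$ therefore witness a partially reachable \structuralN\ (with $v$ in the role of the branched transition), so $N \notin \SA(B)$ by \refthm{rnfreeequalssa}.

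For the remaining strictness and incomparability statements, I point to the three example nets in \reffig{counterexamplesal_bfc}. The leftmost net, having empty initial marking, vacuously satisfies $\SA(B)$ (no transition preset is ever contained in a reachable marking), yet its structure violates both $\FC$ and $\EFC$; this simultaneously yields $\FC \subsetneq \SA(B)$ and $\SA(B) \nsubseteq \EFC$. The middle net satisfies $\precond{a} = \precond{b}$ and so lies in $\EFC$ (hence in $\BFC$), but these two transitions together with their two common preplaces form a partially reachable \structuralN\ at $M_0$, so it is outside $\SA(B)$; this delivers both $\SA(B) \subsetneq \BFC$ and $\EFC \nsubseteq \SA(B)$. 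The principal obstacle is the argument for $\SA(B) \subseteq \BFC$: the definition of $\BFC$ is symmetric in $u$ and $v$, whereas a partially reachable \structuralN\ is asymmetric in which transition's preset is enabled and which is branched, so I must take care to orient the WLOG choice so that the enabled preset plays the role of $\precond{t}$ and the non-enabled one—whose extra preplace $q$ supplies the cardinality witness—plays the role of $\precond{u}$ in \refdf{reachablen}.
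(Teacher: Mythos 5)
Your proof is correct and takes essentially the same route as the paper: both inclusions are reduced to \refthm{rnfreeequalssa} together with the definitions, and the strictness and incomparability witnesses are the same nets from \reffig{counterexamplesal_bfc}, used in the same roles. The only differences are ones of detail---you spell out the contrapositive argument for $\SA(B)\subseteq\BFC$ (correctly orienting the enabled preset as $\precond{t}$ and the branched one as $\precond{u}$ in \refdf{reachablen}), which the paper dismisses as following ``directly from the definitions'', and you certify the middle net's exclusion from $\SA(B)$ via its partially reachable \structuralN{} and the characterisation theorem rather than by exhibiting the failure pair $\fpair{\varepsilon,\{a,b\}}$ as the paper does.
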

\begin{proof}
The first inclusion follows because a partially reachable $\structuralN$ is
surely an $\structuralN$, and also the second inclusion follows
directly from the definitions.
  The four inequalities follow from the examples in
  \reffig{counterexampleefc_sab}. The first net is unmarked and thus trivially
  in $\SA(B)$. 
The second ones symmetrically asynchronous implementation has the additional
failure $\fpair{\varepsilon, \{a, b\}}$ and hence this net is not in $\SA(B)$.
\end{proof}
 
\begin{ICE2008}
\figcounterexamplefclike
\end{ICE2008}

In \reffig{symmetricasynorder} the relations between
our semantically defined net class $\SA(B)$, the structurally defined classes
$\FC$, $\EFC$, and the more behaviourally defined class $\BFC$ are summarised.
These relations may be interpreted as follows.

Starting at the top of the diagram, free choice nets are
characterised structurally, enforcing that for every place, a token therein can
choose freely (i.e.\ without inquiring about the existence of tokens in any
other places) which outgoing arc to take.
This property makes it possible to implement the system asynchronously. In
particular, the component which holds the information represented by a token
can choose arbitrarily when and into which of multiple asynchronous output
channels to forward said information, without further knowledge about the rest
of the system. As this decision is solely in the discretion of the sending
component and not based upon any knowledge of the rest of the system, no
synchronisation with other components is necessary.

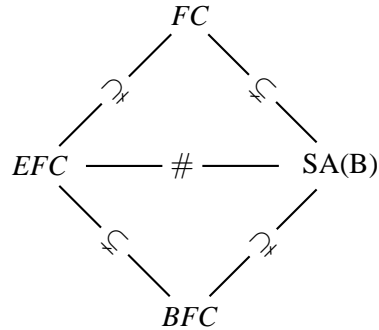
\begin{figure}[t]
  \begin{center}
    \begin{pspicture}(6,5.2)
      \rput(3,5){\ovalnode*{fc}{\FC}}
      \rput(1,3){\ovalnode*{efc}{\EFC}}
      \rput(5,3){\ovalnode*{sab}{SA(B)}}
      \rput(3,1){\ovalnode*{gfc}{\BFC}}

      \ncline{efc}{fc}\lput*{:U}{$\supsetneq$}
      \ncline{fc}{sab}\lput*{:U}{$\subsetneq$}
      \ncline{efc}{gfc}\lput*{:U}{$\subsetneq$}
      \ncline{gfc}{sab}\lput*{:U}{$\supsetneq$}
      \ncline{efc}{sab}\lput*{:U}{$\#$}
    \end{pspicture}
  \end{center}
\vspace{-2.5em}
  \caption{Overview of free-choice-like net classes}
  \label{fig-symmetricasynorder}
\vspace{.5em}
\end{figure}

The difference between $\SA(B)$ and $\FC$ is that in $\SA(B)$ the quantification
over the places is dropped, making the requirement more straightforward:
Every token can choose freely which outgoing arc to follow. Thus, $\SA(B)$
allows for non-free-choice structures as long as these never receive any
tokens.

This also explains why $\BFC$ includes $\SA(B)$. Since $\SA(B)$ guarantees that
all transitions of a problematic structure are never enabled, transitions in
such structures are never enabled while others are disabled.

The incomparability between the left and the right side of the diagram stems
from the conceptual allowance of slight transformations of the net before
evaluating whether it is free choice or not.
Extended free choice nets and behavioural free choice nets were
proposed as nets that are easily seen to be behaviourally equivalent to
free choice nets, and hence share some of their desirable properties:
in \cite{bes87,best83freesimple} constructions can be found to turn any
extended free choice net into an equivalent free choice net, and any
behavioural free choice net into an extended free choice net.%
{\footnotemark}
Applied on the last two nets in \reffig{counterexamplesal_bfc} these
constructions yield:
 
\begin{figure}[h]
\hfill
    \begin{petrinet}(4,3.4)
      \P (0,3):p1;
      \t (2,3):t3:$\tau$;
      \P (4,3):p2;
      \t (0,1):t1:a;
      \p (2,1):p3;
      \t (4,1):t2:b;

      \a p1->t3;
      \a p2->t3;
      \a t3->p3;
      \a p3->t1;
      \a p3->t2;
    \end{petrinet}
\hfill
\begin{petrinet}(6,4)
  \Q (2,3):p1:p;
  \Q (4,3):p2:q;
  \t (1,1):t1:a;
  \t (3,1):t2:b;
  \t (5,1):t3:c;

  \a p1->t2;
  \a p2->t2;
  \A p1->t1;
  \A t1->p1;
  \A p2->t1;
  \A t1->p2;
  \A p1->t3;
  \A t3->p1;
  \A p2->t3;
  \A t3->p2;
\end{petrinet}
\vspace{-1ex}
\hfill\mbox{}
\caption{Transformed nets from \reffig{counterexamplesal_bfc}}
\label{fig-transformed nets}
\vspace{-2pt}
\end{figure}

For the second net of \reffig{counterexamplesal_bfc}, a
$\tau$-transition is introduced, which collects both tokens and then
marks a single postplace from which the two original transitions are
enabled. Hence the choice between the two transitions is centralised
in the newly introduced place and thus free again. In the definition
of our symmetrically asynchronous implementation $\SI$, we do not
allow any insertion of such ``helping'' $\tau$-transitions, as it
seems unclear to us how much computing power should be allowed in
possibly larger networks of such transitions. This becomes especially
problematic if these networks somehow track part of the global status
of the net inside themselves and thus make quite informed decisions
about what outgoing transition to enable. 

\footnotetext{\label{BFC}
In \cite{bes87,best83freesimple} the nature of the equivalence between
the original and transformed net is not precisely specified. However, it
can be argued that whereas the transformation from \EFC-nets to
\FC-nets preserves branching time as well as causality, the
transformation from \BFC-nets to \EFC-nets preserves branching time
only: the third net of \reffig{counterexamplesal_bfc} is interleaving
bisimulation equivalent with its \EFC-counterpart in
\reffig{transformed nets}, but whereas the original net can perform
the transitions $a$ and $c$ concurrently (in one step), the transformed
net cannot.
}

\section{Asymmetric Asynchrony}\label{sec-aa}

As seen in the previous section, the class of symmetrically asynchronous nets
is quite small. It precludes the implementation of many real-world behaviours,
like waiting for one of multiple inputs to become readable, a Petri net
representation of which will always include non free-choice structures.

Therefore we propose a less strict definition of asynchrony such that
actions may depend synchronously on a single predetermined condition. In a
hardware implementation the places which earlier could always forward a
token into some silent transitions must now wait until they receive 
an explicit token removal signal from their posttransitions.

To this end we introduce a static priority over the preplaces of each transition.
Every transition first removes the token from the most prioritised preplace and
then continues along decreasing priority. To formalise this behaviour in a
Petri net we insert a silent transition for each incoming arc of every
transition. These silent transitions are forced to execute in sequence by
newly introduced buffer places between them. In the final position of this
chain, the original visible transition is executed.
An example of this transformation is given in \reffig{asym-transform}.

\begin{figure}
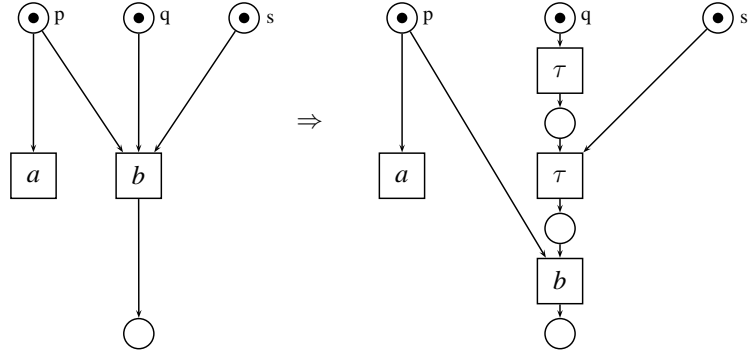

  \begin{center}
    \begin{petrinet}(14,7.5)
      \Q (1,7):p1:p;
      \Q (3,7):p2:q;
      \Q (5,7):p3:s;
      \p (3,1):p4;
      \t (1,4):t1:a;
      \t (3,4):t2:b;
      \a p1->t1;
      \a p1->t2;
      \a p2->t2;
      \a p3->t2;
      \a t2->p4;

      \rput(6.25,5){\Large $\Rightarrow$}

      \Q (8,7):p1c:p;
      \Q (11,7):p2c:q;
      \Q (14,7):p3c:s;
      \t (11,6):p2ctt:$\tau$;
      \p (11,5):p2ctp;
      \t (8,4):t1c:a;
      \t (11,4):t2c:$\tau$;
      \p (11,3):p3ctp;
      \t (11,2):t3ctt:b;
      \p (11,1):p4b;
      \a p1c->t1c;
      \a p2c->p2ctt; \a p2ctt->p2ctp; \a p2ctp->t2c;
      \a p1c->t3ctt;
      \a t2c->p3ctp;
      \a p3ctp->t3ctt;
      \a t3ctt->p4b;
      \a p3c->t2c;
    \end{petrinet}
  \end{center}
  \caption{Transformation to asymmetric asynchrony;
    $g$ such that $p <_g^b s <_g^b q$.
  }
  \label{fig-asym-transform}
\vspace{1em}
\end{figure}

\begin{ICE2008-TR}
\vspace{2em}
\pagebreak
\end{ICE2008-TR}

\begin{definition}{ai}{
  Let $N = (S, O, \varnothing, F, M_0)$ be a plain net.
  }
  Let $g \subseteq (S \times O) \times (S \times O)$ be a relation on $F \cap
  (S \times O)$ such that for each $t \in O$
\begin{ICE2008-TR}we have that\end{ICE2008-TR}
  $g \cap (\precond{t} \times \{t\})$ is a total order on
  $\precond{t} \times \{t\}$.
  Let $\leq_g^t$ be the total order on $\precond{t}$ given by
  $p \leq_g^t s$ iff $((p,t),(s,t)) \in g$.
  
  We write $\min_{g}^t$ for the $\leq_g^t$-minimal element of $\precond{t}$
  and $(s - 1)_g^t$ for the next place in $\precond{t}$ that is $\leq_g^t$-smaller than $s$.
  
  We define a set of silent transitions as $X := \{t_s \mid t \in O, s \in \precond{t}\}$.

  Let $h: X \rightarrow X \cup O$ be
  the function
  \begin{equation*}
    h(t_s) = \begin{cases}
      t & \text{ iff } s = \min_g^t\\
      t_s & \text{ otherwise }
    \end{cases}
\begin{ICE2008}
\vspace{2em}
\pagebreak
\end{ICE2008}%
  \end{equation*}
  The \defitem{asymmetrically asynchronous implementation with respect to
  $g$} of $N$ is defined as the net
  $\AI_{g} := (S \cup S^\tau, O, U', F', M_0)$ with\vspace{-1ex}
  \begin{align*}
    S^\tau :={\,} & \{s_t \mid t \in O,~ s \in \precond{t},~ s \ne \min_{g}^t\}
    \trail{,}\\
    U' :={\,} & h(X) \setminus O =
      \{t_s \mid t \in O,~ s \in \precond{t},~ s \ne \min_{g}^t\} \trail{and}\\
    F' :={\,} & F \cap (O \times S)\\
    & \cup \{\mathrlap{(s, h(t_s))}\hphantom{(p_t, h(t_s))} \mid
      t \in O,~ s \in \precond{t}\} \\
    & \cup \{\mathrlap{(t_s, s_t)}\hphantom{(p_t, h(t_s))} \mid
      t \in O,~ s \in \precond{t}, ~s \ne \min_{g}^t\} \\
    & \cup \{\mathrlap{(s_t, h(t_p))}\hphantom{(p_t, h(t_s))} \mid
      t \in O,~ s \in \precond{t}, ~s \ne \min_{g}^t, ~p = (s - 1)_g^t\}
    \trail{.}
  \end{align*}
\end{definition}\vspace{-1ex}

As before, we are interested in the relationship between nets
and their possible implementations. The definition of asymmetric
asynchrony however allows different implementations for the
same net.
\begin{ICE2008-TR}
We show that the lemmas and propositions from the previous sections
carry over for all possible implementations.
As in the earlier sections, we start
by showing how the implementation simulates the original.

\begin{lemmai}{aisim}{
  Let $N = (S, O, \varnothing, F, M_0)$ be a plain net, $G\subseteq
  O$, $\sigma\in O^*$, $M_1, M_2 \subseteq S$ and $g$ as above.
  }
  \item If $M_1 \production{G}_{N} M_2$ then
  $M_1 \production{\tau}^*_{\AI_{g}(N)}\production{G}_{\AI_{g}(N)} M_2$.
  \item If $M_1 \Production{\sigma}_{N} M_2$ then $M_1 \Production{\sigma}_{\AI_{g}(N)} M_2$.
\end{lemmai}
\begin{proof}
  Let $\AI_{g}(N) = (S \cup S^\tau, O, U', F', M_0)$.
  Assume $M_1 \production{G}_N M_2$. Then, due to the restrictions on $g$,
  there exists a sequence of pairwise disjoint nonempty sets $G_1, G_2, \ldots,
  G_n \subseteq U'$ such that
  $\forall t \inp G,\linebreak s \inp \precond{t},~ s \ne \min_g^t~
   \exists i,~ 1 \leq i \leq n.\, t_s \inp G_i$ and
  $\forall t_p \inp G_i,~ t_q \inp G_j,~ i < j.\, (t, p) \geq_g^t (t, q)$.
  By the construction of $\AI_{g}(N)$ then
  $$
  M_1 ~[G_1\rangle_{\AI_{g}(N)}
      ~[G_2\rangle_{\AI_{g}(N)}
      \cdots
      ~[G_n\rangle_{\AI_{g}(N)}
      ~[G\rangle_{\AI_{g}(N)}
      ~ M_2
  \trail{.}$$
  All non-final steps of that execution can be split into a sequence of
  singletons.
\\
  The second statement follows by a straightforward induction on the
  length of $\sigma$.
\end{proof}

As for the symmetrical case, we wish to push back all tokens on
$S^\tau$ in a marking to their roots in $S$. This time, however,
multiple silent transitions need to be undone.

\begin{definition}{taucompl}{
  Let $N = (S, O, \varnothing, F, M_0)$ be a plain net and
  let $\AI_{g}(N) = (S \cup S^\tau, O, U', F', M_0)$.
  }
  $\tbai: \powerset{S \cup S^\tau} \into \powerset{S}$ is the function
  defined by
  \begin{align*}
    \tbai(M) := (M \cap S)
    \cup \left\{s \mid \exists t \inp O.\, s \inp \precond{t} \wedge \exists
    p_t \inp M \cap S^\tau.\, p \leq_g^t s\right\}\trail{.}
  \end{align*}
\end{definition}

Given a reachable marking $M$ of the implementation, $\tbai$ will
produce a reachable marking of the original net, which by
\reflem{aisim} is also a reachable marking of the implementation, from
which $M$ could have arisen by firing some of the added unobservable
transitions.
Note that $\tbai(\iprecond{t}) = \precond{t}$\ for any $t \inp O$.
The application of $\tbai$ is only meaningful for markings where
no two elements of $S^\tau$ have originated from the same transition.
However, implementations of contact-free nets produce only reachable
markings which fulfil this condition, as we will show below.

We now give the invariant predicate $\aivalidmarking$ that characterises the
markings of an implementation that can be obtained from a reachable
marking of the original net by firing some unobservable transitions.

\begin{definition}{aivalidmarking}{
  Let $N = (S, O, \varnothing, F, M_0)$ be a plain net and $\AI_{g}(N)
  = (S \cup S^\tau, O, U', F', M_0)$.
  }
  The predicate $\aivalidmarking \subseteq \powerset{S \cup S^\tau}$ is given
  by
  $$\aivalidmarking(M) ~~:\equivalent~~
  \tbai(M) \in [M_0\rangle_N \wedge \forall p, q \in M.
  \tbai(\{p\}) \cap \tbai(\{q\}) \ne \varnothing \implies p = q
  \trail.$$
\end{definition}

Note that $\aivalidmarking(M)$ implies $\aidistance(M) \in \IN$.

\begin{lemma}{aivalidmarking}{
  Let $N$ and $\AI_{g}(N)$ be as above and $M \subseteq S\cup S^\tau$,
  with $M$ finite.
  }
  Then $\forall p, q \inp M.
  \tbai(\{p\}) \cap \tbai(\{q\}) \ne \varnothing \implies p = q$ iff
  $\tbai(M)\production{\tau}_{\AI_g}^* M$.
\end{lemma}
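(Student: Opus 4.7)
The proof mirrors \reflem{fsivalidmarking}, adapted to the chain-like unfolding characteristic of the asymmetric implementation. Both directions are treated separately.

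For ``$\Leftarrow$'', assume $\tbai(M) \production{\tau}^*_{\AI_g(N)} M$ and proceed by induction on the length of the firing sequence. For length zero, $M = \tbai(M) \subseteq S$, so $\tbai(\{p\}) = \{p\}$ for every $p \in M$ and the disjointness condition is immediate. For the inductive step, suppose some intermediate marking $M'$ satisfies the disjointness condition and $M' \production{\tau}_{\AI_g(N)} M$ by firing a silent transition $t_s$ with $t \in O$, $s \in \precond{t}$, $s \ne \min_g^t$. By \refdf{ai}, this firing consumes $s$ together with (unless $s$ is the $\leq_g^t$-maximum of $\precond{t}$) the buffer token $r_t$, where $r$ is the next place in $\precond{t}$ that is $\leq_g^t$-larger than $s$; the firing produces $s_t$. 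A direct inspection of $\tbai$ yields the key identity $\tbai(\{s_t\}) = \{q \in \precond{t} : s \leq_g^t q\} = \{s\} \cup \tbai(\{r_t\})$ in the non-maximum case (and $\{s\}$ in the maximum case). This implies $\tbai(M) = \tbai(M')$, and the disjointness condition for $M$ follows from that for $M'$: any overlap between $\tbai(\{s_t\})$ and $\tbai(\{q\})$ for some $q \in M \setminus \{s_t\}$ would, via the above identity, force an overlap in $M'$ between $\tbai(\{q\})$ and either $\tbai(\{s\})$ or $\tbai(\{r_t\})$, contradicting the inductive hypothesis.

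For ``$\Rightarrow$'', assume the disjointness condition on $M$. Observe first that for each $t \in O$ the set $M_t := M \cap \{s_t : s \in \precond{t} \setminus \{\min_g^t\}\}$ has at most one element: any two distinct $s_t, s'_t \in M_t$ with, say, $s <_g^t s'$ would yield $s' \in \tbai(\{s_t\}) \cap \tbai(\{s'_t\})$, contradicting disjointness. For each $t$ with $M_t = \{s_t\}$, let $C_t$ be the finite sequence of silent transitions $t_q$ ranging over $\{q \in \precond{t} : s \leq_g^t q,\ q \ne \min_g^t\}$, taken in descending $\leq_g^t$-order. Firing $C_t$ starting from $\tbai(M)$ is stepwise enabled, since each $t_q$ requires only $q$ (which belongs to $\tbai(\{s_t\}) \subseteq \tbai(M)$) and possibly the buffer just produced by the preceding firing in $C_t$; its net effect is to remove exactly $\tbai(\{s_t\}) = \{q \in \precond{t} : s \leq_g^t q\}$ from the current marking and to add $s_t$. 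Concatenate the chains $C_t$ across all $t$ with $M_t \ne \varnothing$ in any order: by the disjointness condition the preplaces consumed by distinct chains are pairwise disjoint, so the combined sequence is firable from $\tbai(M)$ and transforms it into $M$.

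The main obstacle is the careful bookkeeping in the ``$\Leftarrow$'' direction, where one must verify that the $\tbai$-image of the newly produced buffer token exactly accounts for the $\tbai$-images of the tokens removed by the firing. Once this accounting is established, both the invariance $\tbai(M) = \tbai(M')$ and the preservation of the disjointness condition follow cleanly, and the ``$\Rightarrow$'' direction then reduces to an explicit construction of the firing chains.
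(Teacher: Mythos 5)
Your proof is correct and follows essentially the same route as the paper's: the paper dispatches the ``if'' direction as immediate from the construction (your induction merely spells this out), and for ``only if'' it fires the same chains of silent transitions in descending $\leq_g^t$-order, grouping them into level-sets $G_n,\dots,G_1$ across all transitions rather than serialising one chain $C_t$ per transition as you do. The identity $\tbai(\{s_t\})=\{s\}\cup\tbai(\{r_t\})$ and the observation that at most one buffer token per transition can occur in $M$ are exactly the facts the paper relies on implicitly.
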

\begin{proof} 
  Given that $\tbai(M)\subseteq S$, ``if'' follows directly from the construction
  of $\AI_{g}(N)$.\\
  For ``only if'', assume $\forall p, q \inp M. \tbai(p) \cap \tbai(q) \ne \varnothing \implies p = q$.
  Let $G_1 := \{t_s \mid s_t \inp M\}$ and
  $G_{i+1} := \{t_p \mid \exists t_q \in G_i.\, q=(p-1)_g^t\}$ for $i>1$.
  The assumption guarantees that all $G_i$ are disjoint.
  Since $M$ is finite, and $\precond{t}$ is finite for all $t \inp O$,
  there must be an $n\geq 0$ such that $G_i=\emptyset$ iff $i>n$.
  Now $\tbai(M) ~[G_n\rangle_{\AI_{g}(N)}
                 ~[G_{n-1}\rangle_{\AI_{g}(N)}
	  \cdots ~[G_1\rangle_{\AI_{g}(N)}
	         ~M$.
\end{proof}

\begin{lemmai}{aiworking}{
  Let $N$ and $\AI_{g}(N)$ be as above, $G\subseteq O$, $\sigma\in
  O^*$ and $M,M' \subseteq S \cup S^\tau$.
  }
    \item\label{aiinvarstart} $\aivalidmarking(M_0)$.
    \item\label{aiimplstep} If $\aivalidmarking(M) \wedge M
      \production{G}_{\AI_g(N)} M'$ then
      $\tbai(M) \production{G}_N \tbai(M') \wedge \aivalidmarking(M')$.
    \item\label{aiimpltaustep} If $\aivalidmarking(M) \wedge M \production{\tau}_{\AI_g(N)} M'$
    then $\fsidistance(M) > \fsidistance(M') \wedge \tbai(M) = \tbai(M') \wedge \aivalidmarking(M')$.
    \item\label{aiimpllongstep} If $M_0 \Production{\sigma}_{\AI_g(N)}
      M'$ then $M_0 \Production{\sigma}_N \tbai(M') \wedge \aivalidmarking(M')$.
\end{lemmai}
\begin{proof}
  \refitem{aiinvarstart}: $M_0 \in [M_0\rangle_N$ and $\forall s \in M_0 \subseteq S. \tbai(\{s\}) = \{s\}$.

  \refitem{aiimplstep}:
    Suppose $\aivalidmarking(M)$ and $M \production{G}_{AI_{g}} M'$ with
    $G \subseteq O$. So $\tbai(M)$ is a reachable marking of $N$.

    Let $t \in G$. Since $t$ is enabled in $M$, we have $\iprecond{t} \subseteq
    M$ and hence $\tbai(\iprecond{t}) \subseteq \tbai(M)$. By construction,
    $\tbai(\iprecond{t}) = \precond{t}$. Given that $N$ is contact-free it
    follows that $t$ is enabled in $\tbai(M)$.

    Now let $t, u \in G$ with $t \ne u$. If $s \in \precond{t} \cap
    \precond{u}$ then $s \in \tbai(\iprecond{t}) \cap \tbai(\iprecond{u})$
    but since $t, u \in G$, $\iprecond{t} \cap \iprecond{u} =
    \varnothing$ and $\iprecond{t} \cup \iprecond{u} \subseteq M$,
    contradicting $\aivalidmarking(M)$. Hence $\precond{t} \cap \precond{u}
    = \varnothing$. Given that $\precond{t} \cup \precond{u} \subseteq
    \tbai(M)$ and $N$ is contact-free, it follows that also
    $\postcond{t} \cap \postcond{u} = \varnothing$ and hence $t$ and
    $u$ are independent.

    We will now show that $\displaystyle
    \left(\tbai(M)\setminus\bigcup_{t\in G} \precond{t}\right) \cup
        \bigcup_{t\in G}\postcond{t} = \tbai(M')$.

    By \refdf{taucompl} we have $\tbai(M) = \bigcup_{s\in M}
    \tbai(\{s\})$ for any $M \subseteq S\cup S^\tau$.
    Moreover, when $M$ satisfies $\aivalidmarking(M)$ this union is disjoint.
    In that case, for any set $Y \subseteq M$ we have
    $\tbai(M\setminus Y)=\tbai(M)\setminus\tbai(Y)$.
    \begin{align*}
      M'= (M \setminus \{s \mid s \in \iprecond{t},~ t \in G\}) \cup \{s \mid s \in
      \ipostcond{t},~ t \in G\} \trail.
    \end{align*}
    Therefore
    \begin{align*}
      \tbai(M')
      ={}& (\tbai(M) \setminus \tbai(\{s \mid s \in \iprecond{t},~ t \in G\}))
      \cup \tbai(\{s \mid s \in \ipostcond{t},~ t \in G\})\\
      ={}& (\tbai(M) \setminus
      \{s \mid s \inp \precond{t},~ t \inp G\})
      \cup \{s \mid s \inp \postcond{t},~ t \inp G\}\trail.
    \end{align*}

    Next we establish $\aivalidmarking(M')$. As in the proof
    of \reflem{fsiworking}\refitem{fsiimplstep} we may
    assume that $G$ is a singleton set $\{t\}$.
    Above we have shown that $\tbai(M')\in[M_0\rangle_N$.
    We still need to prove that $\tbai(\{p\}) \cap \tbai(\{q\}) \neq \varnothing
    \implies p = q$ for all $p, q \in M'$. Assume the contrary, i.e.\
    there are $p, q \inp M'$ with $\tbai(\{p\}) \cap \tbai(\{q\}) \ne \varnothing$
    but $p\ne q$.
    Since $\aivalidmarking(M)$, at least one of $p$ and $q$---say $p$---must
    not be present in $M$. Then $p \in \ipostcond{t} = \postcond{t} \subseteq
    S$. As $\tbai(\{q\}) \cap \tbai(\{p\}) \ne \varnothing$,
    $\tbai(\{p\}) = \{p\}$ and $q \ne p$ it must be that $q \inp S^\tau$. Hence
    $q \mathbin{\notin} \ipostcond{t}$, so $q \inp M$, and $p \in \tbai(\{q\})
    \subseteq \tbai(M)$. As shown above, $t$ is enabled in $\tbai(M)$. By the
    contact-freeness of $N$, $(\tbai(M) \setminus \precond{t}) \cap
    \postcond{t} = \varnothing$, so $p \inp \precond{t}$.
    Since $p\mathbin{\notin}M$, there exists a place $r_t \in
    (\iprecond{t} \cap S^\tau) \subseteq M$ with $p \in \tbai(\{r_t\})$.
    By construction, $\ipostcond{t} \cap S^\tau = \varnothing$, so we
    have $r_t \notin M'$, hence $q \ne r_t$.
    However, $p \in \tbai(\{q\}) \cap \tbai(\{r_t\})$,
    contradicting $\aivalidmarking(M)$.

  \refitem{aiimpltaustep}:
    Let $t_s \in U'$ such that $M [\{t_s\}\rangle_{\AI_{g}(N)} M'$.
    Then $\iprecond{t_s} \cap S = \{s\}$ and $\postcond{s}\ne\varnothing$.
    As $\ipostcond{t_s} \cap S = \varnothing$, no element of
    $\ipostcond{t_s}$ contributes to $\fsidistance(M')$ and
    hence $\fsidistance(M') = \fsidistance(M) - 1$.

    If $\iprecond{t_s} \subseteq S$ then
    $
      \tbai(M')
      = \tbai((M \setminus \iprecond{t_s}) \cup \ipostcond{t_s})
      = \tbai((M \setminus \{s\}) \cup \{s_t\})
      = \tbai(M)
    $.
    Otherwise let $p \in S$ such that $p_t \in \iprecond{t_s}$.
    Then
    $
      \tbai(M')
      = \tbai((M \setminus \iprecond{t_s}) \cup \ipostcond{t_s})
      = \tbai((M \setminus \{s, p_t\}) \cup \{s_t\})
      = \tbai(M)
    $.

    Moreover, $\aivalidmarking(M') \equivalent \aivalidmarking(M)$.

  \refitem{aiimpllongstep}: As in \reflem{fsiworking}.
\end{proof}

\begin{propositioni}{aionesafe}{
  Let $N$ and $\AI_{g}(N)$ be as before and $M \subseteq S \cup S^\tau$.
  }
  \item $M \in [M_0\rangle_{\AI_{g}(N)}$ iff $\aivalidmarking(M)$.
\label{ai1}
    \item $\AI_{g}(N)$ is contact-free.
\label{aiimplstep2}
    \item $\AI_{g}(N)$ is a $\tau$-net.
\label{ai3}
\end{propositioni}
\begin{proof}
  \refitem{ai1}: Identical to the proof of \refpr{fsionesafe}\refitem{fsi1},
    using the lemmas of Section \ref{sec-aa}.

  \refitem{aiimplstep2}: Let $M\in[M_0\rangle_{\AI_g(N)}$. Then
  $\aivalidmarking(M)$, and hence $\tbai(M) \in [M_0\rangle_N$.

 Consider any $t \in O$ with $\iprecond{t}\subseteq M$. Assume $(M
    \setminus \iprecond{t}) \cap \ipostcond{t} \ne \varnothing$. Since
    $\ipostcond{t} = \postcond{t} \subseteq S$ let $p \in S$ be such
    that $p \in M \cap \ipostcond{t}$ and $p \not\in \iprecond{t}$.
    As $N$ is contact-free we have $(\tbai(M) \setminus \precond{t})
    \cap \postcond{t}=\emptyset$, so since $p \in \tbai(M) \cap
    \postcond{t}$ it must be that $p \inp \precond{t}$.
    Hence, using that $p \not\in \iprecond{t}$ there must be an
    $s_t\inp\iprecond{t} \subseteq M$ with $s \leq_g^t p$,
    and hence $p\inp\tbai(\{s_t\})$. We have $p \neq s_t$
    yet $p\inp\tbai(\{p\})$, violating $\aivalidmarking(M)$.

    Now consider any $t_p \in U'$ with $\iprecond{t_p}\subseteq M$.
    As $p\inp \iprecond{t_p}$ and $\ipostcond{t_p} = \{p_t\}$
    we have that $(M \setminus \iprecond{t_p}) \cap \ipostcond{t_p} \ne \varnothing$
    only if $p \in M \wedge p_t \in M$.  However, $\tbai(\{p\}) = \{p\}
    \subseteq \tbai(\{p_t\})$ which would violate $\aivalidmarking(M)$.

    \refitem{ai3}: Identical to the proof of \refpr{fsionesafe}\refitem{fsi3}.
\end{proof}

\begin{proposition}{aidivfree}{Let $N$ be a plain net and $g$ as before.
   Then $\AI_{g}(N)$ is divergence free.}
\vspace{-2em}
\end{proposition}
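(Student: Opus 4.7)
The plan is to derive this directly from part \refitem{aiimpltaustep} of \reflem{aiworking}, in exactly the same way that \refpr{sidivfree} was obtained from the corresponding clause of \reflem{siworking}.

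First I would recall that by \reflem{aiworking}\refitem{aiinvarstart} the predicate $\aivalidmarking$ holds for $M_0$, and that $\aivalidmarking$ is preserved by both observable steps (part \refitem{aiimplstep}) and silent steps (part \refitem{aiimpltaustep}). Hence $\aivalidmarking(M)$ holds for every reachable marking $M \in [M_0\rangle_{\AI_g(N)}$, which in particular implies $\fsidistance(M) \in \IN$ (a finite natural number, as noted after \refdf{aivalidmarking}).

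Next, suppose for contradiction that there were an infinite $\tau$-chain $M_1 \production{\tau} M_2 \production{\tau} \cdots$ with $M_1 \in [M_0\rangle_{\AI_g(N)}$. By the observation just made, $\aivalidmarking(M_i)$ for every $i$, and so \reflem{aiworking}\refitem{aiimpltaustep} yields $\fsidistance(M_i) > \fsidistance(M_{i+1})$ for every $i$. This produces a strictly decreasing infinite sequence of natural numbers, which is impossible.

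There is no real obstacle here: all the work has been done in \reflem{aiworking}. The only thing to verify is that $\fsidistance$ is indeed a well-founded measure on reachable markings, which follows from finiteness of reachable markings (guaranteed by our standing finiteness assumptions on $\tau$-nets together with \refpr{aionesafe}\refitem{ai3}) together with the $\aivalidmarking$-invariance established above.
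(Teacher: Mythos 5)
Your proof is correct and is exactly the argument the paper intends: the paper's proof of \refpr{aidivfree} consists of the single line ``This follows immediately from \reflem{aiworking}\refitem{aiimpltaustep}'', and you have simply spelled out the strictly-decreasing-measure argument (invariance of $\aivalidmarking$ on reachable markings, finiteness of $\fsidistance$, no infinite descent in $\IN$) that this citation compresses. Nothing is missing.
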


\begin{proof}
This follows immediately from \reflem{aiworking}\refitem{aiimpltaustep}.
\pagebreak[2]
\end{proof}

\begin{proposition}{aiorigfailsub}{Let $N$ and $\AI_{g}(N)$ as before.
  Then $\failureset(N) \subseteq \failureset(\AI_{g}(N))$.}
\vspace{-2em}
\end{proposition}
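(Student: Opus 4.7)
The plan is to follow exactly the same strategy as in the proofs of \refpr{fsiorigfailsub} and \refpr{siorigfailsub}; since all the supporting machinery for $\AI_g$ has been assembled in parallel to its $\FSI$ and $\SI$ analogues (the invariant $\gamma$, the undo-map $\tbai$, the simulation \reflem{aisim}, the working lemma \reflem{aiworking}, and divergence-freeness \refpr{aidivfree}), essentially the same argument transports.

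Starting from an arbitrary failure pair $\fpair{\sigma, X} \inp \failureset(N)$, I would first apply \reflem{determ} to obtain the unique marking $M_1$ of $N$ with $M_0 \Production{\sigma}_N M_1$ witnessing this failure, so $M_1 \arrownot\production{\{t\}}_N$ for every $t \inp X$. I would then lift the run to the implementation via \reflem{aisim}, obtaining $M_0 \Production{\sigma}_{\AI_g(N)} M_1$, and observe that $\aivalidmarking(M_1)$ follows from \reflem{aiworking}\refitem{aiimpllongstep}, with $\tbai(M_1) = M_1$ since $M_1 \subseteq S$.

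Next I would invoke \refpr{aidivfree} to fire silent transitions from $M_1$ to exhaustion, yielding a $\tau$-dead marking $M_2$ with $M_1 \production{\tau}^*_{\AI_g(N)} M_2$. Iterated application of \reflem{aiworking}\refitem{aiimpltaustep} preserves both $\tbai(M_2) = \tbai(M_1) = M_1$ and $\aivalidmarking(M_2)$, so $\fpair{\sigma,X}$ will be realised in $\AI_g(N)$ by $M_2$ provided no $t \inp X$ is enabled there. To close the argument I would suppose for contradiction that $M_2 \production{\{t\}}_{\AI_g(N)} M_3$ for some $t \inp X$; then \reflem{aiworking}\refitem{aiimplstep} gives $M_1 = \tbai(M_2) \production{\{t\}}_N \tbai(M_3)$, contradicting the choice of $M_1$ as refusing $t$.

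I do not anticipate any real obstacle: the three analogous propositions share exactly this skeleton and all the delicate work lives in the preceding lemmas. The only conceptual point of care is that $\tbai$ is a set-to-set operation rather than element-to-element (because a single token in $S^\tau$ represents, via the priority order $g$, an entire $\leq_g^t$-downward-closed set of preplaces of $t$), but this has already been absorbed into the formulation and proof of \reflem{aiworking}, so the bookkeeping here is a direct translation of the $\FSI$/$\SI$ arguments.
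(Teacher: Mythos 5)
Your proposal is correct and follows exactly the same route as the paper, which simply declares this proof ``identical to that of Proposition~\ref{pr-fsiorigfailsub}, using the lemmas of Section~\ref{sec-aa}'': determinism of $N$, lifting via Lemma~\ref{lem-aisim}, exhausting silent transitions via Proposition~\ref{pr-aidivfree}, and deriving a contradiction from Lemma~\ref{lem-aiworking}. Your closing remark about $\tbai$ being a set-to-set map is exactly the right point of care, and it is indeed already absorbed into Lemma~\ref{lem-aiworking}.
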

\begin{proof}
  Identical to that of \refpr{fsiorigfailsub}, using the lemmas of
  Section \ref{sec-aa}.
\end{proof}

\end{ICE2008-TR}
We define a net to be \emph{asymmetrically asynchronous} if
any of the possible implementations simulates the net sufficiently.

\begin{definition}{aa}{}
  The class of \defitem{asymmetrically asynchronous nets respecting branching
  time equivalence} is
  defined as $\AA(B) := \{N \mid \exists g. \AI_{g}(N) \approx_{\failureset} N\}$.
\vspace{-3pt}
\end{definition}

As before,
\begin{ICE2008}
we have $\failureset(N) \subseteq \failureset(\AI_{g}(N))$ for any
plain net $N$ and any priority relation $g$ \cite{glabbeek08symmasymm}.
Additionally
\end{ICE2008}
we would like to obtain a semi-structural
characterisation of $\AA(B)$ in the spirit of
Theorems~\ref{thm-rcfreeequalsfsa} and~\ref{thm-rnfreeequalssa}.
Unfortunately we didn't succeed in this, but we obtained structural
upper and lower bounds for this net class.

\begin{definition}{reachablem}{}
  A net $N = (S, O, \varnothing, F, M_0)$ \defitem{has a left and right
  reachable \structuralM} iff
  $\exists t,u,v \inp O\;\exists p \inp \precond{t} \cap \precond{u}
  \linebreak[2]\; \exists q \inp
  \precond{u} \cap \precond{v}.\, t \ne u \wedge u \ne v \wedge p \ne q \wedge
  \exists M_1, M_2 \inp [M_0\rangle.
  \precond{t} \cup \precond{u} \subseteq M_1 \wedge
  \precond{v} \cup \precond{u} \subseteq M_2$.

  A net $N = (S, O, \varnothing, F, M_0)$ \defitem{has a left and right border
  reachable \structuralM} iff
  $\exists t,u,v \inp O\linebreak[2] \;\exists p \inp \precond{t} \cap \precond{u}
  \linebreak[2]\; \exists q \inp
  \precond{u} \cap \precond{v}.\, t \ne u \wedge u \ne v \wedge p \ne q \wedge
  \exists M_1, M_2 \inp [M_0\rangle.
  \precond{t} \subseteq M_1 \wedge
  \precond{v} \subseteq M_2$.
\end{definition}

\def\figureaahbnolrm{
\begin{figure}
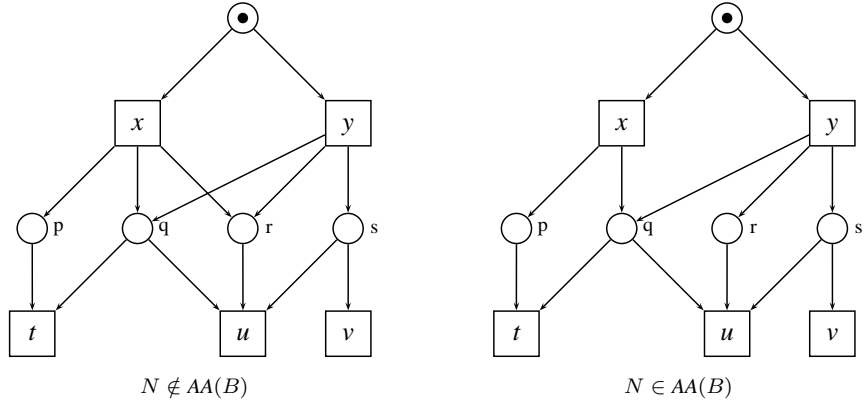

$$\begin{array}{c@{\qquad}c}
\begin{petrinet}(8,7.3)
  \q (1,3):p:p;
  \q (3,3):q:q;
  \q (5,3):r:r;
  \q (7,3):s:s;
  \P (5,7):i;

  \t (1,1):t:t;
  \t (5,1):u:u;
  \t (7,1):v:v;
  \t (3,5):x:x;
  \t (7,5):y:y;
  
  \a x->p; \a x->q; \a x->r;
  \a y->q; \a y->r; \a y->s;
  \a p->t; \a q->t;
  \a q->u; \a r->u; \a s->u;
  \a s->v;
  \a i->x; \a i->y;
\end{petrinet}
&
\begin{petrinet}(8,7.3)
  \q (1,3):p:p;
  \q (3,3):q:q;
  \q (5,3):r:r;
  \q (7,3):s:s;
  \P (5,7):i;

  \t (1,1):t:t;
  \t (5,1):u:u;
  \t (7,1):v:v;
  \t (3,5):x:x;
  \t (7,5):y:y;
  
  \a x->p; \a x->q;
  \a y->q; \a y->r; \a y->s;
  \a p->t; \a q->t;
  \a q->u; \a r->u; \a s->u;
  \a s->v;
  \a i->x; \a i->y;
\end{petrinet}\\[-1em]
\alignedcaption[N \notin & \AA(B)] &
\alignedcaption[N \in & \AA(B)]
\end{array}$$
\vspace{-1ex}
\caption{Nets which have a left and right border reachable \structuralM, but no left and right reachable \structuralM}
\label{fig-counterexampleaahb_nolrm}
\label{fig-counterexamplenolrm_aahb}
\end{figure}
}

\begin{ICE2008}
\figureaahbnolrm
\vspace{-0.6em}
\begin{theorem}{aahbimplieslrmfree}{}
  A plain net $N$ in $\AA(B)$ has no left and right reachable \structuralM.\\
  A plain net $N$ which has no left and right border
  reachable \structuralM{} is in $\AA(B)$.
\vspace{-1ex}
\end{theorem}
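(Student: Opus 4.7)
The plan is to prove both implications by contraposition, using the $\AI_g$-analogues of the machinery developed earlier in the paper (\reflem{aisim}, \reflem{aiworking}, \refpr{fsiorigfailsub}) together with \reflem{determ} on the $N$-side.

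For the first statement, starting from a left and right reachable $\structuralM$ with witnesses $t,u,v,p,q,M_1,M_2$ and any priority relation $g$, I would compare $p$ and $q$ under $\leq_g^u$ and, by symmetry, assume $p<_g^u q$, so $q\ne\min_g^u$. Pick $\sigma_2$ with $M_0\Production{\sigma_2}_N M_2$ and use \reflem{aisim} to reach $M_2$ in $\AI_g(N)$ via the same $\sigma_2$. From $M_2$, fire $u$'s silent chain in $\leq_g^u$-decreasing order from its maximum down to and including $u_q$; each step is enabled because $\precond{u}\subseteq M_2$ and each previous firing supplies the buffer demanded by the next. After $u_q$ fires, the token originally on $q$ sits in a buffer of $u$'s chain; since silent transitions only push tokens forward along a single fixed chain, no subsequent $\tau$-saturation can return that token to $q$ nor deliver it to any buffer of $v$'s chain. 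Hence $v$ is refused in the saturated marking, so $\fpair{\sigma_2,\{v\}}\inp\failureset(\AI_g(N))$, whereas \reflem{determ} makes $M_2$ the unique marking $N$ reaches via $\sigma_2$, where $v$ is enabled. The dual case $q<_g^u p$ disables $t$ from $M_1$, so $\AI_g(N)\not\approx_{\failureset}N$ for every $g$.

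For the second statement, I would call a place $p\inp\precond{u}$ \emph{critical for $u$} when there exist $t\ne u$ and a reachable $M$ with $p\inp\precond{t}$ and $\precond{t}\subseteq M$. Two distinct critical places of the same $\precond{u}$ would directly witness a left and right border reachable $\structuralM$, so the hypothesis forces each $\precond{u}$ to have at most one critical place. I define $g$ by placing this unique critical place (when it exists) at $\min_g^u$ and extending $\leq_g^u$ to any total order. The inclusion $\failureset(N)\subseteq\failureset(\AI_g(N))$ is the $\AI_g$-analogue of \refpr{fsiorigfailsub}; for the reverse, assume a pair $\fpair{\sigma,X}\inp\failureset(\AI_g(N))\setminus\failureset(N)$ witnessed by a $\tau$-saturated $M_1$. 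By \reflem{aiworking}, some $t\inp X$ satisfies $\precond{t}\subseteq\tbai(M_1)$ while being disabled in $M_1$. Pick $p$ to be the $\leq_g^t$-minimum element of $\precond{t}\setminus M_1$; this set is nonempty because $\tau$-saturation forces out the $\leq_g^t$-maximum of $\precond{t}$ (whose silent transition needs only that place), with disablement directly handling $|\precond{t}|=1$. Unpacking $p\inp\tbai(M_1)\setminus M_1$ yields $r_{t'}\inp M_1$ with $p\inp\precond{t'}$, $r\leq_g^{t'}p$ and $r\ne\min_g^{t'}$. Excluding $t'=t$ then gives $p\inp\precond{t}\cap\precond{t'}$ with $t\ne t'$ and $\precond{t}$ reachable, making $p$ critical for $\precond{t'}$; by construction $p=\min_g^{t'}$, which contradicts $r\leq_g^{t'}\min_g^{t'}$ combined with $r\ne\min_g^{t'}$.

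The main obstacle is the $t'\ne t$ exclusion in the sufficiency argument. The token from a preplace of $t$ may have been carried off by $t$'s own silent chain rather than by a competing transition, so I split on whether $r<_g^t p$ (where $\aivalidmarking(M_1)$ forbids $r$ and $r_t$ to coexist in $M_1$, while minimality of $p$ forces $r\inp M_1$) or $r=p$ (where $p_t\inp M_1$ together with the $\leq_g^t$-predecessor of $p$ being in $M_1$ either enables $t$ itself, contradicting disablement, or enables the next silent transition of $t$'s chain, contradicting $\tau$-saturation). Threading minimality of $p$, the invariant $\aivalidmarking$, and $\tau$-saturation together is the delicate combinatorial step of the proof.
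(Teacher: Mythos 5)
Your proposal is correct and follows essentially the same route as the paper's proofs: the same counterexample construction for necessity (firing $u$'s silent chain down to the shared place that is not $\min_g^u$, then using $\aivalidmarking$ to show the competing transition stays refused after saturation), and for sufficiency the same choice of $g$ placing the unique conflict-prone preplace at $\min_g^u$, with the same supporting lemmas. The only difference is organizational: where the paper excludes the ``token carried off by $t$'s own chain'' possibility via a global dichotomy (if every missing preplace were witnessed by $t$ itself, saturation would force $\iprecond{t}\subseteq M_1$), you localise it to the $\leq_g^t$-minimal missing preplace and split on $r<_g^tp$ versus $r=p$ --- both arguments rest on the same combination of $\aivalidmarking$ and $\tau$-saturation and are equally valid.
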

\begin{proof}
{
  See \cite{schicke08synchrony} or \cite{glabbeek08symmasymm}.
}
\vspace{-1ex}
\end{proof}
\end{ICE2008}

\begin{ICE2008-TR}
\begin{theorem}{aahbimplieslrmfree}{}
  A plain net $N$ in $\AA(B)$ has no left and right reachable \structuralM.
\end{theorem}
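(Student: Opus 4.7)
The plan is to prove the contrapositive: assume $N = (S,O,\varnothing,F,M_0)$ has a left and right reachable \structuralM, witnessed by transitions $t,u,v$, places $p\inp\precond{t}\cap\precond{u}$ and $q\inp\precond{u}\cap\precond{v}$ with $p\ne q$, and reachable markings $M_1\supseteq\precond{t}\cup\precond{u}$ and $M_2\supseteq\precond{v}\cup\precond{u}$. The goal is to show $\AI_g(N)\not\approx_\failureset N$ for \emph{every} priority relation $g$, which by definition gives $N\notin\AA(B)$.

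Fix such a $g$. Since $p,q$ are distinct elements of $\precond{u}$, the total order $\leq_g^u$ orients them in one of two ways. The definition of left and right reachable \structuralM{} is symmetric under the simultaneous interchange $t\leftrightarrow v$, $p\leftrightarrow q$, $M_1\leftrightarrow M_2$, so it suffices to handle the case $p<_g^u q$; the case $q<_g^u p$ follows by applying the same argument to the interchanged data (disabling $t$ from $M_1$ instead of $v$ from $M_2$). Assuming $p<_g^u q$, we have $q\ne\min_g^u$, hence $u_q$ is a silent transition of $\AI_g(N)$.

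The main step is to exhibit a failure pair of $\AI_g(N)$ absent from $\failureset(N)$. I choose $\sigma\inp O^*$ with $M_0\Production{\sigma}_N M_2$; by \reflem{aisim} then also $M_0\Production{\sigma}_{\AI_g(N)} M_2$. Enumerate $\{r\inp\precond{u}\mid q\leq_g^u r\}$ in strictly decreasing order as $r_0>_g^u r_1>_g^u\cdots>_g^u r_n=q$, and fire the silent chain $u_{r_0},u_{r_1},\ldots,u_{r_n}$ starting from $M_2$. The transition $u_{r_0}$ has preset $\{r_0\}$ (its being $\leq_g^u$-maximal in $\precond{u}$ means no buffer place feeds into it), while each subsequent $u_{r_i}$ has preset $\{r_i,(r_{i-1})_u\}$. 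Since $\precond{u}\subseteq M_2$, each $r_i$ still carries a token when $u_{r_i}$ fires, and the buffer token $(r_{i-1})_u$ was just produced by the previous firing. After this chain, place $q$ is empty. By \refpr{aidivfree} I extend with further silent firings to a $\tau$-stable marking $M'$; the structural observation that silent transitions of $\AI_g(N)$ produce only into buffer places in $S^\tau$ ensures $q\inp S$ remains empty in $M'$, so $v$ is refused there. This yields $\fpair{\sigma,\{v\}}\inp\failureset(\AI_g(N))$. Conversely, since $N$ is plain, \reflem{determ} makes $M_2$ the unique $\sigma$-successor of $M_0$ in $N$, and $\precond{v}\subseteq M_2$ enables $v$ there, giving $\fpair{\sigma,\{v\}}\notin\failureset(N)$.

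The main technical obstacle I anticipate is the inductive verification of the silent chain: one has to check that every $u_{r_i}$ is genuinely enabled at the moment it fires, tracking both the persistence of the original tokens $r_i\inp\precond{u}\subseteq M_2$ (untouched until the $i$-th step) and the chain-internal buffer $(r_{i-1})_u$ (present because of the immediately preceding step). Once that is in hand, the remaining piece --- that $v$ stays refused under arbitrary continuation by $\tau$-firings --- reduces to the immediate structural fact $\ipostcond{u_s}\subseteq S^\tau$ for every $u_s\inp U'$, so no silent firing anywhere in the net can repopulate $q\inp S$. Both points follow directly from \refdf{ai}, but they are the load-bearing steps of the argument.
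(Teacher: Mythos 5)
Your strategy is the paper's own, up to a symmetric relabelling: the paper assumes $p >_g^u q$ and disables $t$ from $M_1$, whereas you assume $p <_g^u q$ and disable $v$ from $M_2$; the construction of the silent chain $u_{r_0},\ldots,u_{r_n}$ that steals the token from the shared preplace, and the appeals to \reflem{determ}, \reflem{aisim} and \refpr{aidivfree}, all match the paper. The chain-enabledness verification you flag as the main obstacle is indeed routine and correct as you sketch it.

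The gap is in the final inference, ``$q$ remains empty in $M'$, so $v$ is refused there.'' In $\AI_g(N)$ the preset of the visible transition $v$ is \emph{not} $\precond{v}$: by \refdf{ai}, $\iprecond{v}=\{\min_g^v\}\cup\{s_v\}$, where $s$ is the second $\leq_g^v$-smallest preplace of $v$ (when $|\precond{v}|>1$). So unless $q=\min_g^v$ or $\precond{v}=\{q\}$, the emptiness of $q$ does not by itself prevent $\iprecond{v}\subseteq M'$: a marking containing both $\min_g^v$ and $s_v$ enables $v$ regardless of whether $q$ is marked, and since $g$ is an arbitrary priority relation you cannot assume $q$ is $v$'s minimum. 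What must additionally be ruled out is that $v$'s own silent chain has already carried a token down into $s_v$; this requires arguing that $v_q$ can never fire in your run because $q$'s token was captured by $u_q$, and propagating that blockage down $v$'s chain (equivalently: at $M_2\subseteq S$ no buffer place of $v$ is marked, and every producer of $s_v$ transitively needs a token in $q$). The paper closes exactly this hole with the $\tbai$/$\aivalidmarking$ machinery (\reflem{aiworking}, \refpr{aionesafe}): after the chain, the token formerly in the shared place is represented by some marked buffer $s_u$ with the shared place in $\tbai(\{s_u\})$, and the invariant $\aivalidmarking(M')$ forbids any \emph{other} marked place --- in particular the element of $\iprecond{v}$ whose $\tbai$-image contains $q$ --- from also representing $q$. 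Your argument is repairable along either line, but as stated the refusal of $v$ does not follow from the emptiness of $q$ alone, and this is the load-bearing step rather than the two you identify.
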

\begin{proof}
  Let $N = (S, O, \varnothing, F, M_0)$.
  Assume $N$ has a left and right reachable \structuralM.
  Then there exist $t, u, v \in O$ and $p, q \in S$ such that
  $p \in \precond{t} \cap \precond{u}$,
  $q \in \precond{u} \cap \precond{v}$, $t \ne u$, $u \ne v$, $p \ne q$
  and there are reachable markings $M_1, M_2 \inp [M_0\rangle$ such that
  $\precond{t} \cup \precond{u} \subseteq M_1$ and
  $\precond{v} \cup \precond{u} \subseteq M_2$.
  We will show that $\AI_g(N) \not\approx_{\failureset} N$, regardless
  of the choice of $g$.

  The problematic transition will be $u$. Either $p >_g^u q$ or
  $q >_g^u p$. Due to symmetry we can assume the former without
  loss of generality. So $p \ne \min_g^u$.
  We know that there is some $\sigma \inp O^*$ such that
  \plat{$M_0 \Production{\sigma}_N M_1 \wedge \precond{t} \subseteq M_1$}.
  By \reflem{determ} it follows that $\fpair{\sigma, \{t\}} \not\in
  \failureset(N)$.

  By \reflem{aisim} also $M_0 \Production{\sigma}_{\AI_{g}(N)} M_1$.
  Let $p_0, \ldots, p_n \in S$ such that
  $\{s \inp \precond{u} \mid p \leq_g^u s\} = \{p_0,\ldots,p_n\}$ with
  $p_0=p$ and $p_{i-1} = (p_i - 1)_g^u$ for $2 \leq i \leq n$.
  Since $\precond{u} \subseteq M_1$ there thus exists an $M_1'$ with
  $M_1 [\{u_{p_n}\}\rangle_{\AI_{g}(N)}[\{u_{p_{n-1}}\}\rangle_{\AI_{g}(N)}
  \cdots [\{u_{p_0}\}\rangle_{\AI_{g}(N)} M_1'$.
  Note that $p_u \in M_1'$.
  By Propositions \ref{pr-aidivfree} there exists an $M_1''$ with
  $M_1' \production{\tau}^*_{\AI_{g}(N)} M_1'' \wedge M_1''
  \arrownot\production{\tau}_{\AI_{g}(N)}$, and
  \refpr{aionesafe}\refitem{ai1} yields $\aivalidmarking(M_1'')$.

  From the construction of $\AI_{g}(N)$, using that $p_u \in M_1'$, it follows
  that $\exists s \in \precond{u}. s \leq_g^u p \wedge s_u \in M''_1$.
  Moreover, $p\in \tbai(\{s_u)\}$.
  We also have  $p \in \precond{t} = \tbai(\iprecond{t})$, so
  $\exists r \in \iprecond{t}.\, p \inp \tbai(\{r\})$.
  As $\aivalidmarking(M''_1)$, we have $r \not\in M''_1$, and thus
  $\iprecond{t} \nsubseteq M''_1$.
  Therefore $\fpair{\sigma, \{t\}} \in \failureset(\AI_{g}(N))$.
  Hence $N$ is not in $\AA(B)$.
\end{proof}

\begin{theorem}{weaklrmfreeimplesaahb}{}
  A plain net $N$ which has no left and right border
  reachable \structuralM{} is in $\AA(B)$.
\vspace{-1ex}
\end{theorem}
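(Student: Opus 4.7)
The plan is to exhibit a priority relation $g$ for which $\AI_g(N) \approx_{\failureset} N$. Since $\failureset(N) \subseteq \failureset(\AI_g(N))$ by \refpr{aiorigfailsub}, only the reverse containment requires proof. The crucial observation underlying the choice of $g$ is that, under the hypothesis, for each $u \inp O$ with $|\precond{u}| > 1$ the set
\[
P_u := \{p \inp \precond{u} \mid \exists t \ne u.\, p \inp \precond{t} \wedge \exists M \inp [M_0\rangle_N.\, \precond{t} \subseteq M\}
\]
is either empty or a singleton, since two distinct places in $P_u$, together with their witnessing transitions, would exhibit a left and right border reachable \structuralM{} centred on $u$. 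Accordingly, I define $g$ so that $\min_g^u$ is the unique element of $P_u$ whenever $P_u \ne \varnothing$, and extend each $\leq_g^u$ to an arbitrary total order on $\precond{u}$.

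Given $\fpair{\sigma, X} \inp \failureset(\AI_g(N))$ witnessed by $M^*$, \reflem{aiworking}\refitem{aiimpllongstep} delivers $M_0 \Production{\sigma}_N M$ for $M := \tbai(M^*)$, together with $\aivalidmarking(M^*)$. Since $N$ is plain, it suffices to show $\precond{t} \not\subseteq M$ for every $t \inp X$, whereupon $\fpair{\sigma, X} \inp \failureset(N)$ with witness $M$. Fix $t \inp X$ and assume $\precond{t} \subseteq M$ for contradiction. A preliminary fact is that $\aivalidmarking(M^*)$ prevents two distinct buffers $r_t, s_t$ of $t$ from simultaneously lying in $M^*$, since $\tbai(\{r_t\}) \cap \tbai(\{s_t\})$ would then contain the $\leq_g^t$-maximal element of $\precond{t}$.

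The core of the argument identifies, in each configuration of $t$'s chain, a distinguished place $p \inp \precond{t}$ whose membership in $M^*$ would enable either a silent transition (contradicting $M^* \arrownot\production{\tau}$) or $t$ itself (contradicting $t \inp X$). If no buffer of $t$ lies in $M^*$, take $p := s_{\max}$ (the $\leq_g^t$-maximal element of $\precond{t}$) when $|\precond{t}|>1$, or $p := \min_g^t$ when $|\precond{t}|=1$; the first silent $t_{s_{\max}}$, respectively $t$, then has sole preplace $p$. If a buffer $s_t \inp M^*$ exists, the next step of $t$'s chain is the silent $t_{s'}$ with $s' := (s-1)_g^t$ when $s' \ne \min_g^t$, or $t$ itself when $s' = \min_g^t$; since $s_t$ is already present, the only remaining preplace is $p := s'$, respectively $p := \min_g^t$. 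In every sub-case $p \inp \precond{t} \subseteq M$, so either $p \inp M^*$, yielding the enabling above, or $p$ arises via $\tbai$ from some buffer $r_v \inp M^*$ with $p, r \inp \precond{v}$ and $r \leq_g^v p$. The single-buffer observation rules out $v = t$, hence $p \inp \precond{t} \cap \precond{v}$ with $\precond{t}$ reachable-enabled via $M$ places $p$ into $P_v$; by the construction of $g$ this forces $\min_g^v = p$, but then $r \leq_g^v p = \min_g^v$ combined with $r \ne \min_g^v$ (from the definition of $S^\tau$) is absurd.

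The main obstacle is the last step: correctly pinpointing the distinguished preplace $p$ in each chain configuration, and then exploiting the alignment of $\min_g^v$ with $P_v$ to rule out a $\tbai$-derivation of $p$ through a competing chain of some $v \ne t$. Everything else reduces to combining \reflem{aiworking} with the already-established invariant $\aivalidmarking$.
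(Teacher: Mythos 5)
Your proposal is correct and takes essentially the same route as the paper's proof: the same choice of $g$ (making the unique conflict preplace of each transition its $\leq_g^u$-minimum, justified by the absence of a left and right border reachable \structuralM), the same reduction of the problem to $\failureset(\AI_g(N))\subseteq\failureset(N)$ via \refpr{aiorigfailsub} and \reflem{aiworking}, and the same final contradiction between $r \leq_g^v p = \min_g^v$ and $r \ne \min_g^v$ for a buffer $r_v \in S^\tau$. Your explicit case analysis on the configuration of $t$'s chain merely spells out the step the paper compresses into ``then $\iprecond{t} \subseteq M_1$, using $M_1 \arrownot\production{\tau}$'', so it is, if anything, slightly more detailed than the original.
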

\begin{proof}
  Let $N = (S, O, \varnothing, F, M_0)$.
  Given a transition $u\inp O$, we say that a \emph{$u$-conflict
  occurs} in a preplace $p \inp \precond{u}$ when
  $\exists t \inp \postcond{p}. t \ne u \wedge
      (\exists M_1 \inp [M_0\rangle_N. \precond{t} \subseteq M_1))$.

  Assume $N$ has no left and right border reachable \structuralM.
  This means that every $u \in O$ has at most one preplace where an
  $u$-conflict occurs.
  Now choose $g \subseteq (S \times O) \times (S \times O)$ such that
  for all $u \inp O$, $\min_g^u$ is that single place, if it exists.
  Let $\AI_{g}(N) = (S \cup S^\tau, O, U', F', M_0)$.

  We prove that $\failureset(N) = \failureset(\AI_{g}(N))$.
  From \refpr{aiorigfailsub} we already have that
  $\failureset(N) \subseteq \failureset(\AI_{g}(N))$.
  Therefore consider a failure pair $\fpair{\sigma, X} \in \failureset(\AI_{g}(N))$.
  We need to show that $\fpair{\sigma, X} \in \failureset(N)$.

  There exists some $M_1 \subseteq S \cup S^\tau$ with \plat{$M_0
  \Production{\sigma}_{\AI_{g}(N)} M_1 \wedge M_1 \arrownot\production{\tau}
  \wedge \forall t \inp X. M_1 \arrownot\production{\{t\}}$}.\vspace{2pt}
  By \reflem{aiworking} $M_0 \Production{\sigma}_N \tbai(M_1)$.
  Now take any $t \in X$. Assume \plat{$\tbai(M_1) \production{\{t\}}_N$}.
  Then $\iprecond{t} \nsubseteq M_1$ but $\precond{t} \subseteq \tbai(M_1)$.
  By construction of $\tbai$ we have $\forall p \inp \precond{t}.\,
  p \inp M_1 \vee \exists u\inp O.\, p \inp \precond{u} \wedge \exists
  s_u \inp M_1 \cap S^\tau. s \leq_g^u p$.
  \\
  Now suppose we had $\forall p \inp \precond{t}.\,
  p \inp M_1 \vee \exists s_t \inp M_1 \cap S^\tau. s \leq_g^t p$.
  Then $\iprecond{t} \subseteq M_1$, using \plat{$M_1 \arrownot\production{\tau}$}.
  $\lightning$\linebreak
  Hence $\exists p \inp \precond{t}\,\exists u\inp O.\, u\ne t \wedge
  p \inp \precond{u} \wedge \exists s_u \inp M_1 \cap S^\tau. s \leq_g^u p$.

  But then $t \inp \postcond{p} \wedge t \ne u \wedge
  \tbai(M_1) \in [M_0\rangle_N \wedge \precond{t} \subseteq
  \tbai(M_1)$, so a $u$-conflict occurs in $p \inp \precond{u}$.
  Yet $\exists s \leq_g^u p.\, s_u\inp S^\tau$ implies that $s \ne
  \min_g^u$ and hence $p \ne \min_g^u$, by the construction of $\AI_{g}(N)$.
  This however contradicts our construction for $g$
  given above. Hence \plat{$\tbai(M_1) \arrownot\production{\{t\}}_N$}.
  Applying this argument for all $t \inp X$ yields
  $\fpair{\sigma, X} \inp \failureset(N)$ and thereby
  $\failureset(\AI_{g}(N)) \subseteq \failureset(N)$.
  Thus $N \in \AA(B)$.
\end{proof}
\figureaahbnolrm
\end{ICE2008-TR}

\reffig{counterexampleaahb_nolrm} shows two nets, each with
a left and right border reachable {\structuralM} but no left and right
reachable \structuralM, that thus fall in the grey area between our
structural upper and lower bounds for the class $\AA(B)$. In this case
the first net falls outside $\AA(B)$, whereas the second net falls
inside.  The crucial difference between these two examples is the
information available to $u$ about the execution of $y$.

There exists an implementation for the right net, namely by $u$ taking
the tokens from $r$, $q$ and $s$ in that order.  The first token (from
$r$) conveys the information that $y$ was executed, and thus $t$ is
not enabled.  Collecting the last token (from $s$) could fail, due to
$v$ removing it earlier.  Even so, removing the tokens from $r$ and
$q$ did not disable any transition that could fire in the original
net.  In the left net such an implementation will not work.

The following proposition says that our class of symmetrically
asynchronous nets strictly extends the corresponding class of
asymmetrically asynchronous nets.

\begin{proposition}{sabltaab}{
  $\SA(B) \subsetneq \AA(B)$.}
\vspace{-1.5em}
\end{proposition}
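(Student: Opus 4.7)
The plan is to split the proof into the two standard parts: inclusion and strictness.

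For the inclusion $\SA(B) \subseteq \AA(B)$, I would argue purely at the level of the semi-structural characterisations already established. Take any $N \in \SA(B)$. By \refthm{rnfreeequalssa}, $N$ has no partially reachable \structuralN. I claim this implies $N$ has no left and right border reachable \structuralM. Indeed, suppose $N$ did have such an \structuralM, witnessed by transitions $t,u,v$ and places $p \in \precond{t}\cap\precond{u}$, $q \in \precond{u}\cap\precond{v}$ with $t\ne u$, $u\ne v$, $p\ne q$ and reachable markings $M_1,M_2$ with $\precond{t}\subseteq M_1$ and $\precond{v}\subseteq M_2$. Then $t,u \inp O$ with $t\ne u$, $\precond{t}\cap\precond{u}\ni p$ is nonempty, and $|\precond{u}|>1$ since $\{p,q\}\subseteq\precond{u}$; together with $\precond{t}\subseteq M_1 \in [M_0\rangle$, this is exactly a partially reachable \structuralN{} (with the roles of $t$ and $u$ in \refdf{reachablen} played by $t$ and $u$), a contradiction. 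Now \refthm{weaklrmfreeimplesaahb} yields $N \in \AA(B)$.

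For strictness, I would exhibit the second net $N$ displayed in \reffig{counterexamplenolrm_aahb}. That figure already asserts $N \inp \AA(B)$ (this follows from \refthm{weaklrmfreeimplesaahb} because the net has no left and right border reachable \structuralM, as the discussion immediately following the figure explains). On the other hand, $N$ has a partially reachable \structuralN: the transitions $t$ and $u$ share the preplace $q$, we have $|\precond{u}|=3>1$, and firing $x$ from the initial marking reaches a marking containing $\precond{t}=\{p,q\}$. Hence by \refthm{rnfreeequalssa}, $N \notin \SA(B)$.

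There is no substantial obstacle here: the inclusion is a one-line combinatorial observation that any \structuralM{} structure of the kind required for the upper-bound theorem automatically contains an \structuralN{} structure of the kind that forbids symmetric asynchrony, and the strictness witness is available off the shelf from the figure. The only thing to be careful about is to invoke the right form of the \structuralM-condition (the \emph{border} reachable version, which is the hypothesis of the lower-bound \refthm{weaklrmfreeimplesaahb}), rather than the stronger left-and-right reachable version used in the upper bound.
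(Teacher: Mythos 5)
Your treatment of the inclusion $\SA(B) \subseteq \AA(B)$ is correct and is exactly the paper's argument: any left and right border reachable \structuralM{} (witnessed by $t,u,v,p,q,M_1,M_2$) already contains a partially reachable \structuralN{} via $t$, $u$, the common preplace $p$, the fact that $|\precond{u}|\geq|\{p,q\}|=2$, and the marking $M_1\supseteq\precond{t}$; so \refthm{rnfreeequalssa} and \refthm{weaklrmfreeimplesaahb} combine to give the inclusion.

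Your strictness witness, however, does not work as justified. The second net of \reffig{counterexamplenolrm_aahb} \emph{does} have a left and right border reachable \structuralM{} --- that is the entire point of that figure, whose caption reads ``Nets which \emph{have} a left and right border reachable \structuralM, but no left and right reachable \structuralM''. Concretely, take the transitions $t,u,v$ as labelled, $q\in\precond{t}\cap\precond{u}$, $s\in\precond{u}\cap\precond{v}$ with $q\neq s$, and the reachable markings $\{p,q\}\supseteq\precond{t}$ (after firing $x$) and $\{q,r,s\}\supseteq\precond{v}$ (after firing $y$). Hence \refthm{weaklrmfreeimplesaahb} does \emph{not} apply to this net; it sits in the gap between the paper's structural upper and lower bounds, and the paper's claim that it nevertheless lies in $\AA(B)$ is only argued informally (the discussion of $u$ collecting tokens from $r$, $q$, $s$ in that order), not derived from either theorem. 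So the step ``$N\in\AA(B)$'' in your argument is unsupported. The repair is cheap and is what the paper does: use the net $N$ of \reffig{si-deadlock} instead. That net is not in $\SA(B)$, since its two transitions $a$ and $b$ share a marked preplace and $|\precond{b}|=2$, giving a partially reachable \structuralN; and it has no \structuralM{} whatsoever (with only two transitions one cannot choose $t\neq u$, $u\neq v$ and two \emph{distinct} places $p\in\precond{t}\cap\precond{u}$, $q\in\precond{u}\cap\precond{v}$), so in particular it has no left and right border reachable \structuralM{} and \refthm{weaklrmfreeimplesaahb} places it in $\AA(B)$.
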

\begin{proof}
  A net which has no partially reachable \structuralN{} also has no left or right
  border reachable \structuralM.
  The inequality follows from the example in \reffig{si-deadlock}.
\vspace{-1ex}
\end{proof}

As before, our class $\AA(B)$ is related to some known net classes
\cite{best83freesimple}.

\begin{definition}{simpleetc}{
  Let $N = (S, O, \varnothing, F, M_0)$ be a plain net.
  }
  \begin{enumerate}
    \item
      $N$ is \defitem{simple}, $N \in \SPL$, iff $\forall p, q \inp S. p\neq q \wedge
      \postcond{p} \cap \postcond{q} \ne \varnothing \implies |\postcond{p}| =
      1 \vee |\postcond{q}| = 1$.
    \item
      $N$ is \defitem{extended simple}, $N \in \ESPL$, iff $\forall
      p, q\inp S. \postcond{p} \cap \postcond{q} \ne \varnothing \implies
      \postcond{p} \subseteq \postcond{q} \vee \postcond{q} \subseteq
      \postcond{p}$.
  \end{enumerate}
\end{definition}

Extended simple nets appear in \cite{bes87} under the name
\defitem{asymmetric choice systems}.
Note that simple is equivalent to $\structuralM$-free, where
$\structuralM$ is as in \refdf{reachablem} but without the reachability clauses.
Clearly, we have $\FC \subsetneq \SPL \subsetneq \ESPL$ and $\EFC
\subsetneq \ESPL$, whereas $\EFC \nsubseteq \SPL$ and $\SPL \nsubseteq \EFC$:
the inclusions follow immediately from the definitions, and
the first two nets of \reffig{counterexamplesal_bfc} provide
counterexamples to the inequalities.

The class of asymmetrically asynchronous nets respecting branching
time equivalence strictly extends the class of simple nets, whereas it
is incomparable with the class of extended simple nets.

\begin{ICE2008}
\begin{figure}[b]
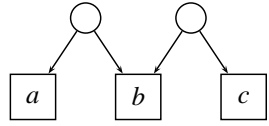

  \begin{center}
    \begin{petrinet}(6,3)
      \p (2,2.5):p1;
      \p (4,2.5):p2;
      \t (1,1):t1:a;
      \t (3,1):t2:b;
      \t (5,1):t3:c;
      \a p1->t1;
      \a p1->t2;
      \a p2->t2;
      \a p2->t3;
    \end{petrinet}
  \end{center}
\vspace{-1em}
  \caption{$N \in \AA(B)$, $N \notin \ESPL$}
  \label{fig-counterexampleespl_aahb}
\end{figure}
\end{ICE2008}
\begin{ICE2008-TR}
\begin{figure}[t]
  \begin{center}
    \begin{petrinet}(6,3)
      \p (2,2.5):p1;
      \p (4,2.5):p2;
      \t (1,1):t1:a;
      \t (3,1):t2:b;
      \t (5,1):t3:c;
      \a p1->t1;
      \a p1->t2;
      \a p2->t2;
      \a p2->t3;
    \end{petrinet}
  \end{center}
\vspace{-1.5em}
  \caption{$N \in \AA(B)$, $N \notin \ESPL$}
  \label{fig-counterexampleespl_aahb}
\vspace{-1.5em}
\end{figure}
\end{ICE2008-TR}

\begin{proposition}{aahbincompespl}{$\SPL \subsetneq \AA(B)$, $\AA(B)
    \nsubseteq \ESPL$ and $\ESPL \nsubseteq \AA(B)$.}
\vspace{-1.8em}
\end{proposition}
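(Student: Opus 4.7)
The plan is to leverage \refthm{aahbimplieslrmfree}, which gives both a structural lower and a structural upper bound for the class $\AA(B)$, and to reuse nets already displayed earlier in the paper as separating examples.

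For the inclusion $\SPL \subseteq \AA(B)$, I would first note that simplicity is precisely $\structuralM$-freeness (as observed just after \refdf{simpleetc}), so a simple net certainly admits no left and right border reachable $\structuralM$; the ``sufficient condition'' half of \refthm{aahbimplieslrmfree} then places it in $\AA(B)$. The strictness of this inclusion, and at the same time the separation $\AA(B) \nsubseteq \ESPL$, I would establish in one stroke using the net in \reffig{counterexampleespl_aahb}: its two places satisfy $\postcond{p_1}=\{a,b\}$ and $\postcond{p_2}=\{b,c\}$, so neither postset contains the other and $N \notin \ESPL \supseteq \SPL$; at the same time both places start empty, hence the only reachable marking is $\varnothing$, which trivially exhibits no left and right border reachable $\structuralM$, so $N \in \AA(B)$ by the same theorem.

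For the remaining separation $\ESPL \nsubseteq \AA(B)$ I would point to the middle net of \reffig{counterexampleefc_sab}: both $p_1$ and $p_2$ are initially marked and $\postcond{p_1}=\postcond{p_2}=\{a,b\}$, hence the net lies in $\EFC \subseteq \ESPL$. Taking $t=v=a$, $u=b$, $p=p_1$ and $q=p_2$ then witnesses a left and right reachable $\structuralM$ already in $M_0$, and the ``necessary condition'' half of \refthm{aahbimplieslrmfree} excludes the net from $\AA(B)$.

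The only mild subtlety is that \refdf{reachablem} forbids $t=u$ and $u=v$ but is silent about $t=v$, so the choice $t=v=a$ in the last argument is legitimate provided $p\neq q$, which it is. Beyond this, no significant obstacle arises: all verifications reduce to reading postsets and the initial marking off the pictures and invoking \refthm{aahbimplieslrmfree}.
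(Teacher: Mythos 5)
Your proof is correct and follows essentially the same route as the paper: the inclusion $\SPL\subseteq\AA(B)$ via $\structuralM$-freeness and \refthm{aahbimplieslrmfree}, the unmarked net of \reffig{counterexampleespl_aahb} for both the strictness and $\AA(B)\nsubseteq\ESPL$, and the second net of \reffig{counterexamplefcsal_efc} for $\ESPL\nsubseteq\AA(B)$. The only cosmetic difference is that the paper justifies membership of the unmarked net in $\AA(B)$ by directly observing that no action is ever enabled (hence no additional implementation failures), whereas you route this through the structural sufficient condition of the theorem; both are valid, and your observation that \refdf{reachablem} permits $t=v$ is a correct and worthwhile clarification.
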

\begin{proof}
The inclusion is straightforward, and the inequalities
follow from the counterexamples in
  \reffig{counterexamplefcsal_efc} (the second one) and \reffig{counterexampleespl_aahb}.
  The missing tokens in the latter example are intended. As no action
  is possible there will not be any additional implementation failures.
\end{proof}

The relations between the classes $\SPL$, $\ESPL$ and $\AA(B)$ are
summarised in \reffig{asymmetricasynorder}.
Similarly to what we did in Section \ref{sec-sa}, we now try to translate
\reffig{asymmetricasynorder} into an intuitive description.

\begin{ICE2008}
\begin{figure}[t]
  \begin{center}
    \begin{pspicture}(5,3.2)
      \rput(2.5,3){\ovalnode*{spl}{\SPL}}
      \rput(1,1){\ovalnode*{espl}{\ESPL}}
      \rput(4,1){\ovalnode*{aahb}{AA(B)}}

      \ncline{espl}{spl}\lput*{:U}{$\supsetneq$}
      \ncline{spl}{aahb}\lput*{:U}{$\subsetneq$}
      \ncline{aahb}{espl}\lput*{:U}{$\#$}
    \end{pspicture}
  \end{center}
\vspace{-2em}
  \caption{Overview of asymmetric-choice-like net classes}
  \label{fig-asymmetricasynorder}
\end{figure}
\end{ICE2008}
\begin{ICE2008-TR}
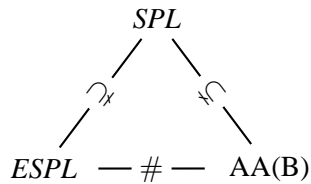
\begin{figure}[b]
  \begin{center}
    \begin{pspicture}(5,3.2)
      \rput(2.5,3){\ovalnode*{spl}{\SPL}}
      \rput(1,1){\ovalnode*{espl}{\ESPL}}
      \rput(4,1){\ovalnode*{aahb}{AA(B)}}

      \ncline{espl}{spl}\lput*{:U}{$\supsetneq$}
      \ncline{spl}{aahb}\lput*{:U}{$\subsetneq$}
      \ncline{aahb}{espl}\lput*{:U}{$\#$}
    \end{pspicture}
  \end{center}
\vspace{-2.5em}
  \caption{Overview of asymmetric-choice-like net classes}
  \label{fig-asymmetricasynorder}
\end{figure}
\end{ICE2008-TR}

The basic intuition behind $\SPL$ is that for every transition
there is only one preplace where conflict can possibly occur.
Whereas in $\SPL$ that possibility is determined by the static net structure,
in $\AA(B)$ reachability is also considered.

Similar to the difference between $FC$ and $EFC$ there exists a difference
between $\ESPL$ and $\SPL$ which originates from the fact that $\ESPL$
allows small transformations to a net before testing whether it lies in $\SPL$.
Again our class $\AA(B)$ does not allow such ``helping'' transformations.

\section{Conclusion and Related Work}\label{sec-conclusion}

We have investigated the effect of different types of asynchronous interaction,
using Petri nets as our system model. We propose three different interaction
patterns: fully asynchronous, symmetrically asynchronous and
asymmetrically asynchronous. An asynchronous implementation of a net is then
obtained by inserting silent (unobservable) transitions according to the
respective pattern. The pattern for asymmetric asynchrony is parametric in the
sense that the actual asynchronous implementation of a net depends on a chosen
priority function on the input places of a transition. For each of these cases,
we investigated for which types of nets the asynchronous implementation of a
net changes its behaviour with respect to failures equivalence (in the case of
asymmetric asynchrony, the `best' priority function may be used). It turns out
that we obtain a hierarchy of Petri net classes, where each class contains
those nets which do not change their behaviour when transformed into the
asynchronous version according to one of the interaction patterns.
This is not surprising because later constructions allow a more fine-grained
control over the interactions than earlier ones.

We did not consider connections from transitions to their postplaces
as relevant to determine asynchrony and distributability. This is
because we only discussed contact-free nets, where no synchronisation
by postplaces is necessary. In the spirit of \refdf{fsi} we could
insert $\tau$-transitions on any or all arcs from transitions to their
postplaces, and the resulting net would always be equivalent to the original.

Although we compare the behaviour of a net and its asynchronous
implementations in terms of failures equivalence, we believe that the
very same classes of nets are obtained when using any other reasonable
behavioural equivalence that respects branching time to some degree
and abstracts from silent transitions---no matter if this is an
interleaving equivalence, or one that respects causality.
We would get larger classes of nets, for example for the case of full
asynchrony including the net of \reffig{fsi-fail}, if we merely
required a net $N$ and its implementation to be equivalent under a
suitably chosen linear time equivalence.
This option is investigated in \cite{schicke08synchrony}.

The central results of the paper give semi-structural characterisations of our
semantically defined classes of nets. Moreover, we relate these classes to
well-known and well-understood structurally defined classes of nets, like free
choice nets, extended free choice nets and simple nets.   
  
To illustrate the potential interpretation of our results in other models of
distributed systems, we give an example.

\textit{Message sequence charts} (MSCs), also contained in UML 2.0 under the
name sequence diagrams, are a model for specifying
interactions between components (\textit{instances}) of a system. A
simple kind are \textit{basic message sequence charts} (BMSCs) as defined in
\cite{msc96}, where choices are not allowed. A Petri net semantics of BMCSs
with asynchronous communication and a unique sending and receiving event for each
message will yield Petri nets with unbranched places (see for instance \cite{graubmann93}).
Hence in this case the resulting Petri nets are conflict-free and therefore fully
asynchronously implementable according to \refthm{rcfreeequalsfsa}.

However in extended versions of MSCs, e.g.\ in UML 2.0 or in live sequence charts (LSCs, see \cite{harel03}), inline expressions allow to describe choices between
possible behaviours in MSCs. Consider for example the MSC given in \reffig{msc-example}
and a naive Petri net representation. The instances i1 and i2 can
either communicate or execute their local actions. Obviously, this requires
some mechanism in order to make sure that the choice is performed in a coherent
way (see e.g.\ \cite{gehrke99} for a discussion of this type of problem). In the
Petri net representation, we find a reachable \structuralN, hence with
\refthm{rnfreeequalssa} the net
does not belong to the class $\SA(B)$ of symmetrically asynchronously
implementable nets. However, the net is \structuralM-free, and thus does belong
to the class $\AA(B)$ of asymmetrically asynchronously implementable nets. By
giving priority to the collection of the message token (choosing the
appropriate function $g$ in our notion of implementation), it can be assured that
instance i2 does not make the wrong choice and gets stuck (however it is still
not clear whether the message will actually be consumed).

\begin{figure}
\hfill
    \begin{pspicture}(6,4.3)
      \rput(2,4){\rnode{obj1}{\psframebox{i1}}}
      \rput(5,4){\rnode{obj2}{\psframebox{i2}}}
      \psframe(1,0.8)(6,3.2)
      \psline[linestyle=dashed]{-}(1,2)(6,2)
      \psline{->}(2,2.6)(5,2.6)
      \rput[b](3.5,2.7){m}
      \rput(2,1.4){\rnode{actiona}{\psframebox{$a$}}}
      \rput(5,1.4){\rnode{actionb}{\psframebox{$b$}}}
      \pnode(2,0.4){end1}
      \pnode(5,0.4){end2}
      \ncline{-}{obj1}{actiona}
      \ncline{-}{obj2}{actionb}
      \ncline{-}{actiona}{end1}
      \ncline{-}{actionb}{end2}
      \psframe(1,2.7)(1.8,3.2)
      \rput(1.4,2.95){alt}
    \end{pspicture}
\hfill\hfill
    \begin{petrinet}(10,4)
      \P (3,5):p1;
      \P (7,5):p2;
      \p (5,3):p3;
      \p (3,1):p4;
      \p (7,1):p5;
      \t (1,3):a:a;
      \t (3,3):t1:\makebox[0pt]{\rm m!};
      \t (7,3):t2:\makebox[0pt]{\rm m?};
      \t (9,3):b:b;

      \a p1->a; \a p1->t1;
      \a p2->b; \a p2->t2;
      \a t1->p3; \a p3->t2;
      \a a->p4; \a t1->p4;
      \a b->p5; \a t2->p5;

      \rput(3,5.6){i1}
      \rput(7,5.6){i2}
    \end{petrinet}
\hfill\mbox{}
  \caption{An MSC and a potential implementation as Petri net, which has an \structuralN.}
  \label{fig-msc-example}
\end{figure}
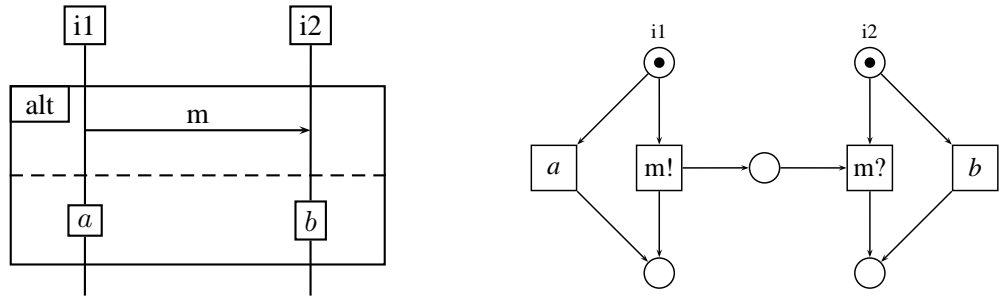

The obvious question is whether the naive Petri net interpretation we
have given is conform with the intended semantics of the
\textit{alt}-construct (according to the informal UML semantics the
alternatives always have to be executed completely; in LSCs it is
specified explicitly whether messages are assured to arrive). However,
on basis of a maybe more elaborate Petri nets semantics, it could be
discussed what types of MSCs can be used to describe physically
distributed systems, in particular which type of construct for choices
is reasonable in this case.

Another model of reactive systems where we can transfer our results to
are process algebras. When giving Petri net semantics to process
algebras, it is an interesting question to investigate which classes
of nets in our classification are obtained for certain types of
operators or restricted languages, and to compare the results with
results on language hierarchies (as summarised below).

We now give an overview on related work. A more extensive discussion
is contained in \cite{schicke08synchrony}.
We start by commenting on related work in Petri net theory. 

The structural net classes we compare our constructions to were all taken from
\cite{best83freesimple}, where Eike Best and Mike Shields introduce various
transformations between free choice nets, simple nets and extended variants
thereof. They use ``essential equivalence'' to compare the behaviour of
different nets, which they only give informally. This equivalence is
insensitive to divergence, which is also relied upon in their transformations.
As observed in Footnote~\ref{BFC}, it also does not preserve concurrency.
They continue to show conditions under which liveness can be
guaranteed for some of the classes.

In \cite{aalst98beyond}, Wil van der Aalst, Ekkart Kindler and J\"org Desel
introduce two extensions to extended simple nets, by allowing self-loops to
ignore the discipline imposed by the \ESPL-requirement.
This however assumes a kind of ``atomicity'' of
self-loops, which we did not allow in this paper. In particular we do not
implicitly assume that a transition will not change the state of a place it
is connected to by a self-loop, since in case of deadlock, the temporary
removal of a token from such a place might not be temporary indeed.

In \cite{reisig82buffersync} Wolfgang Reisig introduces a class of systems
which communicate using buffers and where the relative speeds of different
components are guaranteed to be irrelevant. The resulting nets are
simple nets. He then proceeds introducing a decision procedure for the problem
whether a marking exists which makes the complete system live.

The most similar work to our approach we have found is
\cite{hopkins91distnets}, where Richard Hopkins introduces the concept of
\emph{distributable} Petri Nets. These are defined in terms of
\emph{locality functions}, which assign to every transition $t$ a set of
possible machines or locations $L(t)$ on which $t$ may be executed,
subject to the restriction that a set of transitions with a common
preplace must share a common machine. A plain net $N$ is distributable
iff for every locality function $L$ that can be imposed on it, it has
a ``distributed implementation'', a $\tau$-net $N'$ with the same set
of visible transitions, in which each transition is assigned a
specific location, subject to three restrictions:\vspace{-6pt}
\begin{ICE2008-TR}\vspace{-2pt}\end{ICE2008-TR}\begin{itemize}
\item the location of a visible transition $t$ is chosen from $L(t)$,
\vspace{-3pt}\begin{ICE2008-TR}\vspace{-4pt}\end{ICE2008-TR}
\item transitions with a common preplace must have the same location
\vspace{-3pt}\begin{ICE2008-TR}\vspace{-4pt}\end{ICE2008-TR}
\item and there exists a weak bisimulation between $N$ and $N'$, such
that all $\tau$-transitions involved in simulating a transition $t$
from $N$ reside on one of the locations $L(t)$.
\vspace{-3pt}\begin{ICE2008-TR}\vspace{-4pt}\end{ICE2008-TR}
\end{itemize}
The last clause enforces both a behavioural correspondence between $N$
and $N'$ and a structural one (through the requirement on locations).
Thus, as in our work, the implementation is a $\tau$-net that is
required to be behaviourally equivalent to the original net.  However,
whereas we enforce particular implementations of an original net,
Hopkins allows implementations which are quite elaborate and make
informed decisions based upon global knowledge of the net.
Consequently, his class of distributable nets is larger than our
asynchronous net classes.  As Hopkins notes, due to his use of
interleaving semantics, his distributed implementations do not always
display the same concurrent behaviour as the original nets, namely
they add concurrency in some cases. This does not happen in our
asynchronous implementations.

Another branch of related work is in the context of distributed
algorithms. In \cite{bouge88symmetricleader} Luc Boug\'e considers the
problem of implementing symmetric leader election in the sublanguages
of CSP obtained by either allowing all guards, only input guards or no
communication guards at all in guarded choice. He finds that the
possibility of implementing it depends heavily on the structure of the
communication graphs, while truly symmetric schemes are only possible
in CSP with input and output guards.

Quite a number of papers consider the question of synchronous versus
asynchronous interaction in the realm of process algebras and the
$\pi$-calculus.
In \cite{boer91embedding} Frank de Boer and Catuscia Palamidessi consider
various dialects of CSP with differing degrees of asynchrony. In particular,
they consider CSP without output guards and CSP without any communication based
guards. They also consider explicitly asynchronous variants of CSP
where output actions cannot block, i.e.\ asynchronous sending is assumed.
Similar work is done for the $\pi$-calculus
in \cite{palamidessi97comparing} by Catuscia Palamidessi,
in \cite{nestmann00what} by Uwe Nestmann and
in \cite{G:FoSSaCS06} by Dianele Gorla.
A rich hierarchy of asynchronous $\pi$-calculi has been mapped
out in these papers. Again mixed-choice, i.e. the ability to combine
input and output guards in a single choice,
plays a central role in the
implementation of truly synchronous behaviour.
It would be interesting to explore the possible connections between
these languages and our net classes.

In \cite{selinger97firstorder}, Peter Selinger considers labelled
transition systems whose visible actions are partitioned into input and
output actions. He defines asynchronous implementations of such a
system by composing it with in- and output queues, and then
characterises the systems that are behaviourally equivalent to their
asynchronous implementations. The main difference with our approach is
that we focus on asynchrony within a system, whereas Selinger focusses
on the asynchronous nature of the communications of a system with the
outside world.

Finally, there are approaches on hardware design where asynchronous
interaction is an intriguing feature due to performance
issues. For this, see the papers \cite{lamport78ordering} and
\cite{lamport02arbitration} by Leslie Lamport. In
\cite{lamport02arbitration} he considers arbitration in
hardware and outlines various arbitration-free ``wait/signal'' registers.
He notes that nondeterminism is thought to require arbitration, but no proof
is known. He concludes that only marked graphs can be implemented using these
registers. Lamport then introduces ``Or-Waiting'', i.e.\ waiting for
any of two signals, but has no model available to characterise the resulting
processes. The used communication primitives bear a striking similarity to our
symmetrically asynchronous nets.

\end{document}